\newtheorem{theorem}{Theorem}[section]
\newtheorem{lemma}[theorem]{Lemma}
\newtheorem{claim}[theorem]{Claim}
\newtheorem{corollary}[theorem]{Corollary}
\newtheorem{definition}[theorem]{Definition}
\newtheorem{fact}[theorem]{Fact}
\newtheorem{assumption}[theorem]{Assumption}
\newtheorem{observation}[theorem]{Observation}
\newtheorem{remark}[theorem]{Remark}
\newcommand{\ignore}[1]{}
\newcommand{\cD}{\mathcal{D}}
\newcommand{\cE}{{\cal E}}
\newcommand{\cM}{{\cal M}}
\newcommand{\cS}{\mathcal{S}}
\newcommand{\cU}{{\cal U}}
\newcommand{\HH}{\mathbb H}
\newcommand{\Z}{{\mathbb Z}}
\newcommand{\eps}{\varepsilon}
\newcommand{\poly}{\mathrm{poly}}
\newcommand{\polylog}{\mathrm{polylog}}
\newcommand{\calE}{{\cal E}}
\newcommand{\be}{\mathbf{e}}
\newcommand{\bs}{\mathbf{s}}
\newcommand{\bt}{\mathbf{t}}
\newcommand{\bw}{\mathbf{w}}
\newcommand{\bx}{\mathbf{x}}
\newcommand{\by}{\mathbf{y}}
\newcommand{\bz}{\mathbf{z}}
\newcommand{\bA}{\boldsymbol{A}}
\newcommand{\bB}{\boldsymbol{B}}
\newcommand{\bH}{\boldsymbol{H}}
\newcommand{\bT}{\boldsymbol{T}}
\newcommand{\bL}{\boldsymbol{L}}
\newcommand{\bR}{\boldsymbol{R}}
\newcommand{\bX}{\boldsymbol{X}}
\newcommand{\bY}{\boldsymbol{Y}}
\newcommand{\RR}{\mathbb{R}}
\newcommand{\norm}[1]{\left\lVert #1 \right\rVert}
\newcommand{\ceil}[1]{\lceil#1\rceil}
\newcommand{\Exp}{\EX}
\newcommand{\floor}[1]{\lfloor#1\rfloor}
\newcommand{\EX}{\mathbb{E}}
\newcommand{\supp}{\mathrm{supp}}
\newcommand{\eqdef}{:=}
\newcommand{\otilde}{\widetilde{O}}
\newcommand{\hyp}[1]{\{0,1\}^{#1}}
\newcommand{\dtal}[1]{\mathrm{Tal}(#1)}
\newcommand{\updist}[2]{\cU_{#2}(#1)}
\newcommand{\downdist}[2]{\cD_{#2}(#1)}
\newcommand{\upshift}[2]{\cU\cS_{#2}(#1)}
\newcommand{\downshift}[2]{\cD\cS_{#2}(#1)}
\newcommand{\Sec}[1]{\hyperref[sec:#1]{\S\ref*{sec:#1}}} 
\newcommand{\Eqn}[1]{\hyperref[eq:#1]{(\ref*{eq:#1})}} 
\newcommand{\Fig}[1]{\hyperref[fig:#1]{Fig.\,\ref*{fig:#1}}} 
\newcommand{\Tab}[1]{\hyperref[tab:#1]{Tab.\,\ref*{tab:#1}}} 
\newcommand{\Thm}[1]{\hyperref[thm:#1]{Theorem\,\ref*{thm:#1}}} 
\newcommand{\Fact}[1]{\hyperref[fact:#1]{Fact\,\ref*{fact:#1}}} 
\newcommand{\Lem}[1]{\hyperref[lem:#1]{Lemma\,\ref*{lem:#1}}} 
\newcommand{\Prop}[1]{\hyperref[prop:#1]{Prop.~\ref*{prop:#1}}} 
\newcommand{\Cor}[1]{\hyperref[cor:#1]{Corollary~\ref*{cor:#1}}} 
\newcommand{\Conj}[1]{\hyperref[conj:#1]{Conjecture~\ref*{conj:#1}}} 
\newcommand{\Def}[1]{\hyperref[def:#1]{Definition~\ref*{def:#1}}} 
\newcommand{\Alg}[1]{\hyperref[alg:#1]{Alg.~\ref*{alg:#1}}} 
\newcommand{\Ex}[1]{\hyperref[ex:#1]{Ex.~\ref*{ex:#1}}} 
\newcommand{\Clm}[1]{\hyperref[clm:#1]{Claim~\ref*{clm:#1}}} 
\newcommand{\Step}[1]{\hyperref[step:#1]{Step~\ref*{step:#1}}} 
\newcommand{\Obs}[1]{\hyperref[observation:#1]{Observation~\ref*{observation:#1}}} 
\renewcommand{\deg}{\mathsf{deg}}
\newcommand{\cper}{C_{per}}
\def\mzb{\mathsf{mzb}}
\newcommand{\pathtester}{\texttt{path-tester}}
\newcommand{\linetester}{\texttt{line-tester}}
\newcommand{\flow}{\mathsf{flow}}
\begin{document}

\title{A $d^{1/2+o(1)}$ Monotonicity Tester for Boolean Functions on $d$-Dimensional Hypergrids\footnote{An initial version of this manuscript appeared at FOCS 2023.}}

\author{Hadley Black\\
University of California, San Diego\\
{\tt hablack@ucsd.edu}
\and
Deeparnab  Chakrabarty\thanks{Supported by NSF-CAREER award CCF-2041920 and CCF-2402571} \\
Dartmouth\\
{\tt deeparnab@dartmouth.edu}
\and
C. Seshadhri\thanks{Supported by NSF DMS-2023495, CCF-1740850, 1839317, 1813165, 1908384, 1909790, 2402572} \\
University of California, Santa Cruz\\
{\tt sesh@ucsc.edu}
}

\maketitle
\date{}

\begin{abstract}
Monotonicity testing of Boolean functions on the hypergrid, $f:[n]^d \to \{0,1\}$, is a classic topic in property testing. Determining the non-adaptive complexity of this problem is an important open question. For arbitrary $n$, [Black-Chakrabarty-Seshadhri, SODA 2020] describe a tester with query complexity $\widetilde{O}(\varepsilon^{-4/3}d^{5/6})$. This complexity is independent of $n$, but has a suboptimal dependence on $d$. Recently, [Braverman-Khot-Kindler-Minzer, ITCS 2023] and [Black-Chakrabarty-Seshadhri, STOC 2023] describe $\widetilde{O}(\varepsilon^{-2} n^3\sqrt{d})$ and $\widetilde{O}(\varepsilon^{-2} n\sqrt{d})$-query testers, respectively. These testers have an almost optimal dependence on $d$, but a suboptimal polynomial dependence on $n$. \smallskip

In this paper, we describe a non-adaptive, one-sided monotonicity tester with query complexity $O(\varepsilon^{-2} d^{1/2 + o(1)})$, \emph{independent} of $n$. Up to the $d^{o(1)}$-factors, our result resolves the non-adaptive complexity of monotonicity testing for Boolean functions on hypergrids. The independence of $n$ yields a non-adaptive, one-sided $O(\varepsilon^{-2} d^{1/2 + o(1)})$-query monotonicity tester for Boolean functions $f:\mathbb{R}^d \to \{0,1\}$ associated with an arbitrary product measure.
\end{abstract}
\thispagestyle{empty}

\newpage
\tableofcontents
\thispagestyle{empty}
\setcounter{page}{0}
\newpage
\section{Introduction} 

Since its introduction more than two decades ago, the problem of monotonicity testing has attracted an immense amount of attention.
In this paper, we focus on the question of monotonicity testing of Boolean functions $f:[n]^d\to \{0,1\}$ over the $d$-dimensional hypergrid.
The problem was introduced in the seminal paper of Goldreich, Goldwasser, Lehman, Ron, and Samorodnitsky~\cite{GGLRS00} and early results were achieved by Raskhodnikova~\cite{Ras99} and Dodis, Goldreich, Lehman, Raskhodnikova, Ron, and Samorodnitsky ~\cite{DGLRRS99}. (See~\Sec{related} for more details.)
%

Each element
$\bx \in [n]^d$ is represented as a $d$-dimensional vector with $\bx_i \in [n]$ denoting the $i$th coordinate.
The partial order of the hypergrid is defined as: $\bx \preceq \by$ iff $\bx_i \leq \by_i$ for all $i\in [d]$. When $n=2$, the hypergrid $[n]^d$ is isomorphic to the
hypercube $\{0,1\}^d$. 
A Boolean hypergrid function $f:[n]^d \to \{0,1\}$ is monotone if $f(\bx) \leq f(\by)$ whenever $\bx \preceq \by$.
The distance between two functions $f$ and $g$, denoted $\Delta(f,g)$,
is the fraction of points where they differ. 
A function $f:[n]^d \to \{0,1\}$ is called $\eps$-far
from monotone if $\Delta(f,g) \geq \eps$ for all monotone functions $g:[n]^d\to \{0,1\}$.
Given a proximity parameter $\eps$ and query access to a function, a \emph{monotonicity tester} is a randomized algorithm which accepts
a monotone function and rejects a function that is $\eps$-far
from monotone, each with probability $\geq 2/3$.
If the tester accepts monotone functions with probability $1$, it is said to have one-sided error or simply called \emph{one-sided}.
If the tester decides its queries without seeing any responses, it is called \emph{non-adaptive}.

An outstanding open question in property testing is to determine the optimal non-adaptive query
complexity of monotonicity testing for Boolean hypergrid functions. Here we mention the current best bounds and refer the reader to \Sec{related} for a more extensive background. 
Black, Chakrabarty, and Seshadhri~\cite{BlackCS18,BlackCS20} give a $\otilde(\eps^{-4/3}d^{5/6})$-query tester.
Note that the query complexity is independent of $n$. 
Building on seminal work of Khot, Minzer, and Safra~\cite{KMS15}, Braverman, Khot, Kindler, and Minzer~\cite{BrKh+23} 
and Black, Chakrabarty, and Seshadhri~\cite{BlChSe23} recently give $\otilde(\eps^{-2}n^3\sqrt{d})$ and $\otilde(\eps^{-2}n\sqrt{d})$ testers, respectively.
Chen, Waingarten, and Xie~\cite{Chen17} give an $\widetilde{\Omega}(\sqrt{d})$ lower bound for non-adaptive Boolean monotonicity testing on hypercubes ($n=2$). Hence,
these last bounds are nearly optimal in $d$, but are sub-optimal in $n$. Can one achieve the optimal $\sqrt{d}$ dependence while being independent of $n$?
We answer in the affirmative, giving a non-adaptive, one-sided monotonicity tester for Boolean functions over hypergrids with almost optimal query complexity.

\begin{mdframed}[backgroundcolor=gray!20,topline=false,bottomline=false,leftline=false,rightline=false,innertopmargin=-5pt] 
\begin{theorem}\label{thm:mono-testing}~
    Consider Boolean functions over the hypergrid, $f:[n]^d \to \{0,1\}$.
    There is a one-sided, non-adaptive tester for monotonicity that makes $\eps^{-2} d^{1/2 + O(1/\log\log d)}$ queries. 
\end{theorem}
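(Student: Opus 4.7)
The plan is to merge two complementary strands of prior techniques. The Khot--Minzer--Safra style directed-isoperimetric machinery, as extended to hypergrids by \cite{BrKh+23,BlChSe23}, achieves the near-optimal $\sqrt{d}$ dependence but pays $\poly(n)$. The $n$-independent ``pair tester'' approach of \cite{BlackCS18,BlackCS20} avoids any $n$ dependence but only gives $d^{5/6}$. The goal is to design a single pair distribution on comparable pairs in $[n]^d$ whose support structure has no explicit $n$-dependence and which simultaneously satisfies a near-optimal directed isoperimetric inequality.

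First, I would set up the pair distribution. Along each axis-parallel line the tester samples a ``dyadic pair'' at a uniformly random scale $j \in \{0,1,\ldots,\lceil\log n\rceil\}$ by picking a uniformly random dyadic interval of length $2^j$ and outputting its endpoints. A query of the full tester is built by sampling a ``base'' point and replacing its values on $O(\log d)$ random coordinates by the endpoints of independent dyadic pairs. The resulting distribution has the property that its restriction to any single axis-parallel direction looks like an edge of a Boolean hypercube (after conditioning on the scale), while its marginal structure across coordinates matches what is needed for the KMS-style analysis.

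The core technical step is to establish a hypergrid directed isoperimetric inequality for this distribution: if $f$ is $\eps$-far from monotone, then a random dyadic pair is violating with probability $\Omega(\eps^2 / d^{1/2+o(1)})$. I expect to prove this by recursion on the number of ``active'' scales. At the coarsest scale one would apply a KMS/BKKM-style directed Talagrand inequality to get an $\Omega(\eps^2/\sqrt{d})$ bound in terms of an expected ``augmented edge'' violation score, conditioning on the finer-scale coordinates being randomly restricted. In regions where the coarse-scale test fails to detect a violation, the argument recurses into finer scales. The $d^{O(1/\log\log d)}$ slack in the theorem is exactly consistent with $\Theta(\log d/\log\log d)$ recursive layers each contributing a constant factor; crucially, the dyadic-scale randomization should concentrate all $n$-dependence into a single $O(\log n)$ multiplicative factor that is absorbed by one layer of the recursion rather than cascading through it.

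The main obstacle is precisely this robust, recursion-friendly directed isoperimetric inequality. Previous hypergrid extensions of KMS lose factors of $n$ because their ``edges'' span arbitrary comparable pairs; the dyadic-pair distribution would fix this at a single scale, but one must show that the isoperimetric bound degrades only by a constant under random restriction to finer scales, and that the recursion eventually exposes any ``hidden'' violation without double-counting across scales. Given such an inequality, the tester itself is Levin's edge tester: sample $\eps^{-2} d^{1/2+o(1)}$ independent dyadic pairs and reject on the first violating pair, yielding one-sidedness and non-adaptivity for free.
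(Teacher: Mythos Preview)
Your proposal has a genuine gap: it misidentifies where the $n$-dependence actually arises. You frame the obstacle as needing a better \emph{isoperimetric inequality}, but the paper explains (and \cite{BlChSe23} already proved) that the directed Talagrand inequality on hypergrids holds with only a $\log n$ loss. The $n$-dependence in prior $\poly(n)\sqrt{d}$ testers comes from the \emph{second} step of the KMS program, the random walk analysis, and your recursion on dyadic scales does not address it. Concretely: suppose the isoperimetry hands you a violation matching $(\bX,\bY)$ of size $\Omega(n^d)$ in the augmented hypergrid, with each edge $(\bx,\by)$ axis-aligned but of arbitrary length $\by_i-\bx_i\in[n]$. A KMS-style walk from $\bx$ that crosses coordinate $i$ lands at a random \emph{interior} point $\bz$ of the segment, not at $\by$. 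For the argument to continue, $\bz$ must be persistent --- but the paper gives an explicit example where the union of all interiors has measure $o(1)$, so essentially all interior points can be non-persistent. Nothing in your scale recursion or ``restrict and recurse'' scheme touches this; at every fixed scale the same interior-persistence problem recurs.

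The paper's actual mechanism is quite different. First, domain reduction from \cite{BlackCS20} makes $n=\poly(d)$, so any remaining $\poly\log n$ factors are harmless. The real work is a new random-walk analysis: it defines \emph{mostly-zero-below} points and colors matching edges \emph{red} (most interior points, after an up-walk, become mostly-zero-below) or \emph{blue} (most interior points, after a down-walk, hit a $1$). Red edges are caught by an up-walk followed by a correlated down-\emph{shift} of both endpoints; blue edges are caught directly by a down-walk. The key lemma is that if a persistent violation subgraph has few red edges, one can \emph{translate} the entire subgraph upward by a random shift and use flow integrality to extract a new violation subgraph whose edges are blue. A separate ``persist-or-blow-up'' lemma iteratively translates lopsided subgraphs until the small side becomes persistent at the required walk length. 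The $d^{O(1/\log\log d)}$ loss comes not from a recursion over $\log n$ scales but from a peeling argument that simultaneously regularizes both the ordinary degree and the \emph{thresholded} degree (number of active dimensions) of the seed violation subgraph --- a distinction that is invisible on the hypercube and that your proposal does not mention.
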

\end{mdframed}

Query complexities independent of $n$ allow for monotonicity testing
over continuous spaces. Let $\mu = \prod_{i=1}^d \mu_i$
be an associated product Lebesgue measure over $\RR^d$. A function $f:\RR^d\to \{0,1\}$
is measurable if the set $f^{-1}(1)$ is Lebesgue-measurable with respect to $\mu$.
%
The $\mu$-distance of $f$ to monotonicity is defined
as $\inf_{g \in \cM} \mu(\Delta(f,g))$, where $\cM$ is the family of measurable monotone functions and $\Delta$ is the symmetric difference operator.
(Refer to Sec. 6 of~\cite{BlackCS20} for more details.)
Domain reduction results~\cite{BlackCS20,HY22}
show that monotonicity testing over general hypergrids and continuous (measurable) spaces can be reduced to
the case where $n = \poly(\eps^{-1}d)$ via sampling. 
A direct consequence of \Thm{mono-testing}
is the following theorem for continuous monotonicity testing.

\begin{mdframed}[backgroundcolor=gray!20,topline=false,bottomline=false,leftline=false,rightline=false,innertopmargin=-5pt] 
\begin{theorem}\label{thm:cont-testing}~
    Consider measurable Boolean functions $f:\RR^d \to \{0,1\}$, with a product measure $\mu$.
    There is a one-sided, non-adaptive tester for monotonicity that makes $\eps^{-2} d^{1/2 + O(1/\log\log d)}$ queries. 
\end{theorem}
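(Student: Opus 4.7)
The plan is to reduce the continuous testing problem to the discrete hypergrid setting covered by \Thm{mono-testing}, invoking the domain reduction results from \cite{BlackCS20,HY22} cited above. Since the query complexity of \Thm{mono-testing} is independent of $n$, we are free to pick $n$ as large as the reduction requires.

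Concretely, given $f:\RR^d \to \{0,1\}$ with product measure $\mu = \prod_{i=1}^d \mu_i$, first set $n = \poly(d/\eps)$ as given by the domain reduction. For each coordinate $i \in [d]$, draw $n$ i.i.d.\ samples from $\mu_i$ and sort them, thereby producing a coordinate-wise monotone embedding $\phi:[n]^d \to \RR^d$ whose $i$-th coordinate uses the sorted samples from $\mu_i$. Let $g:[n]^d \to \{0,1\}$ be the pulled-back hypergrid function $g(\bx) := f(\phi(\bx))$. Because $\phi$ is monotone in each coordinate, if $f$ is monotone then so is $g$, which is what will preserve one-sidedness under the reduction.

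Next, appeal to the domain reduction lemmas of \cite{BlackCS20,HY22} to conclude that if $f$ is $\eps$-far from monotone with respect to $\mu$, then with probability at least $5/6$ over the samples, $g$ is at least $\eps/2$-far from monotone on $[n]^d$. Now run the non-adaptive, one-sided tester of \Thm{mono-testing} on $g$ with proximity parameter $\eps/2$. Since the tester is non-adaptive, we can materialize $\phi$ lazily: for each queried $\bx \in [n]^d$, reveal only the samples needed to compute $\phi(\bx)$ and then query $f(\phi(\bx))$. The total number of queries to $f$ is $(\eps/2)^{-2} d^{1/2 + O(1/\log\log d)} = O(\eps^{-2} d^{1/2 + O(1/\log\log d)})$, and the overall error probability is at most $1/3$ by a union bound over the sampling event and the tester's failure.

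The only real content beyond \Thm{mono-testing} is the distance-preservation fact that sampling from $\mu$ and restricting to the induced grid approximately preserves distance to monotonicity. This is exactly what the domain reduction theorems of \cite{BlackCS20,HY22} establish, and we use them as a black box; once this is in hand the reduction is a routine simulation. The step that could have been an obstacle --- controlling how the $n$ appearing in the reduction blows up the query complexity --- is sidestepped entirely by the $n$-independent bound of \Thm{mono-testing}.
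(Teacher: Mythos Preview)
Your proposal is correct and follows essentially the same approach as the paper: the paper simply states that the proof of \Thm{cont-testing} ``follows the exact same argument'' as \Thm{mono-testing}, using the continuous domain reduction Theorem~1.4 of \cite{BlackCS20} in place of the discrete one, and omits the details. Your write-up spells out exactly this reduction --- sample a grid from $\mu$, pull $f$ back to a hypergrid function, invoke domain reduction for distance preservation, and run the $n$-independent tester of \Thm{mono-testing} --- which is the intended argument.
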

\end{mdframed}
All $o(d)$ non-adaptive, one-sided monotonicity testers are \emph{path testers} (also called pair testers) that check for violations among comparable points at a distance from each other, rejecting if they form a violation. 
Consider the {\em fully augmented} directed hypergrid graph defined as follows. Its vertices are $[n]^d$
and its edges connect all pairs $\bx \prec \by$ that differ in exactly one coordinate.
A path tester picks a random point $\bx$ in $[n]^d$, performs a random walk in this directed
graph to get another point $\by \succ \bx$, and rejects if $f(\bx) > f(\by)$. 
The whole game is to lower bound the probability that $f(\bx) > f(\by)$ when $f$ is $\eps$-far from being monotone.
Unlike random walks on undirected graphs, these directed random walks are ill-behaved. In particular, one cannot walk for ``too long''
and the length of the walk has to be carefully chosen. The approach to analyzing such path testers has two distinct parts.

\begin{asparaitem}
	\item {\em Directed Isoperimetry.} 
	A Boolean isoperimetric theorem relates the volume of a subset of the hypergrid, in our case the preimage $f^{-1}(1)$, 
	to the edge and vertex expansion properties of this set in the graph. A directed analogue
	replaces the volume with the distance to monotonicity, and
	deals with directed expansion properties. 
	
	\item {\em Random walk analysis.} The second part is to use the directed isoperimetric theorem to lower bound the success probability
	of the path tester. The analogy is: if the (directed) expansion of a set is large, then the probability of a directed random walk starting from a $1$ and ending at a $0$ is also large. This analysis proceeds via special combinatorial substructures in the graph of violations.
\end{asparaitem}
\smallskip

\noindent
The seminal result of Khot, Minzer, and Safra~\cite{KMS15} (henceforth KMS) gave near optimal analyses for both parts,
for the {\em hypercube} domain. For the first part, they prove a directed, robust
version of the Talagrand isoperimetric theorem. KMS use this directed isoperimetric theorem to construct ``good subgraphs'' of the hypercube comprised of violated edges. 
For the second part mentioned above, KMS relate the success probability of the directed random walk
to properties of this subgraph. 
Coming to hypergrids, one needs to generalize both parts of the analysis, and this offers many challenges. 
For the first part, Black, Chakrabarty, and Seshadhri~\cite{BlChSe23} generalize the directed Talagrand inequality to
the hypergrid domain. 
%
Unfortunately, even with this stronger directed Talagrand isoperimetric bound for hypergrids, the generalization
of the KMS random walk analysis only yields a $1/(n\sqrt{d})$ lower bound on the success probability. 

\begin{quote}
	\emph{The main technical contribution of this paper is a new random walk algorithm and analysis whose success 
		probability is at least $\eps^2 d^{-(1/2 + o(1))}$.}
\end{quote}

\subsection{Algorithm Description} \label{sec:path}

Our algorithm performs directed random walks as all previous monotonicity testers do, 
but augments these with coordinated walks. It starts at a random $\bx$ and performs an ``up-walk'' of a certain (random) length on the fully augmented directed hypergrid to reach a point $\by$.
The algorithm then ``walks down'' in a {\em coordinated} fashion from both $\bx$ and $\by$ to get to points $\bw = \bx - \bs$ and $\bz = \by - \bs$ where $\bs$ is some random vector.
The main contribution of this paper is to show that the combinatorial properties implied by directed isoperimetry theorems of~\cite{BlChSe23} can be used to analyze these coordinated tests. 
We give a formal description of the algorithm.

%
%

Without loss of generality\footnote{See Theorem A.1 of \cite{BlackCS18}. Note this assumption is not crucial, but we choose to use it for the sake of a cleaner presentation.}, we assume that $n$ is a power of $2$. We use $x \in_R S$ to denote choosing a uniform random element $x$ from the set $S$. We use the notation $[n] := \{1,2,\ldots,n\}$. Abusing notation, we define intervals in $\Z_n$ by wrapping around. So, if $1 \leq i \leq n < j$, then the interval $[i,j]$ in $\Z_n$ is the set $[i,n] \cup [1,j \pmod n]$. 
The directed (lazy) random walk distribution in $[n]^d$ that we consider is defined as follows.
The distribution induced by this directed walk has multiple equivalent formulations, which are discussed in \Sec{equivalent_dists}.

\begin{definition} [Hypergrid Walk Distribution] \label{def:walkdist} For a point $\bx \in [n]^d$ and walk length $\tau$, the distribution $\mathcal{U}_{\tau}(\bx)$ over $\by \in [n]^d$ reached by an {\em upward lazy random walk from $\bx$ of $\tau$-steps} is defined as follows. 
	\begin{enumerate}[noitemsep]
		\item Pick a uniform random subset $R\subseteq [d]$ of $\tau$ coordinates.
		\item For each $r \in R$:
		\begin{asparaenum}
			\item Choose $q_r \in_R \{1,2,\ldots,\log n\}$ uniformly at random.
			\item Choose a uniform random interval $I_r$ in $\mathbb{Z}_n$ of size $2^{q_r}$ such that $\bx_r \in I_r$.
			\item Choose a uniform random $c_r \in_R I_r \setminus \{\bx_r\}$.
		\end{asparaenum}
		\item Generate $\by$ as follows. For every $r \in [d]$, if $r \in R$ \emph{and} $c_r > \bx_r$, set $\by_r = c_r$. Else, set $\by_r = \bx_r$.
	\end{enumerate}
	Analogously, let $\downdist{\bx}{\tau}$ be the distribution defined precisely as above, but the $>$-sign is replaced by the $<$-sign in
	step 3. This is the distribution of the endpoint of a {\em downward lazy random walk from $\bx$ of $\tau$-steps}.
\end{definition}
\noindent
As mentioned earlier, a crucial step of our algorithm involves performing the exact same random walk, but originating from two different points. 
We use the notion of {\em shifts}.

\begin{definition} [Shift Distributions] \label{def:shiftdist} The up-shift distribution from $\bx$, denoted $\upshift{\bx}{\tau}$ is the distribution of $\bx'-\bx$, where $\bx' \sim \updist{\bx}{\tau}$. The down-shift distribution from $\bx$, denoted $\downshift{\bx}{\tau}$ is the distribution of $\bx-\bx'$, where $\bx' \sim \downdist{\bx}{\tau}$. \end{definition}
\noindent
Using \Def{walkdist} and \Def{shiftdist}, our tester is defined in \Alg{alg}.


\begin{algorithm}[ht!]
	\caption{Monotonicity tester for Boolean functions on $[n]^d$}\label{alg:alg}
	\textbf{Input:} A Boolean function $f \colon [n]^d \to \{0,1\}$
	\begin{asparaenum}
		\item Choose $p \in_R \{0,1,2,\ldots,\floor{\log d}\}$ uniformly at random and set $\tau := 2^{p}$.
		\item Run the \textit{upward path test} with walk length $\ell = \tau-1$ and $\ell = \tau$:
		\begin{enumerate}[noitemsep]
			\item Choose $\bx \in_R [n]^d$ and sample $\by$ from $\cU_{\ell}(\bx)$. 
			\item If $f(\bx) > f(\by)$, then reject. 
		\end{enumerate}
		\item Run the \textit{downward path test} with walk length $\ell = \tau-1$ and $\ell = \tau$:
		\begin{enumerate}[noitemsep]
			\item Choose $\by \in_R [n]^d$ and sample $\bx$ from $\downdist{\by}{\ell}$. 
			\item If $f(\bx) > f(\by)$, then reject. 
		\end{enumerate}
		\item Run the \textit{upward path + downward shift test} with walk length $\ell = \tau-1$ and $\ell = \tau$:
		\begin{enumerate}[noitemsep]
			\item Choose $\bx \in_R [n]^d$, sample $\by$ from $\updist{\bx}{\ell}$, and sample $\bs$ from $\downshift{\bx}{\tau-1}$.
			\item If $f(\bx - \bs) > f(\by - \bs)$, then reject.
		\end{enumerate}
		\item Run the \textit{downward path + upward shift test} with walk length $\ell = \tau-1$ and $\ell = \tau$:
		\begin{enumerate}[noitemsep]
			\item Choose $\by \in_R [n]^d$, sample $\bx$ from $\downdist{\by}{\ell}$, and sample $\bs$ from $\upshift{\by}{\tau-1}$.
			\item If $f(\bx + \bs) > f(\by + \bs)$, then reject.
		\end{enumerate}
	\end{asparaenum} 
\end{algorithm}

\noindent

\begin{remark}\label{rem:wlog}
	Given a function $f:[n]^d \to \{0,1\}$, consider the doubly-flipped function $g:[n]^d \to \{0,1\}$ defined as 
	$g(\bx) := 1 - f(\bar{\bx})$ where $\bar{\bx}_i := n - \bx_i + 1$. That is, we swap all the zeros and ones in $f$, 
	and then reverse the hypergrid (the all $1$'s point becomes the all $n$'s point and vice-versa).
	The distance to monotonicity of both $f$ and $g$ are the same: a pair $(\bx,\by)$ is violating in $f$
	if and only if $(\bar{\bx}, \bar{\by})$ is violating in $g$.
	In \Alg{alg}, Step 2 on $f$ is the same as Step 3 on $g$, and Step 4 on $f$ is the same as Step 5 on $g$.
	In our analysis, we will construct a violation subgraph between vertex sets $\bX$ and $\bY$. 
	Points in $\bX$ are $1$-valued and points in $\bY$ are $0$-valued.
	If $|\bX| \leq |\bY|$, then the steps 2, 3, and 4 suffice for the analysis. If $|\bY| \leq |\bX|$, then (by the same analysis)
	we run steps 2,3, and 4 on the function $g$. This is equivalent to running steps 2, 3, and 5 on the function $f$.
	So, the tester covers both situations, and we can assume without loss of generality that $|\bX| \leq |\bY|$. This discussion happens
	in \Cref{sec:choice-tau}.
\end{remark}
	
	

\noindent
Our main result is the following lower bound on the rejection probability of \Alg{alg}.


\begin{mdframed}
	[backgroundcolor=gray!20,topline=false,bottomline=false,leftline=false,rightline=false,innertopmargin=-5pt] 
	\begin{restatable}[Main Theorem]{theorem}{maintheorem}
		\label{thm:main} Let $n \leq \poly(d)$ and $\eps \geq d^{-1/2}$. If $f \colon [n]^d \to \{0,1\}$ is $\eps$-far from being monotone, then \Alg{alg} rejects $f$ with probability at least $\eps^2 \cdot d^{-(1/2 + O((\log \log nd)^{-1}) ) }$.
	\end{restatable}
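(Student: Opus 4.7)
The plan is as follows. Fix $\delta > 1/\log\log(nd)$ and assume $f$ is $\eps$-far from monotone. The first step is to extract a combinatorial certificate of the distance. Invoking the directed hypergrid isoperimetric theorem from \cite{BlChSe23} together with a peeling argument (\Lem{peeling}), we produce a \emph{seed regular violation subgraph} $G=(\bX,\bY,E)$ in the fully augmented hypergrid, in which there is simultaneous control of the number of edges, the ordinary degrees, and the \emph{thresholded} degrees (counting only the number of distinct edge-dimensions at each vertex). The $d^{O(\delta)}$ overhead is absorbed at this step, because aligning the two notions of degree requires a $d^{O(\delta)}$ loss. Denote by $\Delta$ the resulting thresholded degree bound; the relevant walk length is $\tau \asymp \sqrt{d/\Delta}$, and since Step 1 of \Alg{alg} samples $\tau$ uniformly from $O(\log d)$ powers of $2$, guessing $\tau$ costs only a logarithmic factor. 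By \Cref{rem:wlog} we may assume $|\bX|\le|\bY|$, so that only Steps 2, 3, and 4 need to be analyzed.

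Next, we prepare the subgraph for the path tests by running the persist-or-blow-up lemma (\Lem{blowup}). Either every vertex of $\bX$ is $\tau$-down-persistent and every vertex of $\bY$ is $\tau$-up-persistent, or one can exhibit a violation subgraph of at least twice the size with the same structural parameters. Iterating, after at most $O(\log n)$ rounds we obtain a refined subgraph $G'$ for which persistence holds. A standard influence-plus-Markov argument handles the ``escape'' case: if many vertices of $\bX$ were not up-persistent, then the upward path test in Step 2 already succeeds with the required probability. This is precisely why both the up- and down-walks in \Alg{alg} are needed, since the lopsided case forces $\tau$ to be determined by the larger side.

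We then dichotomize the edges of $G'$ via the red/blue framework. Call an edge $(\bx,\by)$ red if for a constant fraction of interior points $\bz$ on the segment $(\bx,\by)$, a $(\tau-1)$-length upwalk from $\bz$ lands on a $\tau$-$\mzb$ point (\Def{red}). If an $\Omega(1)$ fraction of the edges of $G'$ are red, then it qualifies as a nice red subgraph (\Def{red-nice}) and Step 4 (upwalk plus down-shift) succeeds exactly as sketched in \Fig{shift}: the start lands in $\bX$ with probability $\Omega(\eps)$, the upwalk crosses the matched dimension of $(\bx,\by)$ with probability $\Omega(\tau/d)$, and given down-persistence of $\bx$ plus the $\mzb$ property at the walk endpoint $\bz'$, the down-shifted pair $(\bx-\bt,\bz'-\bt)$ is a violation with constant probability. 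If instead red edges are a minority, then for most edges a random upshift of $(\bx,\by)$ lands on an edge whose interior points mostly lead downward to $1$-valued points; these are the \emph{blue} edges (\Def{blue}). The red/blue lemma (\Lem{redblue}) then converts this in-distribution guarantee into an honest subgraph $H$ of comparable size and degree whose edges are blue (\Def{blue-nice}), via an integrality-of-flow argument. Step 3 (downward path test) succeeds on $H$ by the symmetric analysis: a random endpoint in the $\bY$-side of $H$, a downwalk crossing the correct dimension with probability $\Omega(\tau/d)$, and blueness of the edge closing out the violation.

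The main obstacle is the parameter juggling throughout. The seed graph must simultaneously satisfy lower bounds on its edge count, two-sided degree bounds, and structural regularity; the persist-or-blow-up lemma must preserve these invariants across iterations while either buying persistence or doubling the size; and the red/blue lemma must convert fractional translations into a deterministic subgraph without losing more than constant factors in size or degree. The thresholded-versus-ordinary degree distinction is where the $d^{O(\delta)}$ slack is unavoidably spent. Once all of this is in place, the final accounting is: $\Omega(\eps)$ for hitting the good subgraph, $\Omega(\eps)$ from persistence, $\Omega(\tau/d) = d^{-1/2+O(\delta)}$ from the walk crossing the correct dimension, and $\Omega(1/\log d)$ from guessing $\tau$, which combine to give the claimed $\eps^2\, d^{-(1/2+O(\delta))}$ success probability.
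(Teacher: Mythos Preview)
Your outline follows the paper's architecture closely: seed graph via \Lem{peeling}, persist-or-blow-up via \Lem{blowup}, then the red/blue dichotomy via \Lem{redblue}, and the tester analysis for Steps 3 and 4. Two points, however, are either missing or mis-stated.

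First, you omit the top-level case split on the total influence $\widetilde{I}_f$. The paper's proof of \Thm{main} begins by noting that if $\widetilde{I}_f > 9\sqrt{d}$, then by \Clm{total-to-negative-inf} the negative influence satisfies $\widetilde{I}_f^- > \sqrt{d}$, and so Step 2 with $\tau=1$ already rejects with probability $\Omega(d^{-1/2})$. Only under the assumption $\widetilde{I}_f \leq 9\sqrt{d}$ can one invoke \Clm{persistence} to bound the fraction of non-persistent points; this assumption is a hypothesis of \Lem{good_subgraph} and is used throughout the persist-or-blow-up iteration (e.g., in the proof of \Clm{G0_persistence} for the $\bY$ side). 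Your ``escape case'' paragraph conflates this global influence split with the local up-persistence bookkeeping of \Clm{G0_persistence}; the latter itself relies on the former, so you cannot skip the case $\widetilde{I}_f > 9\sqrt{d}$.

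Second, your final accounting ``$\Omega(\eps)$ for hitting the good subgraph, $\Omega(\eps)$ from persistence'' is not how the $\eps^2$ arises. Persistence is an $\Omega(1)$ event, not $\Omega(\eps)$. In the paper, the $\eps^2$ comes from condition (d) of \Def{red-nice} (or (c) of \Def{blue-nice}), namely $\sigma \sum_{\bx \in \bA} \Phi_H(\bx) \geq \eps^2 n^d d^{-O(\delta)}$, which is obtained in \Sec{derive_redblue} by applying Jensen's inequality to the Talagrand bound $\sum_{\bx}\sqrt{\Phi_G(\bx)} \geq \eps d^{-O(\delta)} n^d$: one gets $\EX_{\bx}[\Phi_G(\bx)] \geq (\eps d^{-O(\delta)}/\sigma_{\bX})^2$, and the square is the source of $\eps^2$. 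The probability computation in \Sec{testeranalysis1} then directly bounds $\Pr[\cE_1] \geq \eps^2/d^{1/2+O(\delta)}$ in one shot, not as a product of two $\eps$-sized events.
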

\end{mdframed}
\noindent
\Thm{main} is proved in \Sec{anal}. Using \Thm{main}, our main testing results \Thm{mono-testing} and \Thm{cont-testing} follow easily from prior techniques, and so we defer their proofs to \Cref{sec:proofs_of_main_theorems}.

\subsection{Analysis Overview}\label{sec:ourideas}
We give a leisurely  overview of the main ideas that go behind proving~\Cref{thm:main}. 
We begin with a high-level overview of the KMS analysis for the hypercube case, sketch certain challenges that hypergrids pose, and then 
discuss our ideas that led us to the ``shifted walks'' view. 

\paragraph{The KMS random walk analysis on $\{0,1\}^d$ in a nutshell.} 
For simplicity, let's assume $\eps$ is a small constant so that we ignore the dependence on $\eps$.
As mentioned earlier, KMS prove a directed, robust
version of the Talagrand isoperimetric theorem. Using this, they extract a large ``good subgraph" of violations. 
A violation subgraph $G = (\bX, \bY,E)$ is a bipartite graph where $\forall \bx \in \bX, f(\bx) = 1$, $\forall \by \in \bY$, $f(\by) = 0$, and
all edges in $E$ are hypercube edges. A good subgraph is a violation subgraph that satisfies
certain lower bounds on the total number of edges and has an approximate regularity
property. The specifics are a bit involved (Definition 6.1-6.3, in~\cite{KMS15}), but it is instructive to consider the 
simplest good subgraph: a {\em matching} between $\bX$ and $\bY$ where $|\bX| = |\bY| = \Omega(2^d)$. 
%
%

When the good subgraph is a matching, KMS show that a random walk of length
$\tau = \widetilde{\Theta}(\sqrt{d})$ succeeds in finding a violation with $\widetilde{\Omega}(d^{-1/2})$ probability.
A key insight in the analysis is the notion of \emph{$\tau$-persistence}: a vertex $\bx$ is $\tau$-persistent
if a $\tau$-length directed random walk leads to a point $\bz$ where $f(\bx) = f(\bz)$ with constant probability. 
A simple argument shows that there are $o(2^d)$ non-persistent vertices (for walk length $< d^{1/2}$).
We can remove all such vertices from $\bX$ and $\bY$.
%
We get a matching between subsets $\bX'$ and $\bY'$ where all points in $\bX'$ and $\bY'$
are $(\tau-1)$-persistent, and $|\bX'| = |\bY'|$ is still $\Omega(2^d)$.

With $\Omega(1)$ probability, the tester starts from $\bx \in \bX'$. Note that $f(\bx) = 1$.
Let the matched partner of $\bx$ be $\by$. 
Let $i$ be the dimension of the violated edge $(\bx,\by)$. 
With probability roughly $\tau/d = \widetilde{\Omega}(d^{-1/2})$, the directed
walk will cross the $i$th dimension. Let us condition on this event. 
We can interpret the random walk as traversing the edge $(\bx,\by)$,
and then taking a $(\tau-1)$-length directed walk from $\by$ to reach
the destination $\by'$. (Note that we do not care about the specific order of edges traversed by the random walk.
We only care about the value at the destination.)
Since $\by$ is $(\tau-1)$-persistent 
with $\Omega(1)$ probability the final destination $\by'$ will satisfy $f(\by') = f(\by) = 0$. 
Putting it all together, the tester succeeds with probability $\widetilde{\Omega}(d^{-1/2})$.

\paragraph{The challenge in hypergrids.} \label{sec:challenge}
As mentioned earlier,~\cite{BlChSe23} proves an isoperimetric theorem for hypergrids generalizing the one in~\cite{KMS15}.
Using similar techniques to the hypercube case, one can construct ``good subgraphs'' of the fully augmented hypergrid.
The definition is involved (Theorem 7.8 in~\cite{BlChSe23}), but the simplest case
is again a violation matching of $(\bX,\bY,E)$ of size
$|\bX| = |\bY| = \Omega(n^d)$. Note that the matched pairs $(\bx,\by)$ are axis-aligned, that is,
differ in exactly one coordinate $i$. But $\by_i - \bx_i$ is an integer in $\{1,2\ldots, n-1\}$.

In the hypergrid, the directed random
walk must necessarily perform ``jumps". At each step, the walk changes a chosen 
coordinate to a random larger value. 
One can generalize the hypercube persistence arguments to show that with constant probability,
a $\tau = \widetilde{\Theta}(\sqrt{d})$-step random walk will result in both endpoints having the same value.
And so, like before, we can remove all ``non-persistent'' points to end up with an $\Omega(n^d)$ violation matching $(\bX',\bY')$ where all vertices are $\tau$-persistent.

The tester picks $\bx \in \bX'$ with $\Omega(1)$ probability. Let $\by$ be its matched partner, which differs in the $i$th coordinate.
If the number of steps is $\tau$, then with $\tau/d \approx \widetilde{\Omega}(d^{-1/2})$ probability, the walk will choose to move along the $i$th coordinate. Conditioned on this event, we would like to relate the random walk to a persistent walk from $\by$.
However, there is only a $1/n$ chance that the {\em length} jumped along that coordinate will be the jump $\by_i - \bx_i$.
One loses an extra $n$ factor in the success probability, and indeed, this is the high-level
analysis of the $\otilde_\eps(n\sqrt{d})$-tester from~\cite{BlChSe23} (at least for the case of the matching).

How does one get rid of this dependence on $n$? There is no simple way around this impasse.
If $\by_i - \bx_i$ is, say $\Theta(n)$, we cannot relate the walk from $\bx$ to a (persistent) walk from
$\by$ without losing this $n$ factor.
If one desires to be free of the parameter $n$, then one needs to consider the {\em internal points} in the segment $(\bx,\by)$. 
But all internal points could be non-persistent. Even though most internal points $\bz$ in the segment $(\bx,\by)$ may be $0$-valued, a
$(\tau-1)$-step walk from $\bz$ could lead to $1$-valued points.
So the final pair will not be a violation. One may think that since the matching size was large ($\approx n^d$), the ``interior'' (the union of the interiors of the matching segments) would be large,
and most of the internal nodes would be persistent. Unfortunately, that may not be the case, and the following is an illustrative example.
%
%
%
%
%
%
%
%
%
We define a Boolean hypergrid function $f$ and
an associated violation matching iteratively. Let $n \leq d/\ln d$. Start with all function values undefined.
If $\bx_1 = 1$, set $f(\bx) = 1$. If $\bx_1 = n$, set $f(\bx) = 0$. Take the natural violation
matching between these points. For every undefined point $\bx$: if $\bx_2 = 1$, set $f(\bx) = 1$
and if $\bx_2 = n$, set $f(\bx) = 0$. Iterating over all coordinates, we define the function
at all points ``on the surface". In the ``interior", where $\forall i$, $\bx_i \notin \{1,n\}$,
we set $f$ arbitrarily. The interior has size $n^d \cdot (1-2/n)^d \approx n^d \exp(-2\ln d) \leq n^d/d^2$.
This is a tiny fraction
of the domain, while the matching has size $\Omega(n^d)$.
Hence, it is possible to have a large violation matching such that the union of (strict) interiors is vanishingly
small. 

 \paragraph{Mostly-zero-below Points and Red Edges.}
As mentioned above, we begin with the basic case of a violation matching $G = (\bX, \bY, E)$ of size $\Omega(n^d)$
in the fully augmented hypergrid. The general case will be discussed at the end of this section.
We set $\tau = \widetilde{\Theta}(\sqrt{d})$. As argued above, using the persistence and Markov inequality arguments, we can
assume that all points in $\bX \cup \bY$ are $(\tau-1)$-persistent. Recall that our algorithm 
performs both up-walks and down-shifted walks. In particular, it compares a pair $(\bx, \bw)$ where $\bw$ is $\tau$-steps ``above'' $\bx$, 
and then also compares $(\bx-\bs, \bw-\bs)$ where $\bs$ has $(\tau-1)$ non-zero coordinates.
The following is a key definition: we call a point $\bw$ {\em mostly-zero-below} for length $\ell$, or simply $\ell$-$\mzb$, if an $\ell$-length down-walk from $\bw$ leads to a zero with $\ge 0.9$ probability (\Cref{def:predict}). 
Suppose an up-walk of length $\tau$ from a point $\bx\in \bX$ reaches a $(\tau-1)$-$\mzb$ point $\bw$. 
Then, a random shift $(\bx-\bs, \bw-\bs)$ has a constant probability of being a violation.
The reason is (i) $\Pr[f(\bx-\bs) = f(\bx) = 1] \geq 0.9$ because $\bx$ is $(\tau-1)$-persistent, and (ii) $\Pr[f(\bw-\bs) = 0] \geq 0.9$ because $\bw$ is $(\tau-1)$-mostly-zero-below. By a union bound, the tester will find a violation with constant probability
(conditioned on discovering the pair $(\bx,\bw)$). 

To formalize this analysis, we define an edge $(\bx,\by)$ of our large matching to be {\em red} if it satisfies the following condition: 
for a constant fraction of the points $\bz$ in the segment $(\bx,\by)$, a $(\tau-1)$-length up-walk ends at a $(\tau-1)$-$\mzb$ point with constant probability (\Cref{def:red}). If there are $\Omega(n^d)$ red matching edges, we can argue that the tester
succeeds with the desired probability. Firstly, with probability $\Omega(1)$, the tester
starts the walk at an endpoint $\bx$ of a red edge. Let the matched edge be $(\bx,\by)$. 
With probability $\tau/d \approx d^{-1/2}$, the walk will cross the dimension corresponding
to $(\bx,\by)$. Conditioned on this event, we can interpret the walk as first moving\footnote{There is an annoying edge-case: the point $\bx$ may itself contribute to the ``redness'' of the 
edge $(\bx, \by)$. In this case, we do not move along that dimension but rather take a $(\tau-1)$ length walk from $\bx$. This is why \Alg{alg} takes walks of both lengths, $\tau$ and $\tau-1$. \label{fn:fn1}} to a random interior point $\bz$ in the segment $(\bx, \by)$ and then taking a $(\tau-1)$-length up-walk
from $\bz$ to get to the point $\bz'$. (Refer to \Fig{shift}.)
Since the edge was red, with constant probability, $\bz'$ is $\tau$-$\mzb$. Consider a random shift of $(\bx,\bz')$, shown as $(\bx-\bt, \bz'-\bt)$ in \Fig{shift}. As discussed in the previous paragraph, this shifted
pair is a violation with constant probability.
All in all, the tester succeeds with $\Omega(d^{-1/2})$ probability. 


\begin{figure}[ht!]
    \begin{center}
        \includegraphics[scale=0.33]{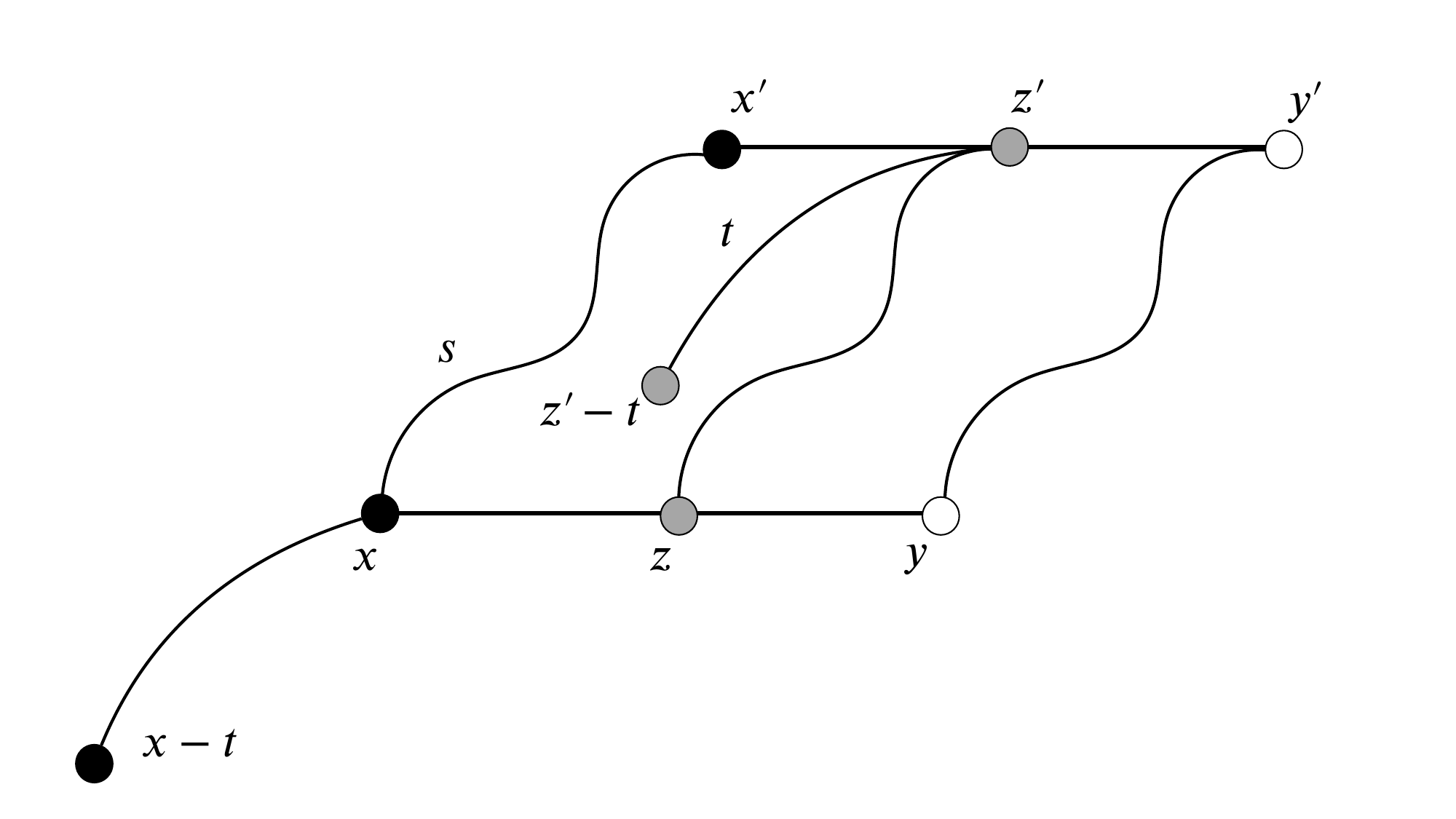}
    \end{center}
    \caption{\em This figure shows the key argument that either up-walks + downshifts, or down-walks find violations. 
    The edge $(\bx,\by)$ is in the initial violation matching. Parallel curves of the same shape denote the same shift.
    So $\bx' = \bx+\bs$, $\by' = \by + \bs$, and $\bz' = \bz + \bs$. Similarly, we see both $\bx$ and $\bz'$ shifted below
    by $\bt$.
    The $1$-valued points are colored black and the $0$-valued points are colored white. Gray points do not have an
    a priori guarantee on function value. If $\bz'$ is $\mzb$, then $f(\bz'-\bt) = 0$ with high probability. In this case,
    $(\bx-\bt, \bz'-\bt)$ is a likely violation. If not, then $(\bz'-\bt,\by')$ is a likely violation.}\label{fig:shift}
\end{figure}
\noindent
But what if there are no red edges? This takes us to the next key idea of our paper: {\em translations} of violation subgraphs.

\paragraph{Translations of violation subgraphs, and blue edges.} Suppose most of the matching edges edges $(\bx,\by)$ are not red.
So, for most points $\bz$ in the segment $(\bx, \by)$, a $(\tau-1)$-length walk  does not reach a $(\tau-1)$-$\mzb$ point. 
Fix one such walk, which can be described by an ``up-shift" $\bs$. So the walk from $\bz$ reaches
$\bz' \eqdef \bz + \bs$.

Consider the corresponding shift of the full edge $(\bx, \by)$ to $(\bx',\by')$, where $\bx' = \bx + \bs$ and $\by' = \by + \bs$. 
Refer to \Fig{shift}. What can we say about this edge?
Since both $\bx$ and $\by$ are up-persistent, with good probability both $f(\bx') = f(\bx) = 1$ and $f(\by') = f(\by) = 0$. 
Observe that most internal points $\bz'$ in $(\bx',\by')$ are not
mostly-zero-below. Consider a $(\tau-1)$-length downward walk from such a point $\bz'$, whose destination
can be represented as $\bz'-\bt$ (for a downshift $\bt$).
With probability $\geq 0.1$, $f(\bz'-\bt) = 1$.

Recall, the tester performs a downward random walk (\Cref{alg:alg}, Step 3) as well. 
Suppose this walk starts at $\by'$. With probability $\approx \tau/d \approx d^{-1/2}$, the walk moves (downward) in the $i$th coordinate. Conditioned on this, the walk ends up at a point $\bz'-\bt$. As discussed above,
$\bz'$ is likely to be not mostly-zero-below. Hence $f(\bz'-\bt) = 1$ with constant probability,
and the tester discovers the violating pair $(\bz'-\bt, \by')$.

\Fig{shift} summarizes the above observations. If $(\bx,\by)$ is red, then the pair $(\bx-\bt,\bz'-\bt)$ is likely to a violation.
If $(\bx,\by)$ is not red, then the pair $(\bz'-\bt,\by')$ is a likely violation.
This motivates the definition of our {\em blue} edges. We call a violating edge blue, if for a constant fraction of points in the interior, a downward walk of length $(\tau - 1)$ leads to a $1$-point with constant probability (\Cref{def:blue}). We argued above that if the edge $(\bx,\by)$ in the violation matching was not red, then a random shift or translation up to $(\bx',\by')$ leads to a blue edge. 
If most edges in our original violation matching were not red, then we could translate ``all these edges together'' to get a (potentially) new large violation subgraph. If most of these new edges are blue, then the downward walk would catch a violation with $\approx d^{-1/2}$ probability. 

What does it mean to translate ``all edges together''? In particular, how do we pin down this new violation matching? 
We use ideas from network flows. Through the random translation, every non-red edge $(\bx,\by)$ in the original violation matching leads to a {\em distribution} over blue edges $(\bx',\by')$.
We treat this as a fractional flow on these blue edges. If the original matching had few red edges, we can construct a large collection of blue edges sustaining a large flow. Integrality of flow implies there must be another large violation matching in the support of this distribution whose edges are blue. 
This is the essence of the ``red/blue'' lemma (\Cref{lem:redblue}).

Putting it together, suppose $G = (\bX,\bY,E)$ is a large violation matching. Either the up-walk with a shift (Step 4) or the down-walk (Step 3) succeeds with probability $\approx d^{-1/2}$.

\paragraph{Lopsided violation subgraphs and translation again.}
We have discussed the situation of a large violation matching $G = (\bX, \bY, E)$ with $|\bX| = |\bY| = \Omega(n^d)$. 
However, such a large matching may not exist. Instead, the directed isoperimetric theorems
imply the existence of a ``good subgraph'' with bounded maximum degree and many edges. 
These graphs $G = (\bX, \bY, E)$ may be lopsided with $|\bX| \ll |\bY|$.
This causes a significant headache for our algorithm, and once again, the issue is persistence. 
The good subgraph could have $|\bX| \approx n^d/\sqrt{d}$, $|\bY| \approx n^d$,
and edges that are structured as follows. All edges incident to an individual $\by \in \bY$
are aligned along the same dimension. For the path tester to find a violation starting from any $\by \in \bY$,
it must take a walk of length $\tau = \widetilde{\Omega}(\sqrt{d})$.

Unlike in~\cite{KMS15} or in~\cite{BlChSe23}, the tester must run both the up-walk and down-walk. 
In the situation of \Fig{shift}, it is critical that both up-walks and down-walks have the same length.
In the lopsided good subgraph indicated above, the walk length is $\widetilde{\Omega}(\sqrt{d})$.
For this length, the fraction of non-persistent points could be $\widetilde{\Omega}(1)$.
In particular, all the vertices in $\bX$ could be non-persistent with respect to this length. 
Thus, the upward walk + downward shift is no longer guaranteed to work. (In \Fig{shift},
we are no longer guaranteed that $f(\bx-\bt) = 1$. To ensure that, the walk must be much shorter.
But in that case, the walk from $\by'$ is unlikely to cross the $i$th dimension.) 

%
%
%
%

To cross this hurdle, we use the translation idea again. Suppose we had a lopsided violation subgraph $G = (\bX, \bY, E)$ with $|\bX| \ll |\bY|$. 
For the walk length $\tau$ determined by $\bY$, most vertices in $\bX$ are not down persistent. However, the vertices in $\bX$ must be up persistent for otherwise the upward walk
would succeed (\Cref{clm:G0_persistence_X}). Therefore, we can take upward translations of $G$ and again using network flow arguments alluded to in the previous paragraph, 
we are able to construct another violation subgraph $G' = (\bX',\bY', E')$
that satisfies the following properties. Firstly, $G'$ is ``structurally" similar to $G$, in terms of degree bounds and the number
of edges. Either vertices in $\bX'$ are $\tau$-down persistent or $|\bX'| \geq 2|\bX|$. We refer to this as the `persist-or-blow-up' lemma (\Cref{lem:blowup}).
The argument is somewhat intricate and requires a delicate balance of parameters. An interesting aspect
is that we can either beat the usual Markov upper bound for persistent vertices, or improve the parameters of the violation
graph. By iterations of the lemma, we can argue the existence of a violation subgraph with all the desired persistent properties.
Then, the analysis akin to the matching case generalizes to give the desired result. 

%
\paragraph{Thresholded degrees, peeling, and the $d^{o(1)}$ loss.} Another gnarly issue with hypergrids is the distinction between degree and ``thresholded degree''. 
The relevant ``degree''  of a vertex $\bx$ (for the path tester analysis) in a violation subgraph is not the {\em number} of edges incident on it, but rather the number of different
dimensions $i$ so that there is an $i$-edge incident on it. We refer to this quantity as the ``thresholded degree",
and it is between $0$ and $d$. Note that the standard degree could be as large as $(n-1)d$. It is critical one uses thresholded degree 
for the path tester analysis, to avoid the linear dependence on $n$ in our calculations. 
Observe that for the matching case, these degrees are identical, making the analysis easier. 

While the path tester analysis works with thresholded degree, the flow-based translation arguments alluded to above need to use normal degrees. 
In particular, we can use flow-arguments to relate the bound the standard degree of the new violation subgraphs.
But we cannot a priori do so for the thresholded degree.

To argue about the thresholded degree, we begin with a stronger notion of a good subgraph called the seed regular violation subgraph (\Cref{lem:peeling}).
This subgraph satisfies specific conditions for both thresholded and standard degrees of the vertices.
It is in the construction of the seed graph where we lose the $d^{o(1)}$ factor.


\paragraph{Road map.} \Cref{sec:prelim} contains technical preliminaries about the hypergrid and defining random walk
distributions. \Cref{sec:redbluedef} has the main definitions of red/blue edges, and provides the terminology
for the main ideas. The analysis begins in \Cref{sec:anal}, and is broken up into the remaining sections.
We use the isoperimetric theorem in~\cite{BlChSe23} to prove the existence of the seed regular graph $G$ (\Cref{lem:peeling}).
This graph may not have the desired persistence properties, so we apply the persist-or-blow-up lemma,~\Cref{lem:blowup}, to obtain a more robust graph $G'$. 
In particular, if Step 2 of \Cref{alg:alg} does not succeed with good probability, then $G$ has good up-persistence properties, allowing us to apply~\Cref{lem:blowup}, obtaining $G'$ with the needed down-persistence properties. 
This graph $G'$ may have lots of red edges, in which case it is a ``nice red subgraph'' (\Cref{def:red-nice}), and then the up-walk + down-shift (Step 4 in~\Cref{alg:alg}) succeeds with good probability.
Otherwise, we apply the ``red/blue'' lemma to obtain a ``nice blue subgraph'' (\Cref{def:blue-nice}), and then the down-walk (Step 3 in~\Cref{alg:alg}) succeeds with good probability. 
Of course, the lopsidedness in the seed graph can be $|\bX| \gg |\bY|$ in which case the argument is analogous, but with the roles of Steps 2 and 3 being exchanged, and the roles of Steps 4 and 5 being exchanged. In particular, in this case one of Step 2 or Step 5 in~\Cref{alg:alg} succeed with good probability.

%
%
%
%
%

\subsection{Related Work} \label{sec:related}

Monotonicity testing, and in particular that of Boolean functions on the hypergrid, has been studied extensively in the past 25 years~~\cite{Ras99,EKK+00,GGLRS00,DGLRRS99,LR01,FLNRRS02,HK03,AC04,HK04,ACCL04,E04,SS08,Bha08,BCG+10,FR,BBM11,RRSW11,BGJ+12,ChSe13,ChSe13-j,ChenST14,BeRaYa14,BlRY14,ChenDST15,ChDi+15,KMS15,BeBl16,Chen17,BlackCS18,BlackCS20,BKR20,HY22,BrKh+23,BlChSe23}. The problem was first considered by \cite{GGLRS00} for the special case of $n=2$ (the \emph{hypercube}) and for general $n \geq 2$ (the \emph{hypergrid}) by \cite{DGLRRS99}. Most of the early works focused on the hypercube domain, $\{0,1\}^d$. 
Early works defined the problem and described a $O(d)$ tester~\cite{Ras99,GGLRS00}. This was improved by~\cite{ChSe13-j} to give an $\otilde_\eps(d^{7/8})$ tester and this paper introduced the connection to directed isoperimetry.
Subsequently, \cite{KMS15} described their $\otilde_\eps(\sqrt{d})$ non-adaptive, one-sided tester via the directed robust version of Talagrand's isoperimetric theorem, and this dependence on $d$ is tight even for two-sided testers~\cite{FLNRRS02,ChenDST15,Chen17}. The best lower bound for adaptive testers is $\Omega(d^{1/3})$~\cite{Chen17,BeBl16}.


Dodis, Goldreich, Lehman, Raskhodnikova, Ron, and Samorodnitsky~\cite{DGLRRS99} were the first to define the problem of monotonicity testing on general hypergrids, 
and they gave a non-adaptive, one-sided $O((d/\eps)\log^2(d/\eps))$-query tester for the Boolean range. Thus, it was known from the beginning
that independence of $n$ is achievable for Boolean monotonicity testing. 
Berman, Raskhodnikova, and Yaroslavtsev improved the upper bound to $O((d/\eps)\log(d/\eps))$ \cite{BeRaYa14}.
They also show a non-adaptive lower bound of $\Omega(\log(1/\eps)/\eps)$ and prove an adaptivity gap
by giving an adaptive $O(1/\eps)$-query tester for constant $d$.

The first $o(d)$ tester for hypergrids was given by Black, Chakrabarty, Seshadhri~\cite{BlackCS18}.
Using a directed Margulis inequality, they achieve a $\otilde_\eps(d^{5/6}\log n)$ upper bound.
In a subsequent result, they introduce the concept of domain reduction and show that
$n$ can be reduced to $\poly(d\eps^{-1})$ by sub-sampling the hypergrid~\cite{BlackCS20}. Harms and Yoshida
gave a substantially simpler proof of the domain reduction theorem, though their result is not ``black-box"~\cite{HY22}.

Most relevant to our work are the independent, recent results of Black, Chakrabarty, Seshadhri,
and Braverman, Kindler, Khot, Minzer~\cite{BlChSe23,BrKh+23}. These results give $\otilde(\poly(n)\sqrt{d})$
query testers, but with different approaches. The former follows the KMS path, and proves 
a new directed Talagrand inequality over the hypergrid. This theorem is a key tool in our
result. The result of~\cite{BrKh+23} follows a different approach, via reductions to hypercube
monotonicity testing. This is a tricky and intricate construction; naive sub-sampling
approaches to reduce to the hypercube are known to fail (see Sec. 8 of \cite{BlackCS20}). 
Instead, their result uses a notion of ``monotone" embeddings that embed functions over
arbitrary product domains to hypercube functions, while preserving the distance to monotonicity.
However, these embeddings increase the dimension by $\poly(n)$, which appears to be inherent.

\subsection{Discussion}

It is an interesting question to see if the $d^{o(1)}$ dependence can be reduced to polylogarithmic in $d$.
As mentioned above, the loss arises due to our need for a stronger notion of a ``good subgraph''.
Nevertheless, we feel one could obtain an $\otilde(\eps^{-2}\sqrt{d})$-tester. In Section 8 of their paper,~\cite{BlChSe23} conjecture a stronger ``weighted'' isoperimetric theorem which would imply a $\otilde(\eps^{-2}\sqrt{d})$-tester. Our work currently has no bearing on that conjecture, and that is still open. 

At a qualitative level, our work and the result in~\cite{BlChSe23} indicates the Boolean monotonicity testing question on the hypergrid seems more challenging than on the hypercube.
Is there a quantitative separation possible? It is likely that non-adaptive monotonicity testing for general hypergrids is  
harder than hypercubes by ``only" a $\log d$ factor. The gap between
the non-adaptive upper and lower bounds even for hypercubes is $\poly(\log d)$. So, achieving this separation
between hypergrids and hypercubes seems quite challenging, as it would require upper and lower bounds 
of far higher precision.

\section{Technical Preliminaries} \label{sec:prelim}
In this section, we list out preliminary definitions and notations.
Throughout the section, we fix a function $f:[n]^d\to \{0,1\}$ that is $\eps$-far from monotone.
For ease of readability, most proofs of this section are in the appendix. 


\subsection{Violation Subgraphs and Isoperimetry}

\noindent
The {\em fully augmented hypergrid} is a graph whose vertex set is $[n]^d$ where edges connect all
pairs that differ in exactly one coordinate. We direct all edges from lower to higher endpoint.
The edge $(\bx, \by)$ is called an $i$-{\em edge} for $i\in [d]$ if $\bx$ and $\by$ differ in the $i$th coordinate.
We use $I(\bx,\by) = \{\bz \colon \bx \preceq \bz \preceq \by\}$ to denote the points $\bz$ in the segment $[\bx, \by]$, that is, they are the points which differ from $\bx$ and $\by$ only in the $i$th coordinate, and $\bx_i \leq \bz_i \leq \by_i$.
Given a function $f:[n]^d \to \{0,1\}$ the edge $(\bx,\by)$ of the fully augmented hypergrid is a {\em violating/violated edge} if $f(\bx)= 1$ and $f(\by) = 0$.

\begin{definition} \label{def:bip} A \emph{violation subgraph} is a subgraph of the fully augmented hypergrid all of whose edges are violations. 
\end{definition}

Note that any violation subgraph is a bipartite subgraph, where
the bipartition is given by the $1$-valued and $0$-valued points. We henceforth always express a violation subgraph as $G = (\bX, \bY, E)$ such that $\forall \bx \in \bX$, $f(\bx) = 1$ and $\forall \by \in \bY$, $f(\by) = 0$. There are a number of relevant parameters of violation subgraphs
that play a role in our analysis.

\begin{definition} \label{def:par} 
	Fix a violation subgraph $G = (\bX,\bY,E)$ and a point $\bx \in \bX$.
	\begin{itemize}[noitemsep]
		\item 	The {\em degree} of $\bx$ in $G$ is the number of edges in $E$ incident to $\bx$ and is denoted as $D_G(\bx)$.
		\item 	For any coordinate $i \in [d]$, the \emph{$i$-degree} of $\bx$
		in $G$ is the total number of $i$-edges in $E$ incident
		to $\bx$ and is denoted as $\Gamma_{G,i}(\bx)$. Note $D_G(\bx) = \sum_{i=1}^d \Gamma_{G,i}(\bx)$.
		\item The {\em thresholded degree} of $\bx$ in $G$ is the number of coordinates $i\in [d]$ with $\Gamma_{G,i}(\bx) > 0$ and is denoted as $\Phi_G(\bx)$.
	\end{itemize}
Whenever $G$ is clear from context, for brevity, we remove it from the subscript.
\end{definition}
\noindent
Note that $\Phi(\bx)$ is an integer between $0$ and $d$, $\Gamma_i(\bx)$ is an integer between $0$ and $(n-1)$, and $D(\bx)$ is an integer between $0$ and $(n-1)d$.
We next define the following parameters of a violation subgraph $G$.
\begin{definition} \label{def:weak-reg} Consider a violation subgraph $G = (\bX, \bY, E)$.
	\begin{asparaitem}
		\item $D(\bX)$ is the maximum degree of a vertex in $\bX$, that is, $D(\bX) = \max_{\bx\in \bX} D(\bx)$.
		\item For $i\in [d]$, $\Gamma_i(\bX)$ is the maximum $i$-degree in $\bX$, that is, $\Gamma_i(\bX) = \max_{\bx\in \bX} \Gamma_i(\bx)$.
		\item $\Gamma(\bX)$ is the maximum value of $\Gamma_i(\bX)$, that is, $\Gamma(\bX) = \max_{i=1}^d \Gamma_i(\bX)$.
		\item $\Phi(\bX)$ is the maximum \emph{thresholded} degree in $\bX$, that is, $\Phi(\bX) = \max_{\bx\in \bX} \Phi(\bx)$.
		\item $m(G)$ is the number of edges in $G$.
	\end{asparaitem}
	(We analogously define these parameters for $\bY$.)
\end{definition}

We recall the notion of {\em thresholded influence} of a function $f:[n]^d\to \{0,1\}$ as defined in~\cite{BlChSe23,BrKh+23}. 
For any $\bx \in [n]^d$ and $i\in [d]$, $\Phi_f(\bx;i)$ is the indicator for the existence of a violating $i$-edge incident to $\bx$.
The thresholded influence of $f$ at $\bx$ is $\Phi_f(\bx) = \sum_{i=1}^d \Phi_f(\bx;i)$. 
We use the same Greek letter $\Phi$ both for thresholded influence and thresholded degree.
In the graph $G_0 = (\bX_0,\bY_0,E)$ consisting of {\em all} violating edges of the fully augmented hypergrid, 
$\Phi_f(\bx)$ is indeed $\Phi_{G_0}(\bx)$.

For applications to monotonicity testing, we require {\em colored/robust} versions of the thresholded influence. For hypercubes this was suggested by~\cite{KMS15}, and 
for hypergrids this was generalized by~\cite{BlChSe23}. Let $\chi:E\to \{0,1\}$ be an {\em arbitrary} coloring of all the edges of the fully augmented hypergrid to $0$ or $1$.
Given a point $\bx$ and $i\in [d]$, $\Phi_{f,\chi}(\bx;i)$ is the indicator of a violating $i$-edge $e$ incident to $\bx$ with $\chi(e) = f(\bx)$.
The colored thresholded influence of $\bx$ with respect to $\chi$ is simply $\Phi_{f,\chi}(\bx) = \sum_{i=1}^d \Phi_{f,\chi}(\bx;i)$. The Talagrand objective of $f$ 
is defined as
\[
	\dtal{f} := \min_{\chi:E\to \{0,1\}}  \sum_{\bx\in [n]^d} \sqrt{\Phi_{f,\chi}(\bx) } \text{.}
\]
The main result of~\cite{BlChSe23} is the following.
\begin{theorem}[Theorem 1.4, \cite{BlChSe23}]\label{thm:bcs23}
	If $f\colon[n]^d \to \{0,1\}$ is $\eps$-far from monotone, then $\dtal{f} = \Omega(\frac{\eps n^d}{\log n})$.
\end{theorem}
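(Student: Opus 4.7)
The plan is to follow a multi-scale (dyadic) decomposition approach, which explains the $\log n$ factor in the bound. The directed Talagrand inequality over $\{0,1\}^d$ due to KMS only handles unit-length edges, so the task is to generalize it to the hypergrid where edges can have jump sizes anywhere in $\{1,\ldots,n-1\}$. The natural idea is to partition the violating edges by the dyadic scale of their jump length, apply a hypercube-like Talagrand bound to each scale separately, and then combine.

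First, I would relate $\eps$ to an aggregate measure of violations via a ``coordinate sort'' argument. Let $f^*$ be the monotone function obtained by sorting $f$ in each coordinate in turn (the Ahlswede-Daykin/shift argument). Because this operation produces a monotone function and is $1$-Lipschitz with respect to $\Delta$, one gets that $\eps n^d \le \Delta(f,f^*)$, and moreover each point that changes value under sorting can be charged to some violating edge. The trouble is that a violated pair $(\bx,\by)$ may be far apart in one coordinate, which is where the $\log n$ enters.

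Next, I would perform the dyadic decomposition. For each coordinate $i$ and each scale $k \in \{0,1,\ldots,\log n\}$, let $E_{i,k}$ be the set of violating $i$-edges $(\bx,\by)$ with $2^k \le |\by_i - \bx_i| < 2^{k+1}$, and define $\Phi^{(k)}_{f,\chi}(\bx) := |\{i : \text{a }\chi\text{-colored edge of }E_{i,k}\text{ is incident to }\bx\}|$. The key observation is that within a fixed scale $k$, edges look ``locally hypercube-like'': one can sub-sample or collapse each coordinate into a $\{0,1\}$ structure at that scale. Then I would apply the KMS-style robust directed Talagrand inequality within each scale to obtain, for a worst-case coloring $\chi_k$ adapted to scale $k$,
\[
\sum_{\bx} \sqrt{\Phi^{(k)}_{f,\chi_k}(\bx)} \;\ge\; c \cdot \eps_k \cdot n^d,
\]
where $\eps_k$ is the ``distance contribution'' from scale $k$ sorts.

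Finally, since successive coordinate sorts at all scales eliminate the distance to monotonicity, $\sum_{k=0}^{\log n} \eps_k \ge \Omega(\eps)$, so by pigeonhole some $k^*$ has $\eps_{k^*} \ge \Omega(\eps/\log n)$. Since $\Phi_{f,\chi}(\bx) \ge \Phi^{(k^*)}_{f,\chi}(\bx)$ for any $\chi$ agreeing with $\chi_{k^*}$ on scale-$k^*$ edges, we conclude $\dtal{f} \ge \Omega(\eps n^d / \log n)$. The main obstacle is the scale-wise Talagrand inequality: the coloring $\chi$ in the definition of $\dtal{f}$ is a single global coloring, not one per scale, so some care is needed to show that a coloring tailored to the ``critical'' scale $k^*$ does not hurt the bound elsewhere. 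This is handled by noting that $\dtal{f}$ takes a minimum only over colorings, so a lower bound using any particular coloring (in this case the one that is pessimal for scale $k^*$) carries over. A secondary obstacle is ensuring that the reduction from scale $k$ to a $\{0,1\}^d$-like structure preserves the directed (monotonicity) aspect, which requires using a monotone embedding rather than a naive projection, preventing spurious cancellation of violations.
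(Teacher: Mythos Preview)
This theorem is not proved in the present paper; it is cited as Theorem 1.4 of \cite{BlChSe23} and used here as a black box. So there is no ``paper's own proof'' to compare against within this document.

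That said, your sketch has a genuine logical gap in the handling of the coloring. Recall that $\dtal{f} = \min_{\chi} \sum_{\bx} \sqrt{\Phi_{f,\chi}(\bx)}$. To prove a \emph{lower} bound on $\dtal{f}$ you must show that \emph{every} coloring $\chi$ satisfies $\sum_{\bx}\sqrt{\Phi_{f,\chi}(\bx)} \ge \Omega(\eps n^d/\log n)$. Your final paragraph asserts the opposite: that because $\dtal{f}$ is a minimum, ``a lower bound using any particular coloring \ldots\ carries over.'' This is backwards --- exhibiting one coloring with large objective only gives an \emph{upper} bound on the minimum. The adversary gets to choose $\chi$, and in particular can choose it to be bad for your critical scale $k^*$. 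So the scale-wise KMS bound you invoke, which is for a coloring $\chi_k$ adapted to scale $k$, does not transfer to the adversarial $\chi$ that achieves the minimum.

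This is not a minor bookkeeping issue: the robustness to arbitrary colorings is exactly the hard part of the KMS theorem on the hypercube, and it is equally the hard part of the hypergrid generalization in \cite{BlChSe23}. The actual proof there does not proceed by a simple per-scale pigeonhole followed by an appeal to hypercube KMS; it requires establishing the colored isoperimetric statement directly on the hypergrid, with the $\log n$ arising from a more careful argument that works uniformly over $\chi$. Your plan would need a fundamentally different mechanism to handle the adversarial coloring before it could be completed.
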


We stress that the RHS above only loses a $\log n$ factor, which allows for domain reduction
(setting $n = \poly(d)$). This is what yields the nearly optimal $\sqrt{d}$ dependence
and independence on $n$ in the tester query complexity. 

We extend the definition of $\dtal{f}$ to arbitrary violation subgraphs as follows. Given a violation subgraph $G = (\bX,\bY,E)$ and a bicoloring $\chi:E\to \{0,1\}$ of its edges,
for $\bz\in \bX\cup \bY$ and $i\in [d]$ let $\Phi_{G,\chi}(\bz;i) = 1$ if there is a violating $i$-edge $e \in E(G)$ incident to $\bz$ such that $\chi(e) = f(\bz)$, and $\Phi_{G,\chi}(\bz;i) = 0$ otherwise.
Define $\Phi_{G,\chi}(\bx) = \sum_{i=1}^d \Phi_{G,\chi}(\bx;i)$.
Note, if $\chi \equiv 1$, that is every edge is colored $1$, then $\Phi_{G,\chi}(\bx) = \Phi_G(\bx)$ for $\bx\in \bX$ and $\Phi_{G,\chi}(\by) = 0$ for all $\by\in \bY$. Similarly, if $\chi \equiv 0$, then $\Phi_{G,\chi}(\by) = \Phi_G(\by)$ for $\by \in \bY$ and $\Phi_{G,\chi}(\bx) = 0$ for $\bx\in \bX$. 

\begin{definition} \label{def:tal-graph} Given a  violation subgraph $G = (\bX, \bY, E)$, we define $\dtal{G} \eqdef \min_\chi \sum_{\bz \in \bX \cup \bY}[\sqrt{\Phi_{G,\chi}(\bz)}]$, where the min is taken over all edge bicolorings $\chi:E(G) \to \{0,1\}$. \end{definition}

If $G_0$ is the subgraph of all violations in the fully augmented hypergrid, then~\Cref{thm:bcs23} states $\dtal{G_0} = \Omega(\eps n^d/\log n)$.
We make a couple of observations.

\begin{observation}\label{obs:oo}
	For any violation subgraph $G = (\bX, \bY, E)$, 
	\begin{itemize}
		\item $D(\bX) \leq \Gamma(\bX)\Phi(\bX)$ and $D(\bY) \leq \Gamma(\bY)\Phi(\bY)$.
		\item $m(G) \geq \dtal{G}$.
	\end{itemize}
\end{observation}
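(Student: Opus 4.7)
The plan is to establish both bullets via elementary counting; no machinery beyond the definitions in \Def{par}, \Def{weak-reg}, and \Def{tal-graph} should be required.

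For the first bullet, I would fix any $\bx \in \bX$ and unfold $D(\bx) = \sum_{i=1}^d \Gamma_i(\bx)$ directly from \Def{par}. By definition of $\Phi(\bx)$, only $\Phi(\bx)$ of the summands $\Gamma_i(\bx)$ are nonzero, and each nonzero summand is at most $\Gamma_i(\bX) \le \Gamma(\bX)$. This yields $D(\bx) \le \Phi(\bx)\cdot \Gamma(\bX) \le \Phi(\bX)\cdot \Gamma(\bX)$; taking the max over $\bx \in \bX$ gives the claim for $\bX$, and an identical argument handles $\bY$.

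For the second bullet, I would pick \emph{any} edge bicoloring $\chi : E(G) \to \{0,1\}$ (e.g.\ $\chi \equiv 1$) and compare against $\dtal{G}$ via the minimum in \Def{tal-graph}. Since $\Phi_{G,\chi}(\bz)$ is a nonnegative integer, $\sqrt{\Phi_{G,\chi}(\bz)} \le \Phi_{G,\chi}(\bz)$ pointwise, and hence
\[
\dtal{G} \;\le\; \sum_{\bz \in \bX\cup\bY} \sqrt{\Phi_{G,\chi}(\bz)} \;\le\; \sum_{\bz \in \bX\cup\bY} \Phi_{G,\chi}(\bz) \;=\; \sum_{\bz \in \bX\cup\bY}\sum_{i=1}^d \Phi_{G,\chi}(\bz;i).
\]
It then remains to bound the last double sum by $m(G)$. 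The key observation is that for any violating $i$-edge $e = (\bx,\by) \in E$, \emph{exactly one} of the endpoints $\bz \in \{\bx, \by\}$ satisfies $\chi(e) = f(\bz)$ (namely $\bx$ if $\chi(e)=1$, and $\by$ if $\chi(e)=0$), so $e$ can serve as a witness for at most one pair $(\bz,i)$ with $\Phi_{G,\chi}(\bz;i)=1$. Since each such pair requires at least one witnessing edge in $E$ along coordinate $i$, the number of nonzero indicators $\Phi_{G,\chi}(\bz;i)$ is at most $|E| = m(G)$, completing the chain.

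There is no real obstacle here; the only subtlety is to make the edge-to-pair charging in the second bullet precise, to ensure we do not double-count an edge as witness for both its endpoints. Once that one-line argument is made, both inequalities follow immediately from the definitions.
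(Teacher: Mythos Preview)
Your proposal is correct and essentially matches the paper's argument. For the first bullet the paper does exactly what you describe, and for the second bullet the paper fixes $\chi \equiv 1$ and uses the chain $m(G) = \sum_{\bx \in \bX} D(\bx) \ge \sum_{\bx \in \bX} \Phi_G(\bx) \ge \sum_{\bx \in \bX} \sqrt{\Phi_G(\bx)} = \sum_{\bz} \sqrt{\Phi_{G,\chi\equiv 1}(\bz)} \ge \dtal{G}$, which is your charging argument specialized to $\chi \equiv 1$ (where only vertices in $\bX$ contribute and $\Phi_G(\bx) \le D_G(\bx)$ is immediate).
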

\begin{proof}
	For any $\bx \in \bX$, we have $D(\bx) = \sum_{i=1}^d \Gamma_i(\bx) = \sum_{i: \Gamma_i(\bx) > 0} \Gamma_i(\bx) \leq \left(\max_i \Gamma_i(\bx)\right) \cdot \Phi(\bx) \leq \Gamma(\bX)\Phi(\bX)$. The proof is analogous for $\bY$. For the second bullet, observe that
	$m(G) = \sum_{\bx\in \bX} D(\bx) \geq \sum_{\bx \in \bX} \Phi(\bx) \geq \sum_{\bx \in \bX} \sqrt{\Phi(\bx)} = \sum_{\bz \in \bX\cup \bY} \sqrt{\Phi_{G,\chi\equiv 1}(\bz)} \geq \dtal{G}$.
\end{proof}

\begin{remark} \label{rem:delta_d_remark}
	Since we assume that $n$ is at most a polynomial in $d$, we fix a constant $c$ such that $nd \leq d^c$. 
	Throughout the remainder of the paper, we consider $d$ to be at least a large constant and fix $\delta = \frac{1}{\floor{\log\log nd}}$. 
	As a result, we use bounds such as ``$d^\delta \geq \poly\log d$'' or ``$d - C\sqrt{d} \geq d/3$'' without explicitly reminding 
	the reader that $d$ is large. We use $\Theta(\delta)$ to denote $C \cdot \delta$ for some unspecified, but fixed constant $C$. 
\end{remark}

\subsection{Equivalent Formulations of the Random Walk Distribution} \label{sec:equivalent_dists}

Recall the random walk distribution described in \Cref{def:walkdist}. It is useful to think of this walk as first sampling a random hypercube and then taking a random walk on that hypercube. The following definition describes the appropriate distribution over sub-hypercubes in $[n]^d$. 

\begin{definition} [Hypercube Distribution] \label{def:hypercube-dist} We define the following distribution $\HH_{n,d}$ over sub-hypercubes in $[n]^d$. For each coordinate $i \in [d]$:
	\begin{enumerate}
		\item Choose $q_i \in_R \{1,2,\ldots,\log n\}$ uniformly at random.
		\item Choose a uniform random interval $I_i$ of size $2^{q_i}$ in $\Z_n$.
		\item Choose a uniform random pair $a_i < b_i$ from $I_i$.
	\end{enumerate}
	Output $\bH = \prod_{i=1}^d \{a_i,b_i\}$. When $n$ and $d$ are clear from context, we abbreviate $\HH = \HH_{n,d}$. \end{definition}

It will also be useful for us to think of our random walk distribution as first sampling $\bx \in_R [n]^d$, then sampling a random hypercube which contains $\bx$, and then taking a random walk from $\bx$ in that hypercube. The appropriate distribution over hypercubes containing a point $\bx$ is defined as follows.

\begin{definition} [Conditioned Hypercube Distribution] \label{def:con-hypercube-dist} Given $\bx \in [n]^d$, we define the conditioned sub-hypercube distribution $\HH_{n,d}(\bx)$ as follows. For each $i \in [d]$:
	\begin{enumerate}
		\item Choose $q_i \in_R \{1,2,\ldots,\log n\}$ uniformly at random.
		\item Choose a uniform random interval $I_i$ in $\mathbb{Z}_n$ of size $2^{q_i}$ such that $\bx_i \in I_i$.
		\item Choose a uniform random $c_i \in_R I_i \setminus \{\bx_i\}$.
		\item Set $a_i = \min(\bx_i,c_i)$ and $b_i = \max(\bx_i,c_i)$.
	\end{enumerate}
	Output $\bH = \prod_{i=1}^d \{a_i,b_i\}$. When $n$ and $d$ are clear from context we will abbreviate $\HH(\bx) = \HH_{n,d}(\bx)$. \end{definition}

The random walk distribution in a hypercube $\bH$ is defined as follows.

\begin{definition} [Hypercube Walk Distribution] \label{def:hypercube-walk} For a hypercube $\bH = \prod_{i=1}^d \{a_i,b_i\}$, a point $\bx \in \bH$, and a walk length $\tau$, we define the upward random walk distribution $\mathcal{U}_{\bH,\tau}(\bx)$ over points $\by \in \bH$ as follows.
	\begin{enumerate}
		\item Pick a uniform random subset $R \subseteq [d]$ of $\tau$ coordinates.
		\item Generate $\by$ as follows. For every $r \in [d]$, if $r \in R$ and $\bx_r = a_r$, set $\by_r = b_r$. Else, set $\by_r = \bx_r$.
	\end{enumerate}  
	Analogously, the downward random walk distribution $\mathcal{D}_{\bH,\tau}(\bx)$ is defined precisely as above, but instead in step 2 if $r \in R$ and $\bx_r = b_r$, we set $\by_r = a_r$, and otherwise $\by_r = \bx_r$. \end{definition}

We observe that the following walk distributions are equivalent and defer the proof to the appendix \Sec{dists}. 

\begin{fact} \label{fact:dists} The following three distributions over pairs $(\bx,\by) \in [n]^d \times [n]^d$ are all equivalent.
	\begin{enumerate}
		\item $\bx \in_R [n]^d$, $\by \sim \cU_{\tau}(\bx)$.
		\item $\bH \sim \HH$, $\bx \in_R \bH$, $\by \sim \cU_{\bH,\tau}(\bx)$.
		\item $\bx \in_R [n]^d$, $\bH \sim \HH(\bx)$, $\by \sim \cU_{\bH,\tau}(\bx)$.
	\end{enumerate}
	The analogous three distributions defined using downward random walks are also equivalent. \end{fact}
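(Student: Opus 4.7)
The plan is to establish the two equivalences (1)$\equiv$(3) and (2)$\equiv$(3); transitivity then gives all three. The downward versions follow by the same argument with $\min/\max$ roles reversed, so I would only write out the upward case.

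For (1)$\equiv$(3), the idea is that distribution (3) contains \emph{extra} sampling that is marginally irrelevant to the pair $(\bx,\by)$. In (1), after picking $\bx\in_R[n]^d$ and $R\subseteq[d]$ of size $\tau$, only the coordinates $r\in R$ get auxiliary samples $(q_r,I_r,c_r)$, and $\by_r=\bx_r$ for $r\notin R$. In (3), after picking $\bx\in_R[n]^d$ we sample $(q_i,I_i,c_i)$ for \emph{every} $i\in[d]$, form $\bH=\prod_i\{a_i,b_i\}$ with $a_i=\min(\bx_i,c_i)$, $b_i=\max(\bx_i,c_i)$, and then the hypercube walk picks $R$ and sets $\by_r=b_r$ whenever $\bx_r=a_r$ and $r\in R$, and $\by_r=\bx_r$ otherwise. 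The condition ``$\bx_r=a_r$'' is exactly ``$c_r>\bx_r$,'' so on $R$ the rule for $\by_r$ matches step 3 of \Def{walkdist} verbatim. For $r\notin R$, both distributions set $\by_r=\bx_r$, and the extra $(q_r,I_r,c_r)$ generated in (3) are independent of everything else and can be marginalized out. Thus $(\bx,\by)$ has identical laws under (1) and (3).

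For (2)$\equiv$(3), I would reduce to a coordinate-by-coordinate statement. Both distributions factor across coordinates once we fix $R$ (and $R$ is picked the same way in both), so it suffices to show that for each $i$ the joint law of $(\bx_i,q_i,I_i,c_i)$ agrees, where in (2) $c_i$ denotes the endpoint of $\{a_i,b_i\}$ distinct from $\bx_i$. A direct computation: in (2), for any valid tuple $(x,q,I,c)$ with $\{x,c\}\subseteq I$ and $|I|=2^q$,
\[
\Pr_{(2)}[\bx_i=x,q_i=q,I_i=I,c_i=c] \;=\; \tfrac{1}{\log n}\cdot\tfrac{1}{n}\cdot\tfrac{1}{\binom{2^q}{2}}\cdot\tfrac{1}{2} \;=\; \tfrac{1}{n\,\log n\cdot 2^q(2^q-1)},
\]
using that $\Z_n$ contains $n$ cyclic intervals of each size $2^q$, that an unordered pair is uniform among $\binom{2^q}{2}$ choices in $I$, and that $\bx_i$ is a uniform endpoint. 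In (3),
\[
\Pr_{(3)}[\bx_i=x,q_i=q,I_i=I,c_i=c] \;=\; \tfrac{1}{n}\cdot\tfrac{1}{\log n}\cdot\tfrac{1}{2^q}\cdot\tfrac{1}{2^q-1},
\]
since $2^q$ cyclic intervals of size $2^q$ contain $x$, and $c$ is uniform in $I\setminus\{x\}$. The two expressions are equal. Given agreement of the tuple $(\bx_i,q_i,I_i,c_i)$, both distributions apply the \emph{same} hypercube walk $\cU_{\bH,\tau}$, so the joint law of $(\bx,\by)$ matches.

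The main obstacle is simply bookkeeping the cyclic interval counts correctly: one has to be careful that $\Z_n$ admits exactly $n$ intervals of each size $2^q$ (indexed by starting position), exactly $2^q$ of which contain a fixed element, and that this holds uniformly for all $q\in\{1,\dots,\log n\}$ (which is where the assumption that $n$ is a power of $2$ is convenient). Once this combinatorial fact is in hand, both equivalences reduce to the short calculations above.
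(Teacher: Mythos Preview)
Your proposal is correct and takes a genuinely different, more streamlined route than the paper. The paper fixes a target pair $(u,v)$ and directly computes $\Pr[(\bx,\by)=(u,v)]$ under distributions (1) and (2) separately, matching the expressions term-by-term after expanding over all sets $R \supseteq S = \{i: v_i > u_i\}$ and handling the cases $i \in S$ and $i \in R\setminus S$ with distinct conditional-probability calculations; it then dispatches (1)$\equiv$(3) by an analogous conditional computation. You instead use (3) as a pivot: your (1)$\equiv$(3) is a clean coupling/marginalization argument requiring no computation at all, and your (2)$\equiv$(3) reduces to equality of the single-coordinate joint law of $(\bx_i,q_i,I_i,c_i)$, a two-line calculation. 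Your approach avoids the sum over $R$ and the case split on $S$ versus $R\setminus S$, making it shorter and more conceptual; the paper's approach has the minor advantage of producing explicit transition-probability formulas along the way, which could be reused if needed elsewhere.
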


It is also convenient to define the shift distribution for hypercubes.

\begin{definition} [Shift Distributions for Hypercube Walks] Given a hypercube $\bH$, the up-shift distribution from $\bx \in \bH$, denoted $\cU\cS_{\bH,\tau}(\bx)$ is the distribution of $\bx'-\bx$, where $\bx' \sim \cU_{\bH,\tau}(\bx)$. The down-shift distribution from $\by \in \bH$, denoted $\cD\cS_{\bH,\tau}(\by)$ is the distribution of $\by-\by'$, where $\by' \sim \cD\cS_{\bH,\tau}(\by)$. \end{definition}

\subsection{Influence and Persistence}

We define the following notion of influence for our random walk distribution \Def{walkdist}.

\begin{definition} \label{def:hypergrid-influence} The total and negative influences of $f \colon [n]^d \to \{0,1\}$ are defined as follows.
	\begin{itemize}
		\item $\widetilde{I}_f = \EX_{\bx \in [n]^d}\left[d \cdot \Pr_{\by \sim \cU_{1}(\bx)}[f(\bx) \neq f(\by)]\right]$
		\item $\widetilde{I}_f^- = \EX_{\bx \in [n]^d}\left[d \cdot \Pr_{\by \sim \cU_{1}(\bx)}[f(\bx) > f(\by)]\right]$
	\end{itemize}
\end{definition}


The probability of the tester (\Alg{alg}) finding a violation in step (2b) when $\tau = 1$ is precisely $\widetilde{I}_f^-/d$. Recall the definition of the distribution $\HH$ in \Def{hypercube-dist}. For brevity, for a hypercube $\bH = \prod_{i=1}^d \{a_i,b_i\}$ sampled from $\HH$, we abbreviate $I_{\bH} := I_{f|_{\bH}}$ and $I_{\bH}^- := I_{f|_{\bH}}^-$. That is, $I_{\bH}(\bx)$ is the number of coordinates $i$ for which $\bx_i = a_i$, and $f(\bx_1,\ldots,\bx_{i-1},b_i,\bx_{i+1},\ldots,\bx_d) \neq f(\bx)$. If $f(\bx) = 1$, then $I_{\bH}^-(\bx)$ is the number of coordinates $i$ for which $\bx_i = a_i$, and $f(\bx_1,\ldots,\bx_{i-1},b_i,\bx_{i+1},\ldots,\bx_d) = 0$, and if $f(\bx) = 0$, then $I_{\bH}^-(\bx) = 0$. Note that these definitions are such that influential edges are always charged to the endpoint with $a_i$ in the $i$'th coordinate so that we do not double count. Then, $I_{\bH} = \EX_{\bx \in \bH}[I_{\bH}(\bx)]$ and $I_{\bH}^- = \EX_{\bx \in \bH}[I_{\bH}^-(\bx)]$.


\begin{claim} \label{clm:inf-hypercube} $\widetilde{I}_f = \EX_{\bH\sim \HH}\left[I_{\bH}\right]$ and $\widetilde{I}_f^- = \EX_{\bH\sim \HH}\left[I_{\bH}^-\right]$.
\end{claim}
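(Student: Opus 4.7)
The plan is to invoke \Fact{dists} to rewrite the hypergrid random walk as a two-stage procedure (first sample a hypercube, then walk inside it) and then recognize the inner quantity as precisely the hypercube influence.

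First I would apply \Fact{dists} with $\tau = 1$ to rewrite
\[
\widetilde{I}_f \;=\; \EX_{\bx \in [n]^d}\!\left[d \cdot \Pr_{\by \sim \cU_{1}(\bx)}[f(\bx) \neq f(\by)]\right]
\;=\; \EX_{\bH \sim \HH}\,\EX_{\bx \in \bH}\!\left[d \cdot \Pr_{\by \sim \cU_{\bH,1}(\bx)}[f(\bx) \neq f(\by)]\right],
\]
where the second equality uses the equivalence between distribution (1) and distribution (2) of \Fact{dists}. This is the main structural step: it swaps the order of ``pick a coordinate and interval / walk'' into ``pick the full hypercube first, then walk inside.''

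Next I would unpack $\cU_{\bH,1}(\bx)$ using \Def{hypercube-walk}. With $\tau = 1$ the walk selects a uniform random coordinate $r \in [d]$, and $\by = \bx$ unless $\bx_r = a_r$, in which case $\by$ flips coordinate $r$ to $b_r$. Thus
\[
d \cdot \Pr_{\by \sim \cU_{\bH,1}(\bx)}[f(\bx) \neq f(\by)] \;=\; \bigl|\bigl\{r \in [d] : \bx_r = a_r \text{ and } f(\bx^{(r)}) \neq f(\bx)\bigr\}\bigr|,
\]
where $\bx^{(r)}$ denotes $\bx$ with its $r$th coordinate changed from $a_r$ to $b_r$. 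This matches the pointwise description of $I_{\bH}(\bx)$ given after \Def{hypergrid-influence}, so the inner expectation is $\EX_{\bx \in \bH}[I_{\bH}(\bx)] = I_{\bH}$. Chaining the two equalities yields $\widetilde{I}_f = \EX_{\bH \sim \HH}[I_{\bH}]$.

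For $\widetilde{I}_f^-$, the argument is identical except that the event $f(\bx) \neq f(\by)$ is replaced by $f(\bx) > f(\by)$; then the coordinates being counted in the hypercube walk are exactly those $r$ with $\bx_r = a_r$, $f(\bx) = 1$, and $f(\bx^{(r)}) = 0$, which is the pointwise definition of $I_{\bH}^-(\bx)$. I do not anticipate any technical obstacle: the entire content of the claim is the change-of-variables encoded in \Fact{dists} together with a one-line unpacking of the single-step walk, so the main (and only) thing to be careful about is lining up the definitions of $\cU_{\bH,1}$ and $I_{\bH}(\bx)$ exactly.
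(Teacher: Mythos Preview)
Your proposal is correct and follows essentially the same approach as the paper's own proof: invoke \Fact{dists} with $\tau=1$ to pass to the hypercube-then-walk formulation, then identify $d\cdot\Pr_{\by \sim \cU_{\bH,1}(\bx)}[f(\bx)\neq f(\by)]$ with $I_{\bH}(\bx)$ via \Def{hypercube-walk}, and handle the negative influence analogously. The paper's proof is slightly terser but the structure is identical.
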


\begin{proof} By \Fact{dists}, the distribution $(\bx \in_R [n]^d,\by \sim \cU_1(\bx))$ is equivalent to first sampling $\bH \sim \HH$, then sampling $(\bx \in_R \bH, \by \sim \cU_{\bH,1}(\bx))$. Recalling \Cref{def:hypercube-walk}, observe that $\Pr_{\by \sim \cU_{\bH,1}(\bx)}[f(\bx) \neq f(\by)] = I_{\bH}(\bx)/d$. Putting these observations together yields
	\[
	\widetilde{I}_f = \EX_{\bx \in [n]^d}\left[d \cdot \Pr_{\by \sim \cU_{1}(\bx)}[f(\bx) \neq f(\by)]\right] = \EX_{\bH\sim \HH} \EX_{\bx \in \bH}\left[I_{\bH}(\bx)\right] = \EX_{\bH \sim \HH}[I_{\bH}]
	\]
	An analogous argument proves the statement for negative influence. \end{proof}
\noindent

The following claim states that if the normal influence is (very) large, then so is the negative influence. This is a simple generalization of, and indeed easily follows from, Theorem 9.1 in~\cite{KMS15}. The proof can be found in~\Sec{app:inf-per}.


\begin{restatable}{claim}{totaltonegativeinf}\label{clm:total-to-negative-inf} If $\widetilde{I}_f > 9\sqrt{d}$, then $\widetilde{I}_f^- > \sqrt{d}$. \end{restatable}

Next, we define the notion of persistent points. This is similar to that in~\cite{KMS15} with a parameterization that we need for our purpose.

\begin{definition} \label{def:pers} Given a point $\bx \in [n]^d$, a walk length $\tau$, and a parameter $\beta \in (0,1)$, we say that $\bx$ is $(\tau,\beta)$-up-persistent if
	\[
	\Pr_{\by \sim \cU_{\tau}(\bx)}[f(\by) \neq f(\bx)] \leq \beta\text{.}
	\]
	Similarly, $\bx$ is called $(\tau,\beta)$-down-persistent if the above bound holds when $\by$ is drawn from the downward walk distribution, $\cD_{\tau}(\bx)$. If both bounds hold, then we call $\bx$ $(\tau,\beta)$-persistent. \end{definition}

The following claim upper bounds the fraction of non-persistent points. This is a generalization of Lemma 9.3 in~\cite{KMS15}.
The proof is deferred to~\Sec{app:inf-per}.

\begin{restatable}{claim}{persistence} \label{clm:persistence} If $\widetilde{I}_f \leq 9\sqrt{d}$, then the fraction of vertices that are not $(\tau,\beta)$-persistent is at most $C_{per} \frac{\tau}{\beta\sqrt{d}}$ where $C_{per}$ is a universal constant. \end{restatable}

We make another simple but technical assumption used later on. If there are a ``lot'' of points $\bx$ with $f(\bx) = 1$ that are {\em not}
up-persistent, it is easy to show that the algorithm succeeds with high probability. The interesting case is when this does not occur, which is codified in~\Cref{ass:tech-ass}.
Recall the definition of $\delta, c$ from~\Cref{rem:delta_d_remark}.

\begin{claim}\label{clm:tech-ass-just}
	If there exists a $\tau$ which is a power of $2$ such that the number of points $\bx$ with $f(\bx) = 1$ {\em and} 
	which are {\em not} $(\tau - 1, \log^{-5} d)$-up-persistent is $ \geq \frac{\eps n^d}{d^{1/2 + 7 c\delta}}$, then \Cref{alg:alg} succeeds with 
	probability at least $\eps^2 d^{-1/2 + \Theta(\delta)}$.
\end{claim}

\begin{proof} By the definition of persistence and the tester definition, \Alg{alg} rejects with the desired probability when it runs the upward path tester with walk length $\tau-1$ (step (2) of \Alg{alg}). \end{proof}

\begin{assumption}\label{ass:tech-ass}
	For any power of two, $\tau =2^p$, the number of points  $\bx$ with $f(\bx) = 1$ {\em and} 
	which are {\em not} $(\tau - 1, \log^{-5} d)$-up-persistent is $< \frac{\eps n^d}{d^{1/2 + 7c\delta}}$.	
\end{assumption}

\subsection{The Middle Layers, Typical Points, and Walk Reversibility} \label{sec:middle}

All proofs in this section are deferred to \Sec{app:reversible}.

\begin{definition} In a hypercube $\hyp{d}$, the \emph{$c$-middle layers} consist of all points with Hamming weight in the range $[d/2 \pm \sqrt{4cd\log(d/\eps)}]$. Given a $d$-dimensional hypercube $\bH$, we let $\bH_c \subseteq \bH$ denote the $c$-middle layers of $\bH$. \end{definition}

We state a bound on the number of points in the hypercube which lie in the middle layers. This follows from a standard Chernoff bound argument.

\begin{restatable}{claim}{hypmiddle} \label{clm:hyp-middle} For a $d$-dimensional hypercube $\bH$ and $c \geq 1$, we have $|\bH_c| \geq \left(1-(\eps/d)^c\right) \cdot 2^d$. \end{restatable}


We now define the notion of typical points in $[n]^d$. Recall the distribution $\HH_{n,d}$ (\Def{hypercube-dist}) over random sub-hypercubes in $[n]^d$ and the distribution $\HH_{n,d}(\bx)$ (\Def{hypercube-dist}) over random sub-hypercubes in $[n]^d$ that contain $\bx$. 
A point $\bx$ is $c$-\emph{typical} if for most sub-hypercubes containing $\bx$, the point $\bx$ is present in their $c$-middle layers.

\begin{definition} [Typical Points] \label{def:typical} Given $c \geq 1$, a point $\bx \in [n]^d$ is called $c$-\emph{typical}
	if
	\[
	\Pr_{\bH \sim \HH(\bx)}\left[\bx \in \bH_c \right] \geq 1-(\eps/d)^5 \text{.}
	\]
\end{definition}

\begin{restatable}{claim}{typical} \label{clm:typical} For any $\eps \in (0,1)$ and $c \geq 6$, 
	\[
	\Pr_{\bx \in_R [n]^d}\left[\bx \text{ is } c\text{-typical}\right] \geq 1-(\eps/d)^{c-5} \text{.}
	\]
\end{restatable}


Intuitively, a short random walk from a typical point will always lead to point that is almost as typical. This is formalized as follows.

\begin{restatable}[Translations of Typical Points]{claim}{translatetypical} \label{clm:translate-typical} Suppose $\bx \in [n]^d$ is $c$-typical. Then for a walk length $\tau \leq \sqrt{d}$, every point $\bx' \in \text{supp}(\cU_{\tau}(\bx)) \cup \text{supp}(\cD_{\tau}(\bx))$ is $(c+\frac{\tau}{\sqrt{d}})$-typical. \end{restatable}


Recall the three equivalent ways of expressing the walk distribution in \Fact{dists}.
We define the random walk probabilities only on points in the middle layers. This setup
allows for the approximate reversibility of \Lem{prob-rev}.

\begin{definition} \label{def:walkpdf} Consider two vertices $\bx \prec \bx' \in [n]^d$ and a walk length $\tau$. We define
	\begin{align} \label{eq:walkpdf}
		p_{\bx,\tau}(\bx') = \EX_{\bH \sim \HH(\bx)}\left[\mathbf{1}\left(\bx \in \bH_{100} ~\wedge~ \bx'\in \bH_{100}\right) \cdot \Pr_{\bz \sim \cU_{\bH,\tau}(\bx)}[\bz = \bx']\right]
	\end{align}
	to be the probability of reaching $\bx'$ by a random walk from $\bx$, only counting the contribution when the random walk is taken on a hypercube that contains $\bx$ and $\bx'$ in the $100$-middle layers. We analogously define $p_{\bx',\tau}(\bx)$ using the downward random walk distribution in $\bH$. \end{definition}

Consider $\bx \prec \bx'$ are two points in the middle layers.
The following lemma asserts that the probability of reaching from $\bx$ to $\bx'$ via an upward walk of length $\ll \sqrt{d}$ is similar to the probability of reaching from $\bx'$ to $\bx$ via downward walk of the same length. 

\begin{restatable}[Reversibility Lemma]{lemma}{probrev} \label{lem:prob-rev} For any points $\bx \prec \bx' \in [n]^d$ and walk length $\ell \leq \sqrt{d}/\log^5(d/\eps)$, we have 
\[
p_{\bx,\ell}(\bx') = (1 \pm \log^{-3} d) p_{\bx',\ell}(\bx)\text{.}
\]
\end{restatable}


\section{Red Edges, Blue Edges, and Nice Subgraphs} \label{sec:redbluedef}

We now set the stage to prove~\Cref{thm:main}.
The first definition is that of {\em mostly-zero-below} points. These are points from which a downward random walk (\Cref{def:walkdist}) 
leads to a point where the function evaluates to $0$ with high probability.

\begin{definition} \label{def:predict} A point $\bz$ is called \emph{$\ell$-mostly-zero-below}, or $\ell$-$\mzb$, if $\Pr_{\bz' \sim \downdist{\bz}{\ell}}[f(\bz') = 0] \geq 0.9$. \end{definition}

To appreciate the utility of $\ell$-$\mzb$ points, consider the following scenario. Suppose $\bx$ is a point with $f(\bx) = 1$ and is $(\ell, \beta)$-down-persistent (\Cref{def:pers}) for some small $\beta$. Next suppose an upward random walk from $\bx$ reaches an $\ell$-$\mzb$ point $\bz$. Then, we claim that Step 4 of \Alg{alg} would succeed with constant probability in finding a violated edge. 
An $\ell$-length downward walk from $\bx$, due to down-persistence, would lead to a $\bx'$ with $f(\bx') =1$ with probability at least $1-\beta$.
The same $\ell$-length downward walk from $\bz$ would lead to a $\bz'$ with $f(\bz') = 0$ with $\geq 0.9$ probability, since $\bz$ is mostly-zero-below. 
Since $(\bx,\bz)$ are comparable, so would be $(\bx',\bz')$. By a union bound, $(\bx',\bz')$ is a violation with probability at least $0.9-\beta$.

The next definition describes edges $(\bx,\by)$ of the violation subgraph most of whose internal vertices lead to $\mzb$-points via an upward random walk. 
Rather un-creatively, we call such edges {\em red}. 
Recall that $I(\bx,\by) = \{\bz \colon \bx \preceq \bz \preceq \by\}$ denotes the closed interval of points from $\bx$ to $\by$. 

\begin{definition} \label{def:red} A violated edge $(\bx,\by)$ is called \emph{red} for walk length $\ell$ if
	$$ \Pr_{\bz \in I(\bx,\by)\text{, }\bz' \sim \updist{\bz}{\ell}} [\bz' \ \textrm{is $\ell$-$\mzb$}] \geq 0.01 \text{.}$$
	When $\ell$ is clear by context, we call the edge red.
\end{definition}


There may be no $\ell$-$\mzb$ points for the lengths we choose, that is, a downward walk from any point leads to a point where the function evaluates to $1$. 
In that case, Step 3 of \Alg{alg} is poised to succeed; for any violating edge $(\bx, \by)$, if we start from $\by$ then the downward walk should give a violation.
This motivates the next definition which recognizes violated edges $(\bx,\by)$ most of whose internal vertices lead to points where the function evaluates to $1$ via 
a downward random walk. We call such edges {\em blue}.

\begin{definition} \label{def:blue} A violated edge $(\bx,\by)$ is called \emph{blue}
	for walk length $\ell$ if
	$$ \Pr_{\bz \in I(\bx,\by)\text{, }\bz' \sim \downdist{\bz}{\ell}} [f(\bz') = 1] \geq 0.01\text{.}$$
		When $\ell$ is clear by context, we simply call the edge blue.
\end{definition}


\noindent
We note that a violating edge $(\bx,\by)$ may be {\em both} red and blue, or perhaps more problematically, {\em neither} red nor blue. 
The next two definitions capture certain ``nice'' violation subgraphs consisting of either red or blue edges. 
In~\Sec{anal}, we show that if either of these subgraphs exist then we can prove the tester works with the desired probability. In~\Sec{obt} we show that one of these subgraphs must exist. Recall, $\Phi_H(\bx)$ is the {\em thresholded degree} of $\bx$ in the subgraph $H$ and $\delta > (\log \log nd)^{-1}$ is fixed (\Cref{rem:delta_d_remark}).

\begin{definition}[$(\sigma, \tau)$-nice red violation subgraph]\label{def:red-nice} Given a parameter $\sigma \in (0,1)$ and a walk length $\tau$, a violation subgraph  $H(\bA,\bB,E)$ is called a $(\sigma, \tau)$-nice red violation subgraph if the following hold. 
\begin{enumerate}[label=(\alph*)]
	\item All edges in $H$ are red for walk length $\tau-1$.
	\item All vertices in $\bA$ are $(\tau-1,0.6)$-down-persistent.
	\item $\sigma \Phi_H(\bx) \leq d^{1/2}$ for all $\bx \in \bA$.
	\item $\sigma \sum_{\bx \in \bA} \Phi_H(\bx) \geq \eps^2 \cdot n^d \cdot d^{-\Theta(\delta)}$.
	\item $d^{1/2 - \Theta(\delta)} \geq  \tau \geq \sigma \cdot d^{1/2 - \Theta(\delta)}$.
\end{enumerate}
\end{definition}
The first two conditions dictate that the subgraph is nice with respect to the length of the walk. In particular, the edges are red with respect to this length
and furthermore the $1$-vertices are down-persistent. As explained before the definition of red edges, this property is crucial for the success of Step 4 of \Alg{alg}.
The fourth condition says that the total thresholded degree of the $1$-vertices in $H$ is large. I.e. for an average vertex $\bx \in \bA$, there will be many coordinates $i$ for which there is an $i$-edge in $H$ incident to $\bx$. The third condition says that the max thresholded degree of vertices in $\bA$ is not too large and so the total thresholded degree from the fourth condition must be somewhat spread among the vertices in $\bA$.
The final condition shows that the length of the walk is large compared to $\sigma$. 
Note, if $\sigma = \Theta(1)$ and the third bullet point's right hand side was $1$ instead of $\sqrt{d}$, 
we would be in the case of a large matching of violated edges, which was the ``simple case'' discussed in~\Sec{ourideas}.

The next definition is the analogous case of blue edges. When this type of subgraph exists we argue that Step 3 of \Alg{alg} succeeds. Note that Step 3 is the downward path test (without a shift) and so we do not need a persistence property like condition (b) in the previous definition. This definition has the same conditions on the thresholded degree as the previous definition, but with respect to the $0$-vertices of the subgraph.

\begin{definition}[$(\sigma, \tau)$-nice blue violation subgraph]\label{def:blue-nice}
	Given a parameter $\sigma \in (0,1)$ and a walk length $\tau$, a violation subgraph  $H(\bA,\bB,E)$ is called a $(\sigma, \tau)$-nice blue violation subgraph if the following hold. 
	\begin{enumerate}[label=(\alph*)]
		\item All edges in $H$ are blue for walk length $\tau-1$.
		\item $\sigma \Phi_H(\by) \leq d^{1/2}$ for all $\by \in \bB$.
		\item $\sigma \sum_{\by\in \bB} \Phi_H(\by) \geq  \eps^2 \cdot n^d \cdot d^{-\Theta(\delta)}$.
		\item $d^{1/2 - \Theta(\delta)} \geq \tau \geq \sigma \cdot d^{1/2 - \Theta(\delta)}$.
	\end{enumerate}
\end{definition}

\noindent The following lemma captures the utility of the above definitions. Its proof can be found in~\Sec{anal}.

\begin{lemma}[Nice Subgraphs and Random Walks]\label{lem:tester-analysis}
	Suppose for a power of two $\tau\geq 2$, there exists a $(\sigma,\tau)$-nice red subgraph or a $(\sigma,\tau)$-nice blue subgraph. Then \Alg{alg} finds a violating pair, and thus rejects $f$, with probability at least $\eps^2 \cdot d^{-(1/2 + \Theta(\delta))}$.
\end{lemma}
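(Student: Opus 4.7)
The plan is to handle the nice red and nice blue cases separately; in both, Step 1 of~\Cref{alg:alg} picks the correct power of two $\tau$ with probability $1/(1+\lceil \log d\rceil) \geq d^{-O(\delta)}$, so I condition on this and show the relevant step rejects with probability at least $\eps^2 d^{-1/2 - O(\delta)}$. For a $(\sigma, \tau)$-nice red subgraph $H = (\bA, \bB, E)$, the relevant step is Step 4 (upward walk $+$ downward shift). With probability $|\bA|/n^d$, the starting point $\bx$ drawn in Step 4a lies in $\bA$. Fix $\bx \in \bA$ and a coordinate $i$ on which $\bx$ has an $H$-incident red edge $(\bx, \by) \in E$; the length-$\tau$ upward walk uses coordinate $i$ with probability $\tau/d$. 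Conditioned on using coordinate $i$, I would decompose the walk as a coord-$i$ jump from $\bx_i$ to some $c_i > \bx_i$, followed by a length-$(\tau - 1)$ walk on the remaining coordinates. When $c_i \in (\bx_i, \by_i)$ the intermediate point $\bz$ lies in $I(\bx, \by)$, and the red property of $(\bx, \by)$ implies that the walk endpoint $\bw$ is $(\tau-1)$-$\mzb$ with constant probability. The downward shift $\bs \sim \downshift{\bx}{\tau-1}$ then produces $(\bx - \bs, \bw - \bs)$: by condition (b) of~\Cref{def:red-nice} (down-persistence of $\bx$), $\Pr[f(\bx - \bs) = 1] \ge 0.4$; after reinterpreting $\bw - \bs$ as approximately distributed like $\downdist{\bw}{\tau-1}$, the $\mzb$ property of $\bw$ gives $\Pr[f(\bw - \bs) = 0] \ge 0.9$; a union bound yields a violation with constant conditional probability. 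Aggregating the $\Phi_H(\bx)$ coordinate contributions over $\bx \in \bA$ via conditions (c), (d), (e) of~\Cref{def:red-nice} produces $\Pr[\text{Step 4 rejects}] \ge \Omega(\eps^2 d^{-1/2 - O(\delta)})$.

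For a $(\sigma, \tau)$-nice blue subgraph $H = (\bA, \bB, E)$, the relevant step is Step 3 (downward path test) with walk length $\ell = \tau$. Starting at $\by \in \bB$ (probability $|\bB|/n^d$), for each of the $\Phi_H(\by)$ coordinates $i$ with blue edge $(\bx, \by) \in E$, the length-$\tau$ downward walk uses coord $i$ with probability $\tau/d$ and decomposes as a coord-$i$ jump (down to some $c_i < \by_i$) followed by a length-$(\tau - 1)$ downward walk on the remaining coords. When $c_i \in (\bx_i, \by_i)$, the intermediate point $\bz$ lies in $I(\bx, \by)$, and the blue property then yields $\Pr[f(\bz') = 1] \ge $ constant for the walk endpoint $\bz'$. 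Since $f(\by) = 0$, the pair $(\bz', \by)$ is a violation. Summing analogously over $\by \in \bB$ and applying conditions (b), (c), (d) of~\Cref{def:blue-nice} yields the target bound. This case is simpler than the red case because Step 3 uses no shift, so no separate persistence argument is needed.

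The hardest part of the formalization is the decomposition argument that identifies the coord-$i$ jump of the hypergrid walk with the ``uniform random interior point $\bz \in I(\bx, \by)$'' appearing in the red/blue definitions: the jump picks a random interval and endpoint $c_i$, whose distribution is neither uniform over $[\bx_i, \by_i]$ nor even guaranteed to land in that segment. I would handle this via the Hypercube Distribution view (\Cref{def:hypercube-dist} and~\Cref{fact:dists}), which couples walks originating at different points through a shared sub-hypercube, together with the Reversibility Lemma~\Cref{lem:prob-rev} to interchange walks from the two endpoints of a segment in the middle-layer regime. A closely related subtlety is that in Step 4 the shift $\bs \sim \downshift{\bx}{\tau-1}$ is applied to \emph{both} $\bx$ and $\bw$; for the $\mzb$ property of $\bw$ to fire one must argue that $\bw - \bs$ is distributed approximately like $\downdist{\bw}{\tau-1}$, which again uses hypercube-coupling together with the typical-points machinery of~\Cref{sec:middle}. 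Finally, care is needed in bounding $\Phi_H(\bx) \tau/d$ when summing ``walk uses some good coordinate'' events; here the combination of conditions (c) and (e) in the nice-subgraph definitions controls the relevant product up to a $d^{O(\delta)}$ factor, absorbed into the stated bound.
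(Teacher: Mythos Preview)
Your high-level structure matches the paper exactly: split into red/blue cases, condition on the correct $\tau$ (probability $\log^{-1} d$), analyze Step~4 (red) or Step~3 (blue), decompose the $\tau$-step walk as a coordinate-$i$ jump followed by a $(\tau-1)$-step walk, and aggregate over $\bA$ (resp.\ $\bB$) using $\Pr[R \cap C_{\bx} \neq \emptyset] \geq 1 - \exp(-\tau|C_{\bx}|/d)$ together with conditions (c)--(e) to linearize the exponential.

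Where you diverge is precisely on the two steps you flag as hardest, and in both places the machinery you propose (Reversibility Lemma, hypercube coupling, typical points) is not the right tool and it is not clear it would work. The paper's arguments are much more elementary.

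\emph{Interior-point uniformity.} The Reversibility Lemma compares $p_{\bx,\ell}(\bx')$ with $p_{\bx',\ell}(\bx)$; it says nothing about the marginal of the coordinate-$i$ jump within a segment. The paper instead conditions directly on the randomness of~\Cref{def:walkdist}: having fixed the coordinate $r$ and the red edge $(\bx,\bx+a\be_r)$, it conditions on (i) the scale $q_r$ satisfying $2^{q_r}\in[2a,4a]$ (one of $\log n$ choices), (ii) the interval $I_r$ containing $[\bx_r,\bx_r+a]$ (probability $\geq 1/2$ given (i)), and (iii) $c_r\in[\bx_r,\bx_r+a]$ (probability $\geq 1/4$ given (i),(ii), since this segment is at least a quarter of $I_r$). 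Conditioned on all three events, $c_r$ is \emph{exactly} uniform on $[\bx_r,\bx_r+a]$, which is precisely the hypothesis needed to invoke the red/blue definitions. The total loss is $\Theta(\log^{-1} n)$, absorbed into $d^{O(\delta)}$ since $n\leq\poly(d)$.

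\emph{Shift coupling.} You propose typical-points machinery to argue that $\bw-\bs$ with $\bs\sim\downshift{\bx}{\tau-1}$ is approximately distributed as $\downdist{\bw}{\tau-1}$. The paper's argument is again direct: the two shift distributions $\downshift{\bx}{\tau-1}$ and $\downshift{\bw}{\tau-1}$ are generated by the same process except on coordinates in $\supp(\bw-\bx)$, which has size $\leq\tau$. The random set $R$ of $\tau-1$ coordinates hits this support with probability at most $\tau^2/d=o(1)$ by condition~(e), so the two distributions have total-variation distance $o(1)$. No typicality or reversibility is invoked.

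In short, your plan is correct but you are reaching for heavy tools from \Cref{sec:redblue} and \Cref{sec:blowup} (where they are used for a \emph{different} purpose, namely relating walks from translated starting points) when the tester analysis only needs direct conditioning on the walk's internal randomness.
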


\noindent The following lemma shows that one of the two nice subgraphs always exists. Its proof can be found in \Sec{obt}.

\begin{restatable}[Existence of nice subgraphs]{lemma}{mainlemma}\label{lem:good_subgraph} Let $\eps \geq d^{-1/2}$ and let $c$ be a constant such that $nd \leq d^c$. 
	Suppose $f \colon [n]^d \to \{0,1\}$ is $\eps$-far from monotone , $\widetilde{I}_f \leq 9\sqrt{d}$, and~\Cref{ass:tech-ass} holds for $\delta = \frac{1}{\floor{\log\log nd}}$.
	There exist $0 < \sigma_1 \leq \sigma_2 < 1$, 	a violation subgraph $H(\bA,\bB,E)$,
	and a power of two $\tau \geq 2$, 
 such that $H$ is either a $(\sigma_1, \tau)$-nice red subgraph or a $(\sigma_2, \tau)$-nice blue subgraph.
\end{restatable}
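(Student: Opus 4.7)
}

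The plan is to follow the four-step roadmap outlined at the end of \Sec{ourideas}: (i) extract a structurally regular ``seed'' violation subgraph from the isoperimetric theorem, (ii) upgrade it to have good persistence properties via the persist-or-blow-up lemma, (iii) dichotomize on whether red edges are abundant, and (iv) if not, invoke the red/blue lemma to exchange non-red structure for a blue subgraph.

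First I would apply the Talagrand bound $\dtal{G_0} = \Omega(\eps n^d/\log n)$ from \Cref{thm:bcs23} to the graph $G_0$ of all violating edges, and feed this into the peeling/seed-regularity statement (\Cref{lem:peeling}) to extract a violation subgraph $G_\text{seed}=(\bX,\bY,E)$ together with parameters $\Phi_X,\Phi_Y,\Gamma_X,\Gamma_Y$ satisfying a two-sided regularity in both the standard degrees and the thresholded degrees, and with $m(G_\text{seed}) \geq \eps^2 n^d / d^{O(\delta)}$. This is the step where the $d^{o(1)}$ loss arises, since the peeling must simultaneously control both kinds of degrees. Then, invoking \Cref{clm:typical} with $c=99$, I would discard the non-typical vertices; because only a $(\eps/d)^{O(1)}$ fraction of the domain is non-typical, the remaining subgraph (call it $G_1$) still has the regularity properties and $m(G_1)\geq \eps^2 n^d/d^{O(\delta)}$. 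By \Cref{rem:wlog}, I may assume $|\bX|\leq |\bY|$.

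Next I would fix the walk length. Since the path tester needs the walk to cross an edge incident to a $\bY$-vertex, the natural choice is the largest power of two $\tau$ with $\tau \leq d^{1/2-O(\delta)}/\sqrt{\Phi_Y}$, and correspondingly set $\sigma_2 := \tau/d^{1/2-O(\delta)}$ so that $\sigma_2\Phi_Y \leq \sqrt{d}$ and $\tau \geq \sigma_2 d^{1/2-O(\delta)}$ hold automatically. For the red case, I would set $\sigma_1 := \tau/d^{1/2-O(\delta)}$ as well (or a value determined by $\Phi_X$, using $\Phi_X\leq \Phi_Y$ from the assumption $|\bX|\leq|\bY|$ combined with edge-count accounting). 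With the hypothesis $\widetilde{I}_f\leq 9\sqrt{d}$, \Cref{clm:persistence} ensures that all but a $\polylog(d)\cdot \tau/\sqrt{d}$ fraction of points are $(\tau,\log^{-5}d)$-up-persistent. Then I would invoke the persist-or-blow-up lemma (\Cref{lem:blowup}) iteratively: each application either delivers a subgraph $G'=(\bX',\bY',E')$ in which $\bX'$-vertices are additionally $(\tau-1,0.6)$-down-persistent, or else doubles $|\bX'|$ while preserving degree bounds and edge count. Since $|\bX'|\leq n^d$, at most $O(\log n)\leq O(\log d)$ iterations occur, so the process terminates with a subgraph $G'$ that inherits the regularity of $G_1$, whose $\bY$-side still has thresholded degree $\leq\Phi_Y$, and whose $\bX$-side is down-persistent.

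Now comes the dichotomy on the edges of $G'$ with respect to walk length $\tau-1$. If at least half of the edges of $G'$ are red, then taking $H$ to be $G'$ restricted to its red edges gives a candidate $(\sigma_1,\tau)$-nice red subgraph: condition (a) holds by construction, (b) by the persistence upgrade in the previous step, (c) by the degree regularity from peeling, (d) because $\sigma_1 \sum_{\bx\in \bA}\Phi_H(\bx) \gtrsim \sigma_1 \cdot m(G')/\Gamma_X \gtrsim \eps^2 n^d/d^{O(\delta)}$ using the lower bound $m(G')\geq \eps^2 n^d/d^{O(\delta)}$ and $\Gamma_X\Phi_X$ degree regularity, and (e) by the choice of $\tau,\sigma_1$. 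Otherwise, at least half the edges of $G'$ are non-red, so after discarding the red edges (losing only a constant factor) I apply \Cref{lem:redblue} to the remaining graph $G''$: the hypotheses (no red edges, up-persistence, typicality) all hold from the previous steps, and the conclusion yields a blue violation subgraph $H(\bL,\bR,E'')$ with $m(H)\geq m(G'')/7$ and with the degree bounds $\Gamma(\bR)\leq \Gamma(\bY)$, $D(\bR)\leq D(\bY)$. This $H$ then satisfies all four conditions of a $(\sigma_2,\tau)$-nice blue subgraph: (a) immediately, (b) from $\Phi_H(\by)\leq \Phi_G(\by)\leq \Phi_Y$ and the choice of $\sigma_2$, (c) by the analogous accounting on the $\bR$-side using $m(H)\geq \Omega(m(G'))$ and the regularity bound $D(\bR)\leq \Gamma_Y\Phi_Y$, and (d) by the definition of $\tau,\sigma_2$.

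The main obstacle is the persist-or-blow-up step. The seed subgraph can be badly lopsided, with $|\bX|\ll |\bY|$, and then the walk length dictated by $\Phi_Y$ is too long to guarantee persistence of $\bX$-vertices via Markov's inequality alone. The technical crux is to design the blow-up step so that each translation both expands $\bX'$ and preserves the thresholded-degree regularity on $\bY'$, despite the fact that flow-based translation naturally controls only standard degrees. Balancing these constraints, and justifying that the iteration terminates before the edge count is eroded below $\eps^2 n^d/d^{O(\delta)}$, is the most delicate portion of the argument.
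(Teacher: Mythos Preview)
Your four-step outline (seed via \Cref{lem:peeling}, then \Cref{lem:blowup}, then the red/blue dichotomy, then \Cref{lem:redblue}) matches the paper. But your verification of the ``nice'' conditions has a real gap. In the blue case you assert $\Phi_H(\by)\leq\Phi_G(\by)$ to check condition~(b). This is false: the subgraph $H(\bL,\bR,E')$ returned by \Cref{lem:redblue} is built by \emph{translating} edges, so it is not a subgraph of the input graph and the vertices of $\bR$ need not even lie in $\bY'$. The red/blue lemma bounds only $D$ and $\Gamma$, never $\Phi$ --- this is exactly the ``thresholded vs.\ standard degree'' obstacle the introduction flags. The paper's fix is an extra sub-case split (needed in both the red and blue cases): set $\bR'=\{\by\in\bR:\Phi_H(\by)>\sqrt d/\sigma_{\bY}\}$; if $|\bR'|\geq d^{-O(\delta)}|\bY|$ keep only $\bR'$ (then condition~(c) follows from the size of $\bR'$ alone), otherwise delete $\bR'$ and lower-bound $\sum_\by\Phi_H(\by)\geq m(H)/\Gamma(\bR)$. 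The crucial observation you are missing is that after $s$ blow-up rounds one has both $m(H)\gtrsim 2^s m(G)$ and $\Gamma(\bR)\leq 2^s\Gamma(\bY)$, so the $2^s$ factors cancel and the seed regularity of \Cref{lem:peeling} converts the ratio back to $|\bY|\Phi(\bY)$.

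Two smaller corrections. First, the paper parametrizes by densities: $\sigma_1=|\bX|/n^d$, $\sigma_2=|\bY|/n^d$, and $\tau\approx\sigma_{\bY}\,d^{1/2-O(\delta)}$, not by $\Phi_Y$ as you propose; this choice is what makes the non-persistence count from \Cref{clm:persistence} (a density statement) line up with the edge accounting, and what lets Talagrand-plus-Jensen yield $\sigma_{\bY}\sum_\by\Phi_G(\by)\gtrsim\eps^2 n^d$ at the end. Second, the blow-up case of \Cref{lem:blowup} doubles $m(G)$ (while keeping $D(\bX')\leq D(\bX)$ and doubling $D(\bY')$), not $|\bX'|$ directly; termination follows from $m(G_i)\leq nd\cdot n^d$, giving $s\leq\log^3 d$ rather than $O(\log n)$.
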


\section{Tester Analysis}\label{sec:anal}

In this section we prove~\Cref{thm:main}. First, in \Sec{testeranalysis1} we prove \Lem{tester-analysis} which is the main tester analysis. Then in \Sec{testeranalysis2} we combine \Lem{tester-analysis}, \Lem{good_subgraph} (which will be proven in~\Sec{obt}), and \Clm{total-to-negative-inf} to prove \Cref{thm:main}.


\subsection{Main Analysis: Proof of \Cref{lem:tester-analysis}} \label{sec:testeranalysis1}

There are two cases depending on whether we have a nice red subgraph or a nice blue subgraph.
In Case 1, Step 4 of \Alg{alg} proves the lemma while in Case 2, Step 3 of \Alg{alg} proves the lemma. The proofs are similar, but we provide both for completeness.

	\subsubsection{Case 1: there exists a $(\sigma,\tau)$-nice red subgraph}


Suppose there exists a $(\sigma,\tau)$-nice red subgraph, $H(\bA, \bB, E)$, where $\tau \geq 2$ is a power of two. In Step 1, the tester in \Alg{alg} chooses $\tau$ as the walk length with probability at least $\log^{-1} d$. Thus, in the rest of the analysis we will condition on this event. 

Given $\bx \in \bA$, let $C_{\bx} \subseteq [d]$ denote the set of coordinates for which $\bx$ has an outgoing edge in $H$. Note $|C_{\bx}|= \Phi_H(\bx)$.
Now recall Step 4 of \Alg{alg}. We choose $\bx$ uniformly at random and sample $\by$ from $\cU_{\ell}(\bx)$ for $\ell \in \{\tau-1,\tau\}$.
Let these two samples of $\by$ be called $\by'$ and $\by$, respectively. The reason for choosing the two different walk lengths (as alluded to in~\Cref{fn:fn1}) will be made clear below. 

We first lower bound the probability that the sampled $\bx$ lies in  $\bA$ and $R \cap C_{\bx} \neq \emptyset$ where $R \subseteq [d]$ is a random set of $\tau$ coordinates. Let $\cE_1$ denote this event. The main calculation is to lower bound the probability of this event as follows.
\begin{align*}
	\Pr[\cE_1] &= \frac{1}{n^d} \sum_{\bx \in \bA} \Pr[R \cap C_{\bx} \neq \emptyset] \geq \frac{1}{n^d} \sum_{\bx \in \bA} \left[1-\left(1-\frac{|C_{\bx}|}{d}\right)^{\tau}~\right] \geq \frac{1}{n^d} \sum_{\bx \in \bA} \left[1 - \exp\left(-\frac{\tau |C_{\bx}|}{d}\right)\right]  
\end{align*}
The RHS can only decrease if we replace $\tau$ with its lower bound (\Cref{def:red-nice}, (e)) of $\sigma\cdot d^{1/2-\Theta(\delta)}$. Also, observe that
$\frac{\sigma d^{1/2-\Theta(\delta)} |C_{\bx}|}{d} = \frac{\sigma \Phi_H(\bx)}{d^{1/2+\Theta(\delta)}} \leq 1$ using our upper bound, $\sigma \Phi_H(\bx) \leq d^{1/2}$ (\Cref{def:red-nice}, (c)). 
Now, using $e^{-x} \leq 1-\frac{x}{2}$ for $x \leq 1$, the exponential term in the RHS is at most $1-\frac{\sigma \Phi_{H}(\bx)}{2d^{1/2+\Theta(\delta)}}$, yielding
\begin{align}
	\Pr[\cE_1] &\geq \frac{\sigma}{2d^{1/2+\Theta(\delta)}} \cdot \frac{1}{n^d} \sum_{\bx \in \bA} \Phi_H(\bx) \underbrace{\geq}_{\text{(\Cref{def:red-nice}, (d))}} \frac{\eps^2}{ d^{1/2+\Theta(\delta)}}
\end{align}

The event $\cE_1$ asserts that the tester has chosen a point $\bx \in \bA$ and there is at least one  $r\in R$ for which there exists a red edge $(\bx,\bx + a \be_r) \in E$ for some integer $a > 0$ in the subgraph $H$. Fix the smallest such $r \in R \cap C_{\bx}$ and the corresponding edge in $H$. 

Recall the random walk process in \Def{walkdist}. We define the following good events.

\begin{itemize}
	\item $\cE_2$: Step (2a) chooses $q_r$ satisfying: if $a \leq n/4$, then $2^{q_r} \in [2a, 4a]$; if $a > n/4$, then $2^{q_r} = n$.
        \item $\cE_3$: Step (2b) chooses the interval $I_r \supseteq [\bx_r + 1,\bx_r+a]$.
        \item $\cE_4$: Step (2c) chooses $c_r$ uniformly~
        from $[\bx_r + 1,\bx_r+a]$.
	\item $\cE_5$: $\by$ or $\by'$ is $(\tau-1)$-mostly-zero-below as per~\Cref{def:predict}.
	\item $\cE_6$: $f(\by-\bs) = 0$ or $f(\by' - \bs) = 0$ for $\bs$ chosen in Step 4 of \Alg{alg} from $\downshift{\bx}{\tau-1}$.
	\item $\cE_7$: $f(\bx-\bs) = 1$ for $\bs$ chosen in Step 4 of \Alg{alg} from $\downshift{\bx}{\tau-1}$.
\end{itemize}

Firstly, note that $\Pr[\cE_2] = \log^{-1} n$ for both cases of the edge length, $a$. Now, suppose $a \leq n/4$. Then, $\Pr[\cE_3 ~|~ \cE_2] \geq 1/2$ by the condition $2^{q_r} \geq 2a$ and $\Pr[\cE_4 ~|~ \cE_2, \cE_3] \geq 1/4$ by the condition $2^{q_r} \leq 4a$. If $a > n/4$, then $\Pr[\cE_3 ~|~ \cE_2] = 1$, since in this case $I_r = [n]$ and again $\Pr[\cE_4 ~|~ \cE_2, \cE_3] \geq 1/4$ since $(\bx_r,\bx_r + a]$ is at least a fourth of the entire line, $[n]$.

Now, the edge $(\bx, \bx + a\be_r)$ is red for walk length $\tau - 1$. Recall \Cref{def:red}; we get that if we sample $\bz \in [\bx, \bx+a \be_r]$ uniformly at random and then 
sample $\bz' \sim \cU_{\tau - 1}(\bz)$, then $\bz'$ is $(\tau - 1)$-$\mzb$ with probability $\geq 0.01$. Note that the interval $\bz$ is drawn from is a {\em closed} interval containing $\bx$.
On the other hand $c_r$ above is chosen from $[\bx_r + 1, \bx_r + a]$. To account for this, we branch into two possibilities.
Either $\bz' \sim \cU_{\tau - 1}(\bx)$ is $(\tau - 1)$-$\mzb$ with probability $\geq 0.01$, and in this case $\by'$ is $(\tau-1)$-mostly-zero-below.
Or $\bz'  \sim \cU_{\tau - 1}(\bz)$  where $\bz$ itself is sampled from $\bz \sim [\bx + \be_r, \bx+a \be_r]$ is $(\tau - 1)$-$\mzb$ with probability $\geq 0.01$, and in this case $\by$ is $(\tau-1)$-mostly-zero-below. In sum, we have $\Pr[\cE_5 ~|~ \cE_4] \geq 0.01$. 

If $\by$ is $(\tau - 1)$-mostly-zero-below, then if we sample $\hat{\bs}$ from $\downshift{\by}{\tau-1}$ we get $f(\by-\hat{\bs}) = 0$ with probability $\geq 0.9$.
Now note that $\downshift{\by}{\tau-1}$ and $\downshift{\bx}{\tau-1}$ differ only when the set $R\subseteq [d]$ chosen in~\Cref{def:walkdist} contains
a coordinate in $\supp(\by-\bx)$. Since $|\supp(\by-\bx)| \leq \tau$, $|R|\leq \tau$, and $\tau = o(\sqrt{d})$, we have $\Pr_R[R \cap \supp(\by-\bx) \neq \emptyset] \leq \tau^2/d = o(1)$. Therefore, when $\bs$ is drawn from $\downshift{\bx}{\tau-1}$, 
we get $f(\by-\bs) = 0$ with probability $\geq 0.9(1-o(1)) \geq 0.8$. 
Analogously, if $\by'$ is $(\tau - 1)$-mostly-zero-below, then $f(\by' - \bs) = 0$ with probability $\geq 0.8$.
In sum, we get that $\Pr[\cE_6 ~|~ \cE_5] \geq 0.8$.
 
 Finally, all points in $\bA$ are $(\tau-1,0.6)$-down-persistent (\Cref{def:pers}) and so $\Pr[\cE_7 ~|~ \bx\in A] \geq 0.4$.  

\noindent
Now, let's put everything together. The final success probability of the tester is at least $\Pr[\cE_6 \wedge \cE_7]$, which by a union bound and the reasoning above, is at least
\begin{align*}
	&\left(1-\Pr[\neg \cE_6 ~|~ \cE_5] - \Pr[\neg \cE_7 ~|~\bx\in A]\right) \cdot \Pr\left[\bigwedge_{i=1}^5 \cE_i\right] \\
	&\geq (1-0.2-0.6) \cdot \frac{\eps^2}{d^{1/2+\Theta(\delta)}} \cdot \frac{1}{\log n} \cdot \frac{1}{2} \cdot \frac{1}{4} \cdot \frac{1}{100} \geq \frac{\eps^2}{d^{1/2+\Theta(\delta)}}
\end{align*}
where in the last inequality we used $n \leq \poly(d)$. This completes the proof when the nice subgraph is red.

\subsubsection{Case 2: there exists a $(\sigma,\tau)$-nice blue subgraph}

Suppose there exists a $(\sigma,\tau)$-nice blue subgraph, $H(\bA, \bB, E)$, where $\tau \geq 2$ is a power of two. As in Case 1, in Step 1, the tester in \Alg{alg} chooses $\tau$ as the walk length with probability at least $\log^{-1} d$. Thus, in the rest of the analysis we will condition on this event. 	
Now recall Step 3 of \Alg{alg}. 
We choose $\by$ uniformly at random and sample $\bx$ from $\cD_{\ell}(\by)$ for $\ell \in \{\tau-1,\tau\}$.
Let these two samples of $\bx$ be called $\bx'$ and $\bx$, respectively. 

	Given $\by \in \bB$, let $C_{\by} \subseteq [d]$ denote the set of coordinates for which $\by$ has an incoming edge in $H$. Note $|C_\by| = \Phi_H(\by)$.
We first lower bound the probability that $\by \in \bB$ and $R \cap C_{\by} \neq \emptyset$ where $R \subseteq [d]$ is a random set of $\tau$ coordinates. Let $\cE_1$ denote this event. The main calculation is to lower bound the probability of this event as follows.
	\begin{align*}
		\Pr[\cE_1] &= \frac{1}{n^d} \sum_{\by \in \bB} \Pr[R \cap C_{\by} \neq \emptyset] \geq \frac{1}{n^d} \sum_{\by \in \bB} \left[1-\left(1-\frac{|C_{\by}|}{d}\right)^{\tau}~\right] \geq \frac{1}{n^d} \sum_{\by \in \bB} \left[1 - \exp\left(-\frac{\tau |C_{\by}|}{d}\right)  \right]
	\end{align*}
As in Case 1, the RHS can only decrease if we replace $\tau$ with its lower bound (\Cref{def:blue-nice}, (d)) of $\sigma\cdot d^{1/2-\Theta(\delta)}$, and a similar argument 
as in Case 1 gives
\begin{align}
	\Pr[\cE_1] &\geq \frac{\sigma}{d^{1/2+\Theta(\delta)}} \cdot \frac{1}{n^d} \sum_{\by \in \bB} \Phi_H(\by) \underbrace{\geq}_{\text{(\Cref{def:blue-nice}, (c))}} \frac{\eps^2}{d^{1/2+\Theta(\delta)}}
\end{align}
%

As in Case 1,
	the event $\cE_1$ says that the tester has chosen a point $\by \in \bB$ and there exists $r \in R$ such that there exists an edge $(\by - a \be_r,\by) \in E$ in the subgraph $H$ for some integer $a > 0$. Fix the smallest $r \in R \cap C_{\by}$ and the corresponding edge in $H$. 
Now define the following good events for the remainder of the tester analysis. 
	
	\begin{itemize}
		\item $\cE_2$: Step (2a) chooses $q_r$ satisfying: if $a \leq n/4$, then $2^{q_r} \in [2a, 4a]$; if $a > n/4$, then $2^{q_r} = n$.
		\item $\cE_3$: Step (2b) chooses the interval $I_r \supseteq [\by_r - a,\by_r-1]$.
		\item $\cE_4$: Step (2c) chooses $c_r$ uniformly~
		from $[\by_r - a,\by_r-1]$.
		\item $\cE_5$: $f(\bx) = 1$ or $f(\bx') = 1$.
	\end{itemize}
	
	The final success probability of the tester is at least $\Pr[\wedge_{i=1}^5 \cE_i]$. Firstly, note that $\Pr[\cE_2] = \log^{-1} n$ for both cases of the edge length, $a$. Suppose $a \leq n/4$. Then, $\Pr[\cE_3 ~|~ \cE_2] \geq 1/2$ by the condition $2^{q_r} \geq 2a$ and $\Pr[\cE_4 ~|~ \cE_2, \cE_3] \geq 1/4$ by the condition $2^{q_r} \leq 4a$. If $a > n/4$, then $\Pr[\cE_3 ~|~ \cE_2] = 1$, since in this case $I_r = [n]$ and again $\Pr[\cE_4 ~|~ \cE_2, \cE_3] \geq 1/4$.

    Now, the edge $(\by-a \be_r, \by)$ is blue for walk length $\tau-1$. Recall \Cref{def:blue}; we get that if we sample  $\bz \in [\by-a \be_r, \by]$ uniformly at random and then 
    sample $\bz' \sim \cD_{\tau - 1}(\bz)$, then $f(\bz') = 1$ with probability $\geq 0.01$. 
    We split in cases depending on whether the contribution to the probability comes primarily from $\bz = \by$. This is equivalent
    to splitting the interval $[\by-a \be_r,\by]$ into $[\by-a\be_r, \by-\be_r]$ and the singleton $\by$.
    Either $\bz' \sim \cD_{\tau -1}(\by)$ satisfies $f(\bz')=1$ with probability $\geq 0.01$ or the following occurs.
    When $\bz \in [\by-a \be_r, \by - \be_r]$ is chosen uniformly at random, $\bz' \sim \cD_{\tau - 1}(\bz)$  satisfies $f(\bz')=1$ with probability $\geq 0.01$.
    In the former case, the distribution of $\bz'$ is the same as $\bx'$ and in the latter it is same as $\bx$.
    So, 
     we have $\Pr[\cE_5 ~|~ \cE_4] \geq 0.01$. Putting everything together, we have
	\begin{align*}
		\Pr\left[\bigwedge_{i=1}^5 \cE_i\right] \geq \frac{\eps^2}{d^{1/2+\Theta(\delta)}} \cdot \frac{1}{\log n} \cdot \frac{1}{2} \cdot \frac{1}{4} \cdot \frac{1}{100} \geq \frac{\eps^2}{d^{1/2+\Theta(\delta)}}
	\end{align*}
where in the last step we used $n \leq \poly(d)$ and this completes the proof when the nice subgraph is blue. Together, the cases complete the proof of~\Cref{lem:tester-analysis}. 

\subsection{Tying it Together: Proof of \Cref{thm:main}} \label{sec:testeranalysis2}

Suppose $f:[n]^d \to \{0,1\}$ is $\eps$-far from being monotone with $n \leq \poly(d)$ and $\eps \geq d^{-1/2}$. 
In particular, we have constant $c$ such that $nd \leq d^c$. Recall $\delta = 1/\floor{\log\log nd} = o(1)$.	
By~\Cref{clm:tech-ass-just}, we may assume~\Cref{ass:tech-ass} holds for otherwise we are done. 
Also recall the definitions of $\widetilde{I}_f, \widetilde{I}_f^-$ in \Def{hypergrid-influence}. By \Clm{total-to-negative-inf}, if $\widetilde{I}_f > 9\sqrt{d}$, then $\widetilde{I}_f^- > \sqrt{d}$ and so the tester (\Alg{alg}) finds a violation in step (2) when $\tau = 1$ with probability $\Omega(d^{-1/2})$. Thus, we will assume $\widetilde{I}_f \leq 9\sqrt{d}$. Therefore, we may invoke \Lem{good_subgraph} which gives us either a nice red subgraph or a nice blue subgraph.~\Cref{lem:tester-analysis} then proves that \Alg{alg} finds a violating pair and rejects with probability at least $\eps^2\cdot d^{-(1/2 + \Theta(\delta))}$. This proves \Cref{thm:main}. 

\section{Finding Nice Subgraphs}\label{sec:obt}	

In this section we prove~\Cref{lem:good_subgraph} which we restate below.
\mainlemma*

The proof proceeds over multiple steps and constitutes a key technical contribution of the paper.
We give a sketch of what is forthcoming. 
\begin{itemize}

	\item In~\Sec{peeling} we describe the construction of a {\em seed} regular violation subgraph $G$.
	This uses the directed isoperimetric result~\Cref{thm:bcs23} proved in~\cite{BlChSe23} and a ``peeling argument'' not unlike that present in~\cite{KMS15}.
	At the end of this section, we will fix the parameters $\sigma_1, \sigma_2$ and the walk length $\tau$. In particular, the length $\tau$ will be defined by the 
	{\em larger} side of this violating bipartite graph.
	
	\item In~\Sec{deriveH}, we obtain a regular violating graph $H$ that has persistence properties with respect to the walk length $\tau$. 
In~\cite{KMS15} and~\cite{BlChSe23}, one obtained this violating graph by simply deleting the non-persistent points from the seed violation subgraph.
In our case, since we choose the walk length depending on the larger side, we need 
to be careful. We use the idea of ``translating violation subgraphs'' on $G$ (repeatedly) to find a different violation subgraph $H$ with
the desired persistence properties. 

	\item In~\Sec{derive_redblue}, we use the graph $H$ to obtain either a nice red subgraph $H_1$ or a nice blue subgraph $H_2$. 
	If most of the edges in $H$ were red, then a simple surgery on $H$ itself gives us $H_1$. On the other hand, if $H$ has few red edges 
	(but has the persistence properties as guaranteed), then we apply the red/blue lemma (\Cref{lem:redblue}) to obtain the desired nice blue-subgraph $H_2$.
	The proof of the red/blue lemma, which is present in~\Sec{redblue}, uses the translating violation subgraphs idea as well.
\end{itemize}

Recall, we assume $f \colon [n]^d \to \{0,1\}$ is $\eps$-far from monotone, $\widetilde{I}_f \leq 9\sqrt{d}$, and~\Cref{ass:tech-ass} holds.

\subsection{Peeling Argument to Obtain Seed Regular Violation Subgraph} \label{sec:peeling}

Recall the definition of the Talagrand objective (\Cref{def:tal-graph}) $\dtal{G}$ of a violation subgraph $G = (\bX,\bY, E)$.
Let $G_0$ denote the violation subgraph formed by all violating edges in the fully augmented hypergrid.
Theorem 1.4 in~\cite{BlChSe23} (paraphrased in this paper as~\Cref{thm:bcs23}) is that $\dtal{G_0} = \Omega(\eps n^d /\log n)$. Also recall the definitions in~\Cref{def:par}.
The following lemma asserts that there exists a subgraph of $G_0$ whose Talagrand objective is not much lower, but satisfies certain regularity properties.

\begin{lemma} [Seed Regular Violation Subgraph] \label{lem:seed}\label{lem:peeling} There exists a violation subgraph $G(\bX,\bY,E)$ satisfying the following properties.\footnote{We remark that this lemma in particular \emph{does not} rely on $\widetilde{I}_f \leq 9\sqrt{d}$ or \Cref{ass:tech-ass}. That is, it holds as long as $f \colon [n]^d \to \{0,1\}$ is $\eps$-far from monotone.}
	\begin{enumerate}[label=(\alph*)]
		\item $\dtal{G} \geq \eps \cdot d^{-c\delta} \cdot n^d$.
		\item $m(G) \geq d^{-3c\delta}\max(|\bX| \cdot \Phi(\bX) \cdot \Gamma(\bX),|\bY| \cdot \Phi(\bY) \cdot \Gamma(\bY))$.
		\item All vertices in $\bX \cup \bY$ are $98$-typical.
		\item $|\bX|,|\bY| \geq \frac{\eps}{d^{1/2+c\delta}} \cdot n^d$.
	\end{enumerate}
\end{lemma}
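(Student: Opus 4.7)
The plan is to start from the full violation subgraph $G_0$ of the fully augmented hypergrid, for which \Cref{thm:bcs23} provides $\dtal{G_0} = \Omega(\eps n^d/\log n)$, and to peel it down in three stages while tracking both $\dtal$ and the combinatorial parameters $\Phi, \Gamma, |\bX|, |\bY|$. Because $\delta > 1/\log\log(nd)$, we have $d^{\delta} \geq \polylog(nd)$, so $d^{O(\delta)}$ absorbs all polylogarithmic losses that the peeling incurs.

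\textbf{Stage 1 (Typicality).} By \Cref{clm:typical} at most $(\eps/d)^{93}\, n^d$ points of $[n]^d$ are not $98$-typical. I delete all of them (and their incident edges) from $G_0$. Each deletion drops every $T_\chi$-sum by at most $\sqrt{d}$, so the Talagrand objective of the surviving graph $G_1$ loses at most $(\eps/d)^{93}\sqrt{d}\, n^d$, which is negligible against $\Omega(\eps n^d/\log n)$. Condition (c) is immediate from this stage.

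\textbf{Stage 2 (Dyadic bucketing).} Fix a coloring $\chi^*$ achieving $\dtal{G_1}$ and view each violating edge $e$ as assigned to its $\bX$-endpoint if $\chi^*(e)=1$ and to its $\bY$-endpoint otherwise. WLOG the $\bX$-side carries at least half the Talagrand mass. I bucket each $\bx\in\bX$ by the dyadic class of $\Phi_{G_1,\chi^*}(\bx)$ ($\leq\log d$ classes), then by the dyadic class of $\max_i \Gamma_i(\bx)$ ($\leq\log n$ classes); additionally, for each surviving $\bx$, I keep only those incident edges whose $\Gamma_i(\bx)$ lies in a dominant dyadic class ($\leq\log n$). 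Retaining the dominant combined class costs $O(\log^3 nd) = d^{O(\delta)}$ in the $\chi^*$-value. I then repeat the same two-level bucketing on the $\bY$-side of what remains. In the resulting graph $G$, every surviving $\bx$ has $\Phi(\bx)\in [\phi_X,2\phi_X]$ and every surviving incidence of $\bx$ has $\Gamma_i(\bx)\in[\gamma_X, 2\gamma_X]$ for common values $\phi_X,\gamma_X$, and analogously on the $\bY$-side.

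\textbf{Stage 3 (Verifying (a), (b), (d)).} For (b): in the regularized $G$ every $\bx\in\bX$ has $D(\bx) \geq \Phi(\bx)\cdot \Gamma_i(\bx)/2 \gtrsim \Phi(\bX)\Gamma(\bX)$, so $m(G) = \sum_\bx D(\bx) \gtrsim |\bX|\Phi(\bX)\Gamma(\bX)$, and symmetrically for $\bY$; the $d^{-3c\delta}$ slack absorbs the peeling overhead. For (a): in the bucketed $G$ the monochromatic colorings give $T_{\chi\equiv 1}(G) \approx |\bX|\sqrt{\phi_X}$ and $T_{\chi\equiv 0}(G) \approx |\bY|\sqrt{\phi_Y}$, both of order $\dtal{G_1}/d^{O(\delta)} \geq \eps n^d/d^{O(\delta)}$ by the dominant-bucket balance, and a short convexity argument on $\sqrt{\cdot}$ within the now-regular degree distributions shows that no mixed coloring can undercut these bounds by more than a constant factor, so $\dtal{G} \geq \eps n^d /d^{c\delta}$. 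Finally (d) is immediate from (a): using $\chi\equiv 1$ gives $\dtal{G} \leq \sum_{\bx\in\bX}\sqrt{\Phi(\bx)} \leq |\bX|\sqrt{d}$, hence $|\bX| \geq \dtal{G}/\sqrt{d} \geq \eps n^d/d^{1/2+c\delta}$, and symmetrically for $|\bY|$.

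\textbf{Main obstacle.} The subtle point is that $\dtal{G}$ is a \emph{minimum} over colorings and therefore is not monotone under edge deletion; pruning $G_0$ could in principle expose a cheap mixed coloring that was dominated in $G_0$ but collapses the objective in a subgraph. Controlling this is the whole purpose of the bucketing in Stage 2 --- by enforcing joint dyadic regularity in $\Phi$ and $\Gamma$ on both the $\bX$- and $\bY$-sides, I force the monochromatic colorings to be near-optimal (by concavity of $\sqrt{\cdot}$), which makes $\dtal{G}$ comparable to the $\chi^*$-value that I can directly track through the peeling. Balancing the $d^{O(\delta)}$ losses across the three bucketing axes so that (b)'s tight $d^{-3c\delta}$ tolerance still holds simultaneously with (a) is where the most care is required.
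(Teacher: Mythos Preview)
Your proposal correctly identifies the main obstacle: $\dtal{\cdot}$ is a minimum over colorings and is not monotone under edge deletion, so one cannot na\"ively track it through a peeling. But your proposed workaround has a genuine gap, and the paper resolves the obstacle differently.

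\textbf{The gap.} Your Stage~3 argument for condition~(a) asserts that after bucketing, both monochromatic values $T_{\chi\equiv 1}(G)$ and $T_{\chi\equiv 0}(G)$ are of order $\dtal{G_1}/d^{O(\delta)}$. You justify the $\bX$-side value by tracking the $\chi^*$-contribution through the $\bX$-bucketing (having assumed the $\bX$-side dominates under $\chi^*$). But nothing you have done controls $T_{\chi\equiv 0}(G)=\sum_{\by}\sqrt{\Phi_G(\by)}$: the $\bY$-bucketing preserves only a $1/\polylog$ fraction of whatever $T_{\chi\equiv 0}$ was \emph{after} the $\bX$-bucketing, and that quantity could already be tiny (e.g.\ if the optimal $\chi^*$ is essentially all-ones). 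So even granting your convexity claim that $\dtal{G}\gtrsim\min(T_{\chi\equiv 1},T_{\chi\equiv 0})$ in a doubly regular graph, the minimum need not be large. Relatedly, you bucket $\bX$ by $\Phi_{G_1,\chi^*}(\bx)$, but condition~(b) concerns the full thresholded degree $\Phi_G(\bx)$; these are different quantities and regularity in one does not give regularity in the other.

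\textbf{What the paper does instead.} The paper avoids fixing any coloring. The key tool is \Clm{subadd}: for any edge-partition $G=\bigcup_j H_j$, one has $\sum_j \dtal{H_j}\ge\dtal{G}$ (this is where the minimization over \emph{all} colorings in the definition of $\dtal{\cdot}$ is essential; see the remark after the claim). Thus, bucketing by the \emph{full} $(\Phi,\Gamma)$-profile into $1/\delta^2$ classes and averaging yields a bucket $G'$ with $\dtal{G'}\ge\delta^2\dtal{G}$ and one-sided regularity (\Clm{bucket}). To get simultaneous regularity on both sides, the paper does \emph{not} do a single $\bX$-then-$\bY$ pass; it alternates $\bX$- and $\bY$-bucketing, terminating when the edge count stops dropping by an $(nd)^{-\delta}$ factor. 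Termination in $O(1/\delta)$ rounds is forced because $m$ cannot fall below $\dtal$, and at termination the previous side's regularity survives up to $(nd)^{-\delta}$ because few edges were just removed. This yields~(a) and~(b) directly, with $\dtal$ tracked throughout; (c) and (d) then follow as you outlined.
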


Let us make a few comments before proving the above lemma. Condition (a) shows that the Talagrand objective degrades only by a $d^{o(1)}$ factor. Condition (b) shows
that the graph is nearly regular
since the RHS term without the $d^{-o(1)}$ term is the maximum value of $m(G)$. This is because $\Phi(\bX)\Gamma(\bX)$ is an upper bound on the maximum degree of any vertex $\bx \in \bX$.
Indeed, if one can prove a stronger lemma which replaces the $d^{o(1)}$ terms in (a) and (b) by $\polylog(d)$'s, then the remainder of our analysis could be easily modified to give a $\tilde{O}(\eps^{-2}\sqrt{d})$ tester. \smallskip

We need a few tools to prove this lemma. Our first claim is a consequence
of the subadditivity of the square root function.

\begin{claim} \label{clm:subadd} Consider a partition of (the edges of) a violation subgraph $G$ into $H_1, H_2, \ldots, H_k$. Then $\sum_{j \leq k} \dtal{H_j} \geq \dtal{G}$.
\end{claim}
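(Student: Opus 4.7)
The plan is to use the optimal colorings of the parts $H_j$ to construct a feasible coloring of $G$, then exploit the subadditivity of square root pointwise over the vertex set.

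For each $j \in [k]$, let $\chi_j^\star : E(H_j) \to \{0,1\}$ be an optimal coloring achieving $\dtal{H_j} = \sum_{\bz} \sqrt{\Phi_{H_j, \chi_j^\star}(\bz)}$. Since $H_1, \ldots, H_k$ partition $E(G)$, every edge $e \in E(G)$ lies in exactly one $H_j$, so we can define a coloring $\chi : E(G) \to \{0,1\}$ by $\chi(e) := \chi_{j(e)}^\star(e)$ where $j(e)$ is the unique index with $e \in E(H_{j(e)})$. This $\chi$ is a valid bicoloring of $E(G)$, so by the definition of $\dtal{G}$ we have
\[
\dtal{G} \;\leq\; \sum_{\bz \in \bX \cup \bY} \sqrt{\Phi_{G,\chi}(\bz)}.
\]

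Next I would compare $\Phi_{G,\chi}(\bz;i)$ to $\sum_j \Phi_{H_j, \chi_j^\star}(\bz;i)$ coordinatewise. By the definition of thresholded influence, $\Phi_{G,\chi}(\bz;i) = 1$ precisely when there exists a violating $i$-edge $e \in E(G)$ incident to $\bz$ with $\chi(e) = f(\bz)$. Such an edge lives in some $H_{j(e)}$, where $\chi_{j(e)}^\star(e) = \chi(e) = f(\bz)$, which forces $\Phi_{H_{j(e)}, \chi_{j(e)}^\star}(\bz;i) = 1$. Since $\Phi_{G,\chi}(\bz;i)$ is a $0/1$ indicator while the right-hand side is a sum of $0/1$ indicators, it follows that
\[
\Phi_{G,\chi}(\bz;i) \;\leq\; \sum_{j=1}^k \Phi_{H_j, \chi_j^\star}(\bz;i),
\]
and summing over $i \in [d]$ yields $\Phi_{G,\chi}(\bz) \leq \sum_j \Phi_{H_j, \chi_j^\star}(\bz)$.

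Finally I would apply the elementary inequality $\sqrt{\sum_j a_j} \leq \sum_j \sqrt{a_j}$ for nonnegative reals (a consequence of $\sqrt{a+b} \leq \sqrt{a} + \sqrt{b}$), together with monotonicity of $\sqrt{\cdot}$, to obtain
\[
\sqrt{\Phi_{G,\chi}(\bz)} \;\leq\; \sqrt{\sum_{j=1}^k \Phi_{H_j, \chi_j^\star}(\bz)} \;\leq\; \sum_{j=1}^k \sqrt{\Phi_{H_j, \chi_j^\star}(\bz)}.
\]
Summing over $\bz \in \bX \cup \bY$ (noting that vertices outside $H_j$ contribute $0$ to $\dtal{H_j}$, so the sums extend harmlessly to all of $\bX \cup \bY$) and combining with the first display gives $\dtal{G} \leq \sum_{j=1}^k \dtal{H_j}$, as required. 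There is no real obstacle here; the only subtle point is observing that $\Phi_{G,\chi}(\bz;i)$ is an indicator, which makes the coordinatewise domination valid despite multiple $H_j$'s potentially each contributing an $i$-edge at $\bz$.
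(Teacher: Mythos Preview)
Your proof is correct and follows essentially the same approach as the paper: combine the optimal colorings $\chi_j^\star$ into a coloring $\chi$ of $G$, use that thresholded degrees are subadditive across the partition (which you spell out coordinatewise, while the paper just asserts it), and then apply subadditivity of the square root. The paper's proof is nearly identical, just slightly terser on the $\Phi_{G,\chi}(\bz) \leq \sum_j \Phi_{H_j,\chi_j}(\bz)$ step.
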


\begin{proof} Let $\chi_j$ denote the coloring of the subgraph $H_j$ that obtains the minimum $\dtal{H_j}$. Since the $H_1,\ldots,H_k$ form a partition,
	we can aggregate the colors to get a coloring $\chi$ of $G$.
	
	Consider any $\bz \in \bX \cup \bY$. Let $\Phi_{H_j, \chi_j}(\bz)$ be the thresholded
	degree of $\bz$, restricted to the edges colored by $\chi_j$.
	By the subadditivity of the square root function, $\sum_{j \leq k} \sqrt{\Phi_{H_j,\chi_j}(\bz)} \geq \sqrt{\sum_{j \leq k} \Phi_{H_j,\chi_j}(\bz)}$.
	Observe that thresholded degrees are also subadditive, so $\sum_{j \leq k} \Phi_{H_j, \chi_j}(\bz) \geq \Phi_{G,\chi}(\bz)$. Hence,
	\begin{equation}
		\sum_{j \leq k} \dtal{H_j} = \sum_{j \leq k} \sum_{\bz \in \bX \cup \bY} \sqrt{\Phi_{H_j, \chi_j}(\bz)}
		= \sum_{\bz \in \bX \cup \bY} \sum_{j \leq k} \sqrt{\Phi_{H_j,\chi_j}(\bz)}
		\geq \sum_{\bz \in \bX \cup \bY} \sqrt{\Phi_{G,\chi}(\bz)} \geq \dtal{G}
	\end{equation}
\end{proof}

\begin{remark}
	The proof of~\Cref{clm:subadd} crucially uses the fact that in the definition of $\dtal{}$,  we minimize over all possible colorings $\chi$'s of the edges.
	In particular, if we had defined $\dtal{G}$ only with respect to the all ones or the all zeros coloring, then the above proof fails. 
	In the remainder of the paper, we will only be using the $\chi \equiv 1$ or $\chi\equiv 0$ colorings, and the curious reader may wonder why we need the definition of $\dtal{G}$ to 
	minimize over all colorings. This is exactly the point where we need it. We make this remark because the ``uncolored'' isoperimetric theorem 
    is much easier to prove than the ``colored'' version, but the colored/robust version is essential for the tester analysis.
\end{remark}

\noindent
Our next step is a simple bucketing argument. 

\begin{claim} \label{clm:bucket} Consider a violation subgraph $G = (\bX,\bY,E)$. Both of the following are true.
	\begin{enumerate}
		\item There exists a subgraph $G' = (\bX',\bY',E')$ of $G$ such that $\dtal{G'} \geq \delta^2 \dtal{G}$ and $m(G') \geq (nd)^{-\delta} |\bX'| \Phi_{G'}(\bX') \Gamma_{G'}(\bX')$.
		\item There exists a subgraph $G' = (\bX',\bY',E')$ of $G$ such that $\dtal{G'} \geq \delta^2 \dtal{G}$ and $m(G') \geq (nd)^{-\delta} |\bY'| \Phi_{G'}(\bY') \Gamma_{G'}(\bY')$.
	\end{enumerate}
\end{claim}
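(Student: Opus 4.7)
The plan is to prove Part 1; Part 2 is entirely symmetric, swapping the roles of $\bX$ and $\bY$ throughout (bucketing by $\Gamma_{G,i}(\by)$ of the upper endpoint and by $\Phi_G(\by)$). The overall strategy is a two-phase dyadic bucketing combined with the subadditivity of $\dtal{\cdot}$ established in \Cref{clm:subadd}. Fix the dyadic ratio $r := (nd)^{\delta/c}$ for a suitable absolute constant $c \geq 2$, chosen so that simultaneously $1/r^{2} \geq (nd)^{-\delta}$ and the total number of buckets is at most $1/\delta^{2}$.

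In Phase~1 (an edge bucketing), I would partition the edges of $G$ by the $i$-degree of their lower endpoint: for each $(\bx,\by) \in E$ in direction $i$, place it in class $E_\ell$ when $\Gamma_{G,i}(\bx) \in (r^{\ell-1}, r^\ell]$. This yields at most $K_{1} := \lceil \log_{r} n \rceil = O(1/\delta)$ violation subgraphs $G_\ell = (\bX_\ell, \bY, E_\ell)$ whose edges partition $E$. \Cref{clm:subadd} then produces some $\ell^\star$ with $\dtal{G_{\ell^\star}} \geq \dtal{G}/K_{1}$. In Phase~2 (a vertex bucketing inside $G_{\ell^\star}$) I would partition $\bX_{\ell^\star}$ by the dyadic range of the thresholded degree \emph{in} $G_{\ell^\star}$ (not in $G$): let $\bX_{\ell^\star,k} := \{\bx : \Phi_{G_{\ell^\star}}(\bx) \in (r^{k-1},r^{k}]\}$ and take $G_{\ell^\star,k}$ to include every $G_{\ell^\star}$-edge incident to $\bX_{\ell^\star,k}$. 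This yields at most $K_{2} := \lceil \log_{r} d \rceil = O(1/\delta)$ subgraphs partitioning $E_{\ell^\star}$, and \Cref{clm:subadd} again furnishes a $k^\star$ with $\dtal{G_{\ell^\star,k^\star}} \geq \dtal{G_{\ell^\star}}/K_{2}$. Setting $G' := G_{\ell^\star,k^\star}$, chaining the two bounds gives $\dtal{G'} \geq \dtal{G}/(K_{1}K_{2}) \geq \delta^{2}\dtal{G}$.

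To verify the regularity property, note that for every $\bx \in \bX'$ Phase~2 gives $\Phi_{G'}(\bx) > r^{k^\star-1}$, while Phase~1 ensures $\Gamma_{G',i}(\bx) = \Gamma_{G,i}(\bx) > r^{\ell^\star-1}$ for every coordinate $i$ carrying a $G'$-edge at $\bx$ (since Phase~2 preserves all $G_{\ell^\star}$-edges incident to the retained vertices). Therefore
\[
D_{G'}(\bx) \;=\; \sum_{i} \Gamma_{G',i}(\bx) \;\geq\; \Phi_{G'}(\bx) \cdot r^{\ell^\star-1} \;>\; r^{\,k^\star + \ell^\star - 2},
\]
and summing over $\bx \in \bX'$ gives $m(G') > |\bX'| \cdot r^{\,k^\star+\ell^\star-2}$. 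Trivially $\Phi(\bX') \leq r^{k^\star}$ and $\Gamma(\bX') \leq r^{\ell^\star}$, hence $|\bX'|\Phi(\bX')\Gamma(\bX') \leq |\bX'| r^{\,k^\star+\ell^\star}$, which yields the required ratio $m(G')/(|\bX'|\Phi(\bX')\Gamma(\bX')) \geq 1/r^{2} \geq (nd)^{-\delta}$.

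The main obstacle is calibrating $r$: the $\dtal{}$-retention demand $K_{1}K_{2} \leq 1/\delta^{2}$ and the regularity demand $1/r^{2} \geq (nd)^{-\delta}$ compete for the same ``log budget'', forcing $r = (nd)^{\Theta(\delta)}$; the absolute constants produced by the integer ceilings and the slack between $\Omega(\delta^{2})$ and $\delta^{2}$ are absorbed by the $d^{-O(\delta)}$-type losses already present in \Cref{lem:seed}. A secondary but essential subtlety is that Phase~2 must bucket by $\Phi_{G_{\ell^\star}}$ rather than $\Phi_{G}$, so that the same $\bx$ simultaneously witnesses the lower bound on $\Phi_{G'}(\bx)$ and on $\Gamma_{G',i}(\bx)$ for $i$ in its active support; without this, the product structure $\Phi \cdot \Gamma$ in the regularity claim cannot be recovered.
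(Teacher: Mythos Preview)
Your proposal is correct and follows essentially the same double-bucketing strategy as the paper: partition edges by the $i$-degree $\Gamma_{G,i}(\bx)$ of the $\bX$-endpoint, then by the thresholded degree within the chosen bucket, and apply \Cref{clm:subadd} to retain a $\delta^2$-fraction of $\dtal{G}$. The one difference is that the paper uses \emph{separate} dyadic bases---$n^{\delta}$ for the $\Gamma$-bucketing and $d^{\delta}$ for the $\Phi$-bucketing---which yields exactly $1/\delta$ buckets in each phase (assuming $1/\delta \in \NN$) and hence exactly $\delta^2$ retention together with exactly an $(nd)^{-\delta}$ regularity loss, cleanly resolving the calibration tension you flagged as the main obstacle.
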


\begin{proof} We prove item (1) and the proof of item (2) is analogous.
	
	For convenience, we assume that $\delta$ is the reciprocal of a natural number.
	For each $\bx \in \bX$, we bucket the incident edges as follows. First,
	for each $a \in [1/\delta]$, let $S_a$ be the set of dimensions $i$,
	such that the $i$-degree of $\bx$ is in the range $[n^{(a-1)\delta}, n^{a\delta})$. Note that $S_1,\ldots,S_{1/\delta}$ forms a partition of the set of coordinates, $[d]$.
	Now, for each $a, b \in [1/\delta]$, let the $(a,b)$ edge bucket of $\bx$, denoted $E_{a,b,\bx}$,
	be defined as follows. If $|S_a| \in [d^{(b-1)\delta}, d^{b\delta})$,
	then $E_{a,b,\bx}$ is the set of all edges incident to $\bx$ along dimensions in $S_a$. If $|S_a| \notin [d^{(b-1)\delta}, d^{b\delta})$, then $E_{a,b,\bx} = \emptyset$. Observe that $\{E_{a,b,\bx} \colon a,b \in [1/\delta]\}$ partitions the edges incident to $\bx$.

	Now, let $G_{a,b}$ denote the subgraph formed by the edge set $\cup_{\bx \in \bX} E_{a,b,\bx}$. Let $\bX_{a,b}$ be the set
	of vertices in $\bX$ with non-zero degree in $G_{a,b}$.
	Observe that $\Phi_{G_{a,b}}(\bX_{a,b}) \leq d^{b\delta}$
	and $\Gamma_{G_{a,b}}(\bX_{a,b}) \leq n^{a\delta}$. Moreover, the degree of each $\bx \in \bX_{a,b}$
	is at least $d^{(b-1)\delta} \times n^{(a-1)\delta} \geq (nd)^{-\delta} \Phi_{G_{a,b}}(\bX_{a,b}) \Gamma_{G_{a,b}}(\bX_{a,b})$.
	Hence, $m(G_{a,b}) \geq (nd)^{-\delta} |\bX_{a,b}| \Phi_{G_{a,b}}(\bX_{a,b}) \Gamma_{G_{a,b}}(\bX_{a,b})$.
	
	Finally, by construction, the $G_{a,b}$ subgraphs partition the edges of $G$. Hence, by \Clm{subadd} we have $\sum_{a,b \in [1/\delta]} \dtal{G_{a,b}} \geq \dtal{G}$. By averaging, there exists some choice of $a,b$ such that $\dtal{G_{a,b}} \geq \delta^2 \dtal{G}$.
	This gives the desired subgraph $G'$. \end{proof}

\Clm{bucket}, part 1 above gives the regularity condition only with respect to $\bX$, and part 2 gives
the analogous guarantee with respect to $\bY$, but the trouble is in getting both simultaneously. We do an iterative construction using \Clm{bucket} to get the simultaneous guarantee.

\begin{proof} (Conditions (a) and (b) of \Lem{peeling}.) By the robust directed Talagrand theorem for hypergrids (\Cref{thm:bcs23}), there is a violation subgraph $G_0 = (\bX_0,\bY_0,E_0)$ such that $\dtal{G_0} = \Omega(\eps n^d/\log n)$. We construct a series of subgraphs $G_0 \supseteq G_1 \supseteq G_2 \supseteq \cdots \supseteq G_r$ as follows.

	Let $i \geq 1$. If $i$ is odd, we apply item (1) of \Clm{bucket} to $G_{i-1}$ to get $G_i(\bX_i,\bY_i,E_i)$ with the regularity condition on $\bX_i$. If $i$ is even,
	we apply item (2) of \Clm{bucket} to $G_{i-1}$ to get $G_i(\bX_i,\bY_i,E_i)$ with the regularity condition on $\bY_i$. If $i > 1$ and $m(G_i) \geq (nd)^{-\delta} m(G_{i-1})$, then we terminate the series. By \Clm{bucket}, the series satisfies the following three properties for all $i \geq 1$.
	\begin{itemize}
		\item $\dtal{G_i} = \Omega(\delta^{2i} \eps n^d/\log n)$.
		\item If $i$ is odd, $m(G_i) \geq (nd)^{-\delta} |\bX_i| \Phi_{G_i}(\bX_i) \Gamma_{G_i}(\bX_i)$. If $i$ is even, $m(G_i) \geq (nd)^{-\delta} |\bY_i| \Phi_{G_i}(\bY_i) \Gamma_{G_i}(\bY_i)$.
		\item If the series has not terminated by step $i$, then $m(G_i) < (nd)^{-\delta} m(G_{i-1})$.
	\end{itemize}
	
	\medskip
	
	The first two statements hold by the guarantees of \Clm{bucket} and the fact that $\dtal{G_0} = \Omega(\eps n^d/\log n)$. The third statement holds simply by the termination condition for the sequence. The trivial bound on the number of edges is $m(G_0) \leq nd \cdot n^{d}$.
    The third bullet point yields $m(G_i) < (nd)^{-i\delta} \cdot nd \cdot n^d$, if the series has not terminated by step $i$. 
	
	\begin{claim} The series terminates in at most $3/\delta$ steps. \end{claim}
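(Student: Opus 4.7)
My plan is to argue by contradiction: suppose the series has not terminated by step $i^{*} := \lceil 3/\delta \rceil$. Then for every $2 \le j \le i^{*}$, the non-termination condition forces $m(G_j) < (nd)^{-\delta}\, m(G_{j-1})$. Composing these inequalities telescopically and using the trivial upper bound $m(G_1) \le m(G_0) \le nd \cdot n^d$, I obtain
\[
    m(G_{i^{*}}) \;<\; (nd)^{-(i^{*}-1)\delta} \cdot nd \cdot n^d \;=\; (nd)^{1-(i^{*}-1)\delta} \cdot n^d.
\]
For $i^{*} \ge 3/\delta$ the exponent $1-(i^{*}-1)\delta$ is at most $\delta - 2 < -1$, so this gives $m(G_{i^{*}}) < n^d/(nd)$, an essentially vacuous upper bound.

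To finish I compare this against a matching lower bound on $m(G_{i^{*}})$. By \Cref{obs:oo} we have $m(G_{i^{*}}) \ge \dtal{G_{i^{*}}}$, and by the first bullet of the iterative construction $\dtal{G_{i^{*}}} = \Omega(\delta^{2i^{*}} \eps n^d / \log n)$. After canceling the common $n^d$ factor, it suffices to verify that $\delta^{6/\delta} \cdot \eps / \log n \gg (nd)^{-1}$. The only point needing care is that $\delta^{6/\delta}$ not decay too quickly: setting $L := \log\log(nd)$ and using $\delta > 1/L$ from \Cref{rem:delta_d_remark}, one computes $\log(\delta^{6/\delta}) \ge -6L \log L$, so $\delta^{6/\delta} \ge \exp(-O(L \log L)) = (nd)^{-o(1)}$. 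Combined with $\eps^{-1}, n \le \poly(d)$, this easily dominates $(nd)^{-1}$ when $d$ is large enough, yielding the required contradiction.

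The argument is essentially a parameter chase and I do not expect any real obstacle; the key quantitative observation is that the Talagrand lower bound on $m(G_i)$ degrades only quasi-polynomially in $nd$ per iteration (through the $\delta^{2}$ factor), whereas non-termination drives the edge count down polynomially in $nd$ per iteration (through $(nd)^{-\delta}$), so a small constant multiple of $1/\delta$ iterations is all that is needed to force the two bounds to collide.
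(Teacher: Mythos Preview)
Your argument is correct and follows essentially the same route as the paper: contradiction via a telescoping upper bound on $m(G_{i^*})$ from the non-termination condition, versus the lower bound $m(G_{i^*}) \ge \dtal{G_{i^*}} = \Omega(\delta^{2i^*}\eps n^d/\log n)$ from \Cref{obs:oo}, together with the estimate $\delta^{6/\delta} = (nd)^{-o(1)}$ derived from $\delta > 1/\log\log(nd)$. The only minor imprecision is the very last step: the hypothesis $\eps^{-1} \le \poly(d)$ alone does not guarantee $\delta^{6/\delta}\cdot \eps/\log n \gg (nd)^{-1}$ (take, e.g., $n=2$ and $\eps = d^{-2}$); the paper invokes the stronger $\eps \ge 1/d$ at this point, which is available since the small-$\eps$ regime is handled separately by the overall tester.
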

	
	\begin{proof} Suppose not. Noting that $m(G_i) \geq \dtal{G_i}$ (\Cref{obs:oo}), we get the following chain of inequalities using the properties of our subgraph graph $G_{3/\delta}$.
		\begin{equation*}
			(nd)^{-(3/\delta)\cdot \delta} \cdot nd \cdot n^{d} > m(G_i) \geq \dtal{G_i} = \Omega(\delta^{6/\delta} \eps n^d/\log n)
			\ \ \ \Longrightarrow \ \ \ (nd)^{-2} = \Omega(\delta^{6/\delta} \eps/\log n)
		\end{equation*}
		Note that we may assume $\eps \geq 1/d$ and so $C\eps/\log n \geq (nd)^{-1}$ for any constant $C$. Thus we have $(nd)^{-1} \geq \delta^{6/\delta}$. Given that $\delta > 1/\log\log nd$, this inequality is a contradiction. \end{proof}
	
	By the previous claim the series terminates in some $r \leq 3/\delta$ steps, producing $G_r(\bX_r,\bY_r,E_r)$, which we claim has the desired properties to prove conditions (a) and (b) of \Lem{peeling}. Since $r \leq 3/\delta$, $\dtal{G_r} = \Omega(\delta^{6/\delta} \eps n^d/\log n)$. Note that since 
$\delta > 1/\log\log nd$, we have
	\begin{align*}
		\delta^{6/\delta} > \left(\log \log nd\right)^{-\frac{6}{\delta}} = (nd)^{-\frac{6}{\delta} \cdot \frac{\log\log\log nd}{\log nd}} > (nd)^{-\delta^2} > (nd)^{-\delta} \cdot \log n > d^{-c\delta}\log n
	\end{align*}
	where the second to last step holds because $\frac{6 \log \log \log nd}{\log d} \ll \left(\frac{1}{\log \log nd}\right)^3 < \delta^3$. The last inequality used $nd \leq d^{c}$. This proves condition (a). Towards proving condition (b), note that $C\delta^{6/\delta}/\log n \geq (nd)^{-\delta}$ for any constant $C$.

	Let's assume without loss of generality that $r$ is even. Thus we have $m(G_r) \geq (nd)^{-\delta} |\bY_r| \Phi_{G_r}(\bY_r) \Gamma_{G_r}(\bY_r)$ by the second bullet point above. Next, since the series terminated at step $r$, we have 
	\[
	m(G_r) \geq (nd)^{-\delta} m(G_{r-1}) \geq (nd)^{-2\delta} |\bX_{r-1}| \Phi_{G_{r-1}}(\bX_{r-1}) \Gamma_{G_{r-1}}(\bX_{r-1}) \geq (nd)^{-2\delta} |\bX_{r}| \Phi_{G_r}(\bX_{r}) \Gamma_{G_r}(\bX_{r})
	\]
	where the second inequality is again by the second bullet point above and the fact that $i-1$ is odd and the third inequality is simply because $G_r$ is a subgraph of $G_{r-1}$. Again using $nd \leq d^{c}$, we have $(nd)^{-\delta} \geq d^{-c\delta}$ and so we get that $G_r$ satisfies conditions (a) and (b) of \Lem{peeling}. \end{proof}


\begin{proof}(Conditions (c) and (d) of \Lem{peeling}.) 
	To obtain condition (c), we simply remove the non-typical points.
		Recall the definition of $c$-typical points (\Def{typical}). By \Clm{typical}, the number of points that are not $98$-typical is at most $(\eps/d)^{93} n^d$. Thus, removing all such vertices can decrease $\dtal{G}$ by at most $(\eps/d)^{93}n^d \cdot \sqrt{d}$ which is negligible compared to the RHS in condition (a). Thus, we remove all such vertices from $G$ and henceforth assume that all of $\bX \cup \bY$ is $98$-typical. 
	
	Condition (d) follows from condition (a). Consider the constant coloring $\chi \equiv 1$ and observe that
	\[
	|\bX|\sqrt{d} \geq \text{Tal}_{\chi\equiv 1}(G) \geq \text{Tal}(G) \geq \eps \cdot d^{-c\delta} \cdot n^d \text{.} 
	\]
	where the first inequality follows from the trivial observation that the maximum $\Phi_{G}(\bx)$ can be is $d$.
	Using the coloring $\chi \equiv 0$ proves the same lower bound for $|\bY|$. \end{proof}

\subsubsection{Choice of the walk length} \label{sec:choice-tau}
We end this section by specifying what the parameters $\sigma_1, \sigma_2$ and $\tau$ are going to be in~\Cref{lem:good_subgraph}.
We now make the assumption $|\bX| \leq |\bY|$. Given~\Cref{rem:wlog}, this is without loss of generality; this fact would be true either in $f$ or in $g$, 
and 
running steps 2, 3, 5 on $f$ is equivalent to running steps 2, 3, 4 on $g$. The violation subgraphs for $f$ and $g$ are isomorphic. 
Then,
\[
	\sigma_1 = \sigma_{\bX} := \frac{|\bX|}{n^d} ~~~\textrm{and}~~~~	\sigma_2 = \sigma_{\bY} := \frac{|\bY|}{n^d}
\]
\noindent
and set $\tau$ to be the unique power of two such that 
	\[
	\frac{1}{2} \lceil \sigma_{\bY} \cdot d^{1/2-7c\delta} \rceil < \tau-1 \leq \lceil \sigma_{\bY} \cdot d^{1/2-7c\delta} \rceil \text{.}
	\]

We conclude the subsection by establishing the following upper bounds on the number of vertices which are not up-persistent. 

\begin{claim} \label{clm:G0_persistence_X} \label{clm:G0_persistence_Y}
	The following are true.
	\begin{itemize}[noitemsep]
		\item  The number of vertices $\bx\in [n]^d$ where $f(\bx) = 1$ that are not $(\tau-1,\log^{-5} d)$-up-persistent is at most $d^{-6c\delta} \cdot |\bX|$.
		\item  The number of vertices $\by\in [n]^d$ where $f(\by) = 0$ that are not $(\tau-1,\log^{-5} d)$-up-persistent is at most $d^{-6c\delta} \cdot |\bY|$. 
	\end{itemize}
 \end{claim}

\begin{proof} Suppose that the first item does not hold. Then, by item (d) of \Lem{seed}, the number of vertices $\bx\in [n]^d$ where $f(\bx) = 1$ that are not $(\tau-1,\log^{-5} d)$-up-persistent is \emph{at least} $d^{-6c\delta} \cdot |\bX| \geq \frac{\eps}{d^{1/2+7c\delta}} \cdot n^d$, violating \Cref{ass:tech-ass} which is a premise of~\Cref{lem:good_subgraph}.
%
%
For the second item, note that by \Clm{persistence}, the total number of $(\tau-1,\log^{-5} d)$-non-persistent vertices is at most $\cper \tau \cdot \log^5 d \cdot \frac{1}{\sqrt{d}} \cdot n^d \leq \sigma_{\bY} \cdot d^{-6c\delta} \cdot n^d$, where we have simply used $\log^5 d \ll d^{c\delta}$ and our definition of $\tau$. \end{proof}


\subsection{Using `Persist-or-Blow-up' Lemma to Obtain Down-Persistence} \label{sec:deriveH}

\Cref{lem:peeling} provides a seed violation subgraph which has a large Talagrand objective and has regularity properties.~\Cref{clm:G0_persistence_X} shows that we may assume these vertices are up-persistent with respect to walk length of $\tau-1$. However, we may not have down persistence. In particular, it could be $|\bX| \ll |\bY|$ and if we try to apply \Clm{persistence} and remove
all nodes from $\bX$ which are not $(\tau-1, 0.6)$-down-persistent, we may end up removing everything. To obtain a subgraph with down-persistence properties, we need to apply a
translation procedure which is encapsulated in the lemma below. The proof of the lemma is deferred to~\Sec{blowup}.
The reader should recall the definitions in~\Cref{def:par} and~\Cref{def:weak-reg}.

\begin{restatable}[Persist-or-Blow-up Lemma] {lemma}{blowuplemma}\label{lem:blowup}~
 Consider a violation subgraph $G = (\bX, \bY, E)$ such that all vertices in $G$ are $c$-typical where $c \leq 99$ and $(\ell,\log^{-5} d)$-up persistent where $1 \leq \ell \leq \sqrt{d}/\log^5 (d/\eps)$. Then, there exists a violation subgraph $G' = (\bX', \bY', E')$ where all vertices are $(c+\frac{\ell}{\sqrt{d}})$-typical and satisfying one of the following conditions.
	\begin{enumerate}
		\item Down-persistent case:
		\begin{asparaenum}
			\item All vertices in $\bX'$ are $(\ell,0.6)$-down persistent.
			\item $m(G') \geq \floor{m(G)/\log^5 d}$.
			\item $D_{G'}(\bX') \leq D_G(\bX)$, and $\forall i \in [d], \Gamma_{G',i}(\bX') \leq \Gamma_{G,i}(\bX)$
			\item $D_{G'}(\bY') \leq D_G(\bY)$, and $\forall i \in [d]$, $\Gamma_{G',i}(\bY') \leq \Gamma_{G,i}(\bY)$.
		\end{asparaenum}
		\item Blow-up case:
		\begin{asparaenum}
			\item $m(G') \geq \floor{2(1-3\log^{-4}d) \cdot m(G)}$.
			\item $D_{G'}(\bX') \leq D_G(\bX)$, and $\forall i \in [d], \Gamma_{G',i}(\bX') \leq \Gamma_{G,i}(\bX)$
			\item $D_{G'}(\bY') \leq 2D_G(\bY)$, and $\forall i \in [d]$,  $\Gamma_{G',i}(\bY') \leq 2\Gamma_{G,i}(\bY)$.
		\end{asparaenum}
	\end{enumerate}
\end{restatable}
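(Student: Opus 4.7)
My plan is to prove the lemma via a dichotomy on the down-persistence of vertices in $\bX$. Let $\bX^{dp} \subseteq \bX$ denote the set of $(\ell, 0.6)$-down-persistent vertices and $E^{dp}$ the edges of $G$ with lower endpoint in $\bX^{dp}$.

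\textbf{Case 1 (persistence is common).} If $|E^{dp}| \geq m(G)/\log^5 d$, I would set $G' = (\bX^{dp}, \bY, E^{dp})$. Condition (1a) holds by construction, (1b) by assumption, and (1c)--(1d) because $G'$ is a subgraph of $G$; typicality is inherited directly from the hypothesis since $(c+\ell/\sqrt{d})$-typicality is weaker than $c$-typicality. This case is routine.

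\textbf{Case 2 (blow-up via upward translation).} Otherwise a $(1-o(1))$-fraction of edges have lower endpoints in $\bX^{np} := \bX \setminus \bX^{dp}$. Here the plan is to translate $G$ upward. For a random up-shift $\bs$ of length $\ell$ (drawn, say, from $\upshift{\bx_0}{\ell}$ for a uniform $\bx_0$), consider $G + \bs = (\bX + \bs, \bY + \bs, E + \bs)$, where each $(\bx, \by) \in E$ maps to $(\bx + \bs, \by + \bs)$. Two observations drive the argument: (i) up-persistence of both endpoints ensures each shifted edge is a violation with probability $\geq 1 - 2\log^{-5} d$, so in expectation $G + \bs$ remains a violation subgraph of size $(1-o(1))\cdot m(G)$; and (ii) non-down-persistence of $\bx \in \bX^{np}$ gives $\Pr_{\bs}[f(\bx - \bs) = 1] < 0.4$, which bounds the expected overlap $|E \cap (E+\bs)|$, since any edge $(\bu,\bv) \in E \cap (E+\bs)$ forces $\bu \in \bX$ and $\bu - \bs \in \bX$. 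I would then invoke an averaging (or flow-based) selection paralleling \Cref{lem:redblue} to extract a single shift $\bs^*$ that simultaneously keeps most shifted edges as valid violations and keeps the overlap small. Setting $G' := G \cup (G + \bs^*)$, the near-factor-$2$ edge count (2a), the preservation of $\bX'$-degrees in (2b) (thanks to $\bX$ and $\bX + \bs^*$ being nearly disjoint so no $\bx' \in \bX'$ accumulates degree from both copies), and the at-most-doubling of $\bY'$-degrees in (2c) (since no analogous disjointness holds for $\bY$) all follow. Typicality of the translated points follows from \Clm{translate-typical}.

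\textbf{Main obstacle.} The clean near-factor-$2$ bound in (2a), namely $2(1 - 3\log^{-3} d)\cdot m(G)$, is strictly stronger than what the naive $0.4$-bound from $(\ell, 0.6)$-non-persistence yields (which gives only $\approx 1.6\, m(G)$). Closing this gap seems to require either (a) stratifying $\bX$ by the \emph{strength} of non-down-persistence, and invoking the reversibility lemma (\Cref{lem:prob-rev}) together with the much stronger $(\ell, \log^{-5} d)$-up-persistence hypothesis, so as to upgrade the control on $\Pr_\bs[\bx - \bs \in \bX]$ from a constant to a $\log^{-3} d$-scale bound on the relevant vertex class; or (b) deploying a fractional flow where each $(\bx, \by) \in E$ fractionally maps to a distribution over valid shifted violation edges, with an LP-integrality argument (as in the proof of \Cref{lem:redblue}) extracting a single $\bs^*$ achieving the required overlap bound. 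Coordinating this sharper selection with the strict $\bX'$-degree preservation---which admits no accumulation factor whatsoever---while permitting only the factor-$2$ growth on $\bY'$ is the technical crux of Case 2.
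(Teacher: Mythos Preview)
Your Case~1 is fine as far as it goes, but your dichotomy is placed in the wrong graph, and this is why Case~2 does not close. The paper does \emph{not} split on down-persistence in $G$; it first builds the fractional translation flow you describe in option~(b) (each edge $(\bx,\by)\in E$ sends $p_{\bx,\ell}(\bx+\bs)$ units onto $(\bx+\bs,\by+\bs)$ whenever the latter is a violation), and only then splits, on the \emph{translated} side. A vertex $\bx'\in\bX'$ is declared \emph{heavy} if some incident edge carries $\geq 1/2$ unit, or its total (respectively $i$-) flow is $\geq D(\bX)/2$ (respectively $\Gamma_i(\bX)/2$). The key observation you are missing is that any heavy $\bx'$ satisfies $\sum_{\bx\in\bX} p_{\bx,\ell}(\bx')\geq 1/2$, whence reversibility (\Cref{lem:prob-rev}) gives $\Pr_{\bz\sim\cD_\ell(\bx')}[f(\bz)=1]\geq \sum_{\bx\in\bX} p_{\bx',\ell}(\bx)\geq 0.4$, so heavy vertices are automatically $(\ell,0.6)$-down-persistent. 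The two cases are then: if heavy flow is $\geq m/\log^4 d$, scale by $(1+\log^{-3}d)^{-1}$ and round to get the down-persistent subgraph; otherwise \emph{light} vertices carry $\geq (1-2\log^{-4}d)m$ flow, and by the very definition of ``light'' every $\bX'$-side capacity is used to at most half, so one can multiply the flow by $2/(1+\log^{-3}d)$ and still satisfy the $\bX'$-side constraints exactly (yielding (2b)) while at most doubling on the $\bY'$-side (yielding (2c)). Integrality of flow then extracts the subgraph with $\geq 2(1-3\log^{-3}d)m$ edges.

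Your single-shift union $G\cup(G+\bs^*)$ cannot deliver (2b): near-disjointness of $\bX$ and $\bX+\bs^*$ is an average statement, but $D(\bX')\leq D(\bX)$ is a pointwise max-degree bound, and any $\bx'\in\bX\cap(\bX+\bs^*)$ would see its degree potentially double. The paper avoids this entirely by never taking a union of two integral graphs; instead the factor $2$ arises from rescaling a fractional flow whose light-side congestion was already $\leq 1/2$. This simultaneously resolves your ``$1.6$ versus $2$'' obstacle and the strict $\bX'$-degree preservation.
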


That is, the application of the above lemma either gives the violation subgraph we need, or it gives us a violation subgraph with around double the edges.
In the remainder of this section we use \Lem{blowup} and the graph $G(\bX,\bY,E)$ derived in the previous section to prove the following lemma.

\begin{lemma} [Down-Persistent Violation Subgraph] \label{lem:subgraph_after_blowup} 
	Let $G(\bX,\bY,E)$ be the subgraph asserted in~\Cref{lem:peeling}.
	There exists a natural number $s \leq \log^3 d$ and a violation subgraph $H(\bA,\bB,E)$ with the following properties.
	
	\begin{enumerate}
		\item $m(H) \geq 2^s\frac{m(G)}{\log^7 d}$.
		\item $\Gamma_H(\bA) \leq \Gamma_G(\bX)$ and $\Gamma_H(\bB) \leq 2^s \Gamma_G(\bY)$.
		\item $D_H(\bA) \leq D_G(\bX)$ and $D_H(\bB) \leq 2^s D_G(\bY)$.
		\item All vertices in $\bA \cup \bB$ are $(\tau-1,\log^{-5} d)$-up-persistent and $99$-typical.
		\item All vertices in $\bA$ are $(\tau-1,0.6)$-down-persistent.
	\end{enumerate}
\end{lemma}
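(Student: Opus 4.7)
\textbf{Proof plan for \Cref{lem:subgraph_after_blowup}.} The plan is to iterate the Persist-or-Blow-up Lemma (\Cref{lem:blowup}) starting from the seed graph $G(\bX,\bY,E)$. At each step \Cref{lem:blowup} outputs a subgraph that is either in the down-persistent case (in which case we halt with $H$ equal to that output) or in the blow-up case (in which case the edge count roughly doubles and we recurse). Since any violation subgraph has at most $nd\cdot n^d$ edges, the blow-up case can occur at most $O(\log d) \leq \log^3 d$ times before the iteration is forced into the down-persistent case.

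Concretely, first use \Cref{clm:persistence} and \Cref{clm:G0_persistence} to prune from $G$ all vertices that are not $(\tau-1,\log^{-5}d)$-up-persistent; this loses only a $d^{-6c\delta}$-fraction of edges since $\cper(\tau-1)\log^5 d/\sqrt{d}\leq d^{-6c\delta}$ for our choice of $\tau$. Call the resulting graph $G_0$, whose vertices are all $98$-typical and $(\tau-1,\log^{-5}d)$-up-persistent. Set $G_{s+1} := G_s'$, where $G_s'$ is the output of \Cref{lem:blowup} applied to $G_s$ with walk length $\ell = \tau-1$, and halt at the first $s$ for which the down-persistent case occurs, setting $H = G_{s+1}$. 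Before each application, re-prune any non-up-persistent points from $G_s'$; since up-persistence is a property of points in $[n]^d$ that is independent of the containing subgraph and since the total non-up-persistent fraction is bounded by $d^{-6c\delta}$, each pruning costs at most a $d^{-6c\delta}$ factor, which is absorbed into the final bound.

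I will track four invariants through the iteration. (i) \emph{Edges:} each blow-up multiplies $m$ by $\geq 2(1-3\log^{-3}d)$ and the terminal down-persistent step multiplies by $\geq 1/\log^5 d$; for $s \leq \log^3 d$ one has $(1-3\log^{-3}d)^s \geq e^{-3}$, and combined with the pruning losses this yields $m(H) \geq 2^s m(G)/\log^7 d$ with room to spare. (ii) \emph{$\bX$-side degrees:} both cases of \Cref{lem:blowup} preserve $D(\bX')\leq D(\bX)$ and $\Gamma_i(\bX')\leq \Gamma_i(\bX)$, so $D(\bA)\leq D(\bX)$ and $\Gamma(\bA)\leq \Gamma(\bX)$. (iii) \emph{$\bY$-side degrees:} each blow-up at most doubles both, yielding $D(\bB)\leq 2^s D(\bY)$ and $\Gamma(\bB)\leq 2^s \Gamma(\bY)$. (iv) \emph{Typicality:} each application degrades the typicality parameter by $\ell/\sqrt{d} \leq d^{-7c\delta}$; since $\delta > 1/\log\log nd$ gives $d^{7c\delta}\gg \log^3 d$, the total degradation over $s\leq \log^3 d$ rounds is $o(1)$, so vertices remain $99$-typical and the hypothesis $c\leq 99$ of \Cref{lem:blowup} is preserved throughout.

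The termination bound on $s$ comes from the trivial edge bound: if the blow-up case persisted for $s$ rounds, $m(G_s) \geq 2^s e^{-3} m(G_0) = 2^s \cdot \Omega(\eps n^d/d^{c\delta})$ by \Cref{lem:peeling}(a), while $m(G_s) \leq nd\cdot n^d \leq \poly(d)\cdot n^d$; this forces $s \leq O(\log d) \leq \log^3 d$. The main subtlety and expected main obstacle is the bookkeeping around preserving hypotheses: specifically, ensuring that after each blow-up the new $G_s$ still has all vertices up-persistent and $\leq 99$-typical so that \Cref{lem:blowup} can be reapplied. The per-round up-persistence pruning (using that the property only depends on the ambient point, not the subgraph) and the $\ell/\sqrt{d}\leq d^{-7c\delta}$ typicality-degradation calculation handle both, and the accumulated multiplicative losses all fit inside the final $1/\log^7 d$ slack thanks to the assumption $\delta > 1/\log\log nd$.
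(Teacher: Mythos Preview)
Your approach is exactly the paper's, and the invariants you track are the right ones. There is one genuine gap, though: the claim that ``each pruning costs at most a $d^{-6c\delta}$ factor'' of edges does not follow from the bound $\cper(\tau-1)\log^5 d/\sqrt{d}\leq d^{-6c\delta}$ alone. That bound (via \Cref{clm:persistence} and \Cref{clm:G0_persistence}) controls the \emph{number} of non-up-persistent vertices --- at most $d^{-6c\delta}|\bX|$ one-valued and $d^{-6c\delta}|\bY|$ zero-valued --- not the fraction of edges touched. After $i$ blow-ups the removed $\bY$-side vertices can each have degree up to $2^i D(\bY)$, so the edge loss is at most $d^{-6c\delta}|\bY|\cdot 2^i D(\bY)$, and you need a reason why $|\bY|D(\bY)$ is comparable to $m(G)$.

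This is precisely where the regularity property of the seed graph enters: \Cref{lem:peeling}(b) gives $m(G)\geq d^{-3c\delta}|\bY|\Phi(\bY)\Gamma(\bY)\geq d^{-3c\delta}|\bY|D(\bY)$ (and similarly for $\bX$), so the per-round edge loss is at most $d^{-3c\delta}2^i m(G)$, which is indeed negligible against $m(G_i)\approx 2^i m(G)$. Your plan never invokes \Cref{lem:peeling}(b), and without it the pruning step is unjustified --- the removed vertices could in principle be high-degree and wipe out a constant fraction of the graph. Once you add this one ingredient, the rest of your bookkeeping (edge recurrence, degree invariants, typicality degradation, termination) goes through and matches the paper's proof.
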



\begin{proof} We use \Lem{blowup} to define the following process generating a sequence of violation subgraphs. The initial graph is $G_0 = (\bX_0,\bY_0,E_0)$ which is the seed regular violation subgraph obtained from~\Lem{peeling}.
	
	\begin{framed}
		\noindent For each $i \geq 1$:
		\begin{enumerate} 
			\item Obtain $G_{i-1}'$ by removing all vertices from $\bX_{i-1} \cup \bY_{i-1}$ that are not $(\tau-1,\log^{-5} d)$-up-persistent. 
			\item Invoke \Lem{blowup} with walk length $\tau-1$ on $G_{i-1}'$ to obtain $G_{i} = (\bX_i,\bY_i,E_i)$.
			\item If $G_{i}$ satisfies the down persistence condition of \Lem{blowup} then halt and return $G_i$.
			\item If $G_i$ satisfies the blowup condition of \Lem{blowup}, then continue.
		\end{enumerate}
	\end{framed}
	
	By \Lem{blowup}, if the process does not halt on step $i$, then we have the following recurrences. (Recall that by \Cref{obs:oo} we have $m(G) \geq \dtal{G}$ and so using item (a) of \Cref{lem:seed} and our assumption that $\eps \geq d^{-1/2}$ we have $m(G) \geq d^{-1/2-c\delta} n^d$. Thus, $\floor{2(1-3\log^{-4}d) \cdot m(G)} \geq 2(1-3\log^{-3}d) \cdot m(G)$ clearly holds.) 

	\begin{itemize}
		\item $m(G_i) \geq 2(1-3\log^{-3}d) \cdot m(G_{i-1}')$,
		\item $D_{G_i}(\bX_i) \leq D_{G_{i-1}}(\bX_{i-1})$, $\Gamma_{G_i}(\bX_i) \leq \Gamma_{G_{i-1}}(\bX_{i-1})$, and
		\item $D_{G_i}(\bY_i) \leq 2 D_{G_{i-1}}(\bY_{i-1})$, $\Gamma_{G_i}(\bY_i) \leq 2 \Gamma_{G_{i-1}}(\bY_{i-1})$.
	\end{itemize}
	\noindent
	Furthermore, we have the following claim that bounds the number of edges lost in step (1).
	
	\begin{claim} For every $i \geq 1$, we have $m(G_{i-1}') \geq m(G_{i-1}) - d^{-2c\delta} \cdot 2^{i-1} \cdot m(G)$. \end{claim}
	
	\begin{proof} By \Clm{G0_persistence_X}, the number of vertices we remove from $\bX_{i-1}$ in step (1) is at most $d^{-6c\delta} \cdot |\bX|$ and by \Clm{G0_persistence_Y} the number of vertices we remove from $\bY_{i-1}$ in step (1) is at most $d^{-6c\delta} \cdot |\bY|$. The number of edges we remove by deleting these vertices from $\bY_{i-1}$ is at most
		\begin{align} \label{eq:removal}
			d^{-6c\delta}|\bY|D_{G_{i-1}}(\bY_{i-1}) \leq d^{-6c\delta} 2^{i-1}|\bY| D_G(\bY) \leq d^{-3c\delta} 2^{i-1}m(G) 
		\end{align}
		where in the second inequality we used $D_G(\bY) \leq \Phi_G(\bY)\Gamma_G(\bY)$ and the regularity property on $G$ (item (b) of \Lem{seed}).
		
		An analogous argument bounds the number of removed edges when we delete non-persistent vertices from $\bX_{i-1}$. Thus the total number of edges removed is at most $d^{-2c\delta} 2^{i-1} m(G)$. \end{proof}
	
	\begin{claim} If $i \leq \log^3 d$ and the process has not halted by step $i$, then $m(G_{i}) \geq \Omega(2^{i} m(G))$. \end{claim}
	
	\begin{proof} For brevity, let $\alpha = 2(1-3\log^{-3}d)$ and $\beta = d^{-2c\delta}m(G)$. Using the above bounds, we get the recurrence 
		\[
		m(G_{i}) \geq \alpha \cdot m(G_{i-1}') \geq \alpha(m(G_{i-1}) - \beta 2^{i-1}) \text{.}
		\]
		Expanding this recurrence yields $m(G_{i}) \geq \alpha^i m(G) - \beta \sum_{j=1}^i \alpha^j \cdot 2^{i-j}$. Observe that the subtracted term can be bounded as 
		\[
		\beta \sum_{j=1}^i \alpha^j \cdot 2^{i-j} = d^{-2c\delta}2^i m(G) \sum_{j=1}^i (1-3\log^{-3}d)^j \leq d^{-c\delta} 2^i m(G)
		\]
		simply using the fact that $i \leq \log^3 d \ll d^{c\delta}$. The first term is
		\[
		\alpha^i m(G) = 2^i(1-3\log^{-3} d)^i m(G) \geq C \cdot 2^i m(G)
		\]
		for some constant $C$. Combining the above two bounds completes the proof. \end{proof}
	
	\begin{claim} \label{clm:halt} The above process halts in $s \leq \log^{3} d$ iterations. \end{claim}
	
	\begin{proof} Suppose that the above process has not halted by step $i = \log^3 d$. By the previous claim, the number of edges in $G_i$ is at least $C \cdot 2^i m(G) = C \cdot d^{\log^2 d} m(G)$ for some constant $C$. By~\Cref{obs:oo}, note that $m(G) \geq \dtal{G}$ and thus is $\geq \eps \cdot d^{-c\delta} \cdot n^d$ by item (a) of \Lem{seed}. Thus, the number of edges in $G_i$ is at least $C \cdot \eps \cdot d^{\log^2 d - c\delta} n^d$. Note that the total number of edges in the fully augmented hypergrid is at most $nd \cdot n^d$. Moreover, recall that we are assuming $nd \leq d^{c}$ and $\eps \geq d^{-1/2}$. Therefore, $m(G_i) \gg nd \cdot n^d$ and this is a contradiction. \end{proof}
	
	By \Clm{halt} and \Lem{blowup}, the process halts in some $s \leq \log^3 d$ number of steps producing $G_s(\bX_s,\bY_s,E_s)$ with the following properties. (Recall that by \Cref{obs:oo} we have $m(G) \geq \dtal{G}$ and so using item (a) of \Cref{lem:seed} and our assumption that $\eps \geq d^{-1/2}$ we have $m(G) \geq d^{-1/2-c\delta} n^d$. Thus, $\floor{\frac{m(G)}{\log^5 d}} \geq \frac{m(G)}{\log^6 d}$ clearly holds.)
	\begin{itemize}
		\item $m(G_s) \geq 2^s \cdot \frac{m(G)}{\log^6 d}$.
		\item All vertices in $\bX_s$ are $(\tau-1,0.6)$-down-persistent.
		\item $\Gamma_{G_s}(\bX_s) \leq \Gamma_G(\bX)$ and $\Gamma_{G_s}(\bY_s) \leq 2^s \Gamma_G(\bY)$.
		\item $D_{G_s}(\bX_s) \leq D_G(\bX)$ and $D_{G_s}(\bY_s) \leq 2^s D_G(\bY)$.
	\end{itemize}
	Note that by \Lem{blowup} and item (c) of \Lem{seed}, all vertices in $G_1,\ldots,G_s$ are $(98+\frac{s \tau}{\sqrt{d}})$-typical. Moreover, by our choice of $\tau$, we have $s \tau \ll \sqrt{d}$ and so all vertices in $G_1,\ldots,G_s$ are $99$-typical.
	
	One last time, we remove all vertices in $\bX_s \cup \bY_s$ that are not $(\tau-1,\log^{-5} d)$-up-persistent and obtain our final graph $H(\bA,\bB,E)$. Using a similar argument made above in \Eqn{removal}, the number of edges that are removed by deleting the non-persistent vertices from $\bY_s$ is at most
	\begin{align*}
		d^{-6c\delta}|\bY|D_{G_s}(\bY_{s}) \leq 2^{s} d^{-6c\delta} |\bY| D_G(\bY) \leq 2^{s}d^{-3c\delta}m(G) \leq d^{-3c\delta} m(G_s) \log^6 d \leq d^{-2c\delta} m(G_s)
	\end{align*}
	and an analogous argument bounds the number of edges lost when we remove the non-persistent vertices from $\bX_s$. Thus we have $m(H) \geq m(G_s)(1-2d^{-2c\delta}) \geq 2^s \frac{m(G)}{\log^7 d}$ and this completes the proof of \Lem{subgraph_after_blowup}. \end{proof}

\subsection{Using Red/Blue Lemma to Obtain the Final Red or Blue Nice Subgraph} \label{sec:derive_redblue}
In this section, we prove \Lem{good_subgraph} using the violation subgraph $H(\bA,\bB,E)$ obtained in the previous section (\Lem{subgraph_after_blowup}).

We need a key `` red/blue lemma" that shows the existence of a violation subgraph with sufficiently many colored edges.
If we have our hands on a large violation subgraph $G$ with few red edges (but has some other properties), then we can find another comparable sized violation subgraph $H$
all of whose edges are blue, and whose maximum degrees are bounded by those in $G$. The precise statement is given below. We defer the proof of this lemma to~\Sec{redblue}.

\begin{restatable}[Red/Blue Lemma] {lemma}{redbluelemma}\label{lem:redblue}~
Let $G(\bX,\bY,E)$ be a violation subgraph and $1 \leq \ell \leq \sqrt{d}/\log^5 (d/\eps)$ be a walk length such that the following hold.
	\begin{asparaenum}
		\item At most half the edges are red for walk length $\ell$.
		\item All vertices in $\bX \cup \bY$ are $(\ell, \log^{-5} d)$-up-persistent.
		\item All vertices in $\bX \cup \bY$ are $99$-typical.
	\end{asparaenum}
	Then there exists another violation subgraph $H(\bL,\bR,E')$ such that
	\begin{asparaenum}
		\item All edges are blue for walk length $\ell$ and $m(H) \geq \floor{m(G)/6}$.    
		\item $\Gamma_H(\bL) \leq \Gamma_G(\bX)$ and $\Gamma_H(\bR) \leq \Gamma_G(\bY)$.
		\item $D_H(\bL) \leq D_G(\bX)$ and $D_H(\bR) \leq D_G(\bY)$.\footnote{We remark that the inequalities $\Gamma_H(\bL) \leq \Gamma_G(\bX)$ and $D_H(\bL) \leq D_G(\bX)$ are not used later in our proof, but are still stated for the purpose of providing useful context to the reader.}
	\end{asparaenum}
\end{restatable}

	
We apply the above lemma to construct the final nice red or blue subgraph. 
We split into two cases depending on how many edges in $H$ are red.

\subsubsection{Case 1: At least half the edges of $H$ are red}
	
	In this case, we consider the graph $H_1(\bA,\bB,E')$ obtained by simply removing all the non-red edges from $H$. 
	We claim that $H_1$ makes progress towards a $(\sigma_1, \tau)$-nice red subgraph (\Cref{def:red-nice}).
	Condition (a) holds by definition. Condition (b) is satisfied due to~\Cref{lem:subgraph_after_blowup}, condition (5).
	Condition (e) is satisfied because $\ceil{\sigma_{\bY} d^{0.5 - 7c\delta}} \geq \tau - 1\geq 0.5 \ceil{\sigma_{\bY} d^{0.5 - 7c\delta}}$ (recall \Cref{sec:choice-tau}) and $1 \geq \sigma_{\bY} \geq \sigma_{\bX} = \sigma_1$.
	We need to establish condition (c) and (d). That is, we need to establish
	\begin{enumerate}
			\item[(c)] $\sigma_{\bX}\cdot \Phi_{H_1}(\bx) \leq \sqrt{d}$ for all $\bx \in \bA$
			\item[(d)] $\sigma_{\bX}\sum_{\bx \in \bA} \Phi_{H_1}(\bx) \geq \eps^2 \cdot n^d \cdot d^{-6c\delta}$
		\end{enumerate}
	\noindent
	Let $\bA' \subseteq \bA$ be the set of vertices $\bx\in \bA$ for which $\Phi_{H_1}(\bx) > \frac{\sqrt{d}}{\sigma_{\bX}}$. 
	If $|\bA'| \geq d^{-5c\delta}|\bX|$, then consider $H_1'(\bA', \bB, E'')$ obtained by deleting all vertices not in $\bA'$ from $\bA$, then for each $\bx \in \bA'$, removing out-going edges from $\bx$ so that $\Phi_{H_1'}(\bx) = \frac{\sqrt{d}}{\sigma_{\bX}}$.
	Conditions (a) and (b) still hold, and (c) holds by construction of $\bA'$ and $H_1'$. Note that condition (e) clearly still holds since we have not modified $\sigma_1$ or $\tau$. Furthermore, 
	\[
		\sum_{\bx \in \bA'} \Phi_{H_1'}(\bx) \geq d^{-5c\delta}|\bX| \cdot \frac{\sqrt{d}}{\sigma_{\bX}} ~~\Rightarrow~~ \sigma_{\bX}	\sum_{\bx \in \bA'} \Phi_{H_1'}(\bx)  \geq d^{-5c\delta} \cdot \frac{\eps}{d^{1/2+c\delta}} \cdot n^d \cdot \sqrt{d} = \eps \cdot n^d \cdot d^{-6c\delta} 
	\]
	where we used~\Cref{lem:seed}, part (d) for the lower bound on $|\bX|$. Note that this implies something slightly stronger than condition (d) above (the exponent of $\eps$ is $1$).
	
	Therefore, we may assume $|\bA'| \leq d^{-5c\delta}|\bX|$. In this case, let $H_1 = (\bA\setminus \bA', \bB, E')$ where we simply remove the $\bA'$ vertices.
	The number of edges this destroys is at most
	\[
	d^{-5c\delta}D(\bA)|\bX| \leq d^{-5c\delta} D(\bX) |\bX| \leq d^{-2c\delta}  m(G) \leq d^{-c\delta}m(H)
	\]
	where in the second inequality we used $D(\bX) \leq \Phi(\bX)\Gamma(\bX)$ and the regularity property (\Cref{lem:seed}, property (b)) of $G$. 
	Thus, the number of edges we have discarded is negligible, and condition (c) holds. In particular, the number of edges in $H_1$ is at least $m(H)/3$.
	We now prove condition (d) also holds. 

	\begin{claim} $\sigma_{\bX} \sum_{\bx\in \bA\setminus \bA'} \Phi_{H_1}(\bx) \geq \eps^2 \cdot n^d \cdot d^{-6c\delta}$. \end{claim}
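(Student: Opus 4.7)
The plan is to reduce the quantity $\sigma_{\bX}\sum_{\bx\in\bA\setminus\bA'}\Phi_{H_1}(\bx)$ to an expression involving only $|\bX|^2\Phi(\bX)$, and then invoke the Talagrand bound $\dtal{G}\geq \eps n^d d^{-c\delta}$ from~\Lem{seed}(a) to control that expression. The argument has three substitution steps chained together.

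\textbf{Step 1: Thresholded degree vs.\ edges.} For every $\bx\in\bA\setminus\bA'$, the total degree decomposes as $D_{H_1}(\bx)=\sum_i \Gamma_{H_1,i}(\bx)$, supported on exactly $\Phi_{H_1}(\bx)$ coordinates. Each nonzero $i$-degree is at most $\Gamma(\bA)$, so $D_{H_1}(\bx)\leq \Phi_{H_1}(\bx)\,\Gamma(\bA)$. Summing and using $\Gamma(\bA)\leq\Gamma(\bX)$ from~\Lem{subgraph_after_blowup} gives $\sum_{\bx\in\bA\setminus\bA'}\Phi_{H_1}(\bx)\;\geq\;m(H_1)/\Gamma(\bX)$.

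\textbf{Step 2: Edges in $H_1$ vs.\ $m(G)$.} Case 1 keeps at least half of $m(H)$ as red, and the preceding calculation already shows that removing $\bA'$ discards at most $d^{-c\delta}m(H)$ additional edges, hence $m(H_1)\geq m(H)/3$. Chaining with $m(H)\geq 2^s m(G)/\log^7 d$ from~\Lem{subgraph_after_blowup} and the regularity bound $m(G)\geq d^{-3c\delta}|\bX|\Phi(\bX)\Gamma(\bX)$ from~\Lem{seed}(b) yields
\[
\frac{m(H_1)}{\Gamma(\bX)}\;\geq\;\frac{2^s\, d^{-3c\delta}\,|\bX|\,\Phi(\bX)}{3\log^7 d}.
\]

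\textbf{Step 3: Talagrand plus squaring.} Evaluating $\dtal{G}$ at the constant coloring $\chi\equiv 1$ and using $\sqrt{\Phi_G(\bx)}\leq \sqrt{\Phi(\bX)}$ for every $\bx\in\bX$ gives
\[
\eps n^d d^{-c\delta}\;\leq\;\dtal{G}\;\leq\;\sum_{\bx\in\bX}\sqrt{\Phi_G(\bx)}\;\leq\;|\bX|\sqrt{\Phi(\bX)},
\]
so $|\bX|^2\Phi(\bX)\geq \eps^2 n^{2d}d^{-2c\delta}$. Multiplying Step~2 by $\sigma_{\bX}=|\bX|/n^d$ and substituting,
\[
\sigma_{\bX}\sum_{\bx\in\bA\setminus\bA'}\Phi_{H_1}(\bx)\;\geq\;\frac{2^s|\bX|^2\Phi(\bX)}{3n^d\log^7 d\cdot d^{3c\delta}}\;\geq\;\frac{2^s\eps^2 n^d}{3\log^7 d\cdot d^{5c\delta}}\;\geq\;\eps^2 n^d d^{-6c\delta},
\]
where the final step absorbs $3\log^7 d$ into $d^{c\delta}$ via~\Cref{rem:delta_d_remark} and drops the factor $2^s\geq 1$.

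\textbf{Main obstacle.} The only nontrivial manoeuvre is Step~3: one must recognise that the colored Talagrand bound can be parlayed, via the simple inequality $\sqrt{\Phi_G(\bx)}\leq \sqrt{\Phi(\bX)}$, into the \emph{squared} algebraic inequality $|\bX|^2\Phi(\bX)\geq \eps^2 n^{2d}d^{-O(\delta)}$. Without this squaring step, the bookkeeping yields only an $\eps$ factor on the right-hand side and does not match the $\eps^2$ required by condition~(d) of~\Def{red-nice}. Everything else is routine cancellation of the $d^{O(\delta)}$ and $\polylog(d)$ slack permitted by our choice of $\delta$.
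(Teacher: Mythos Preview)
Your proof is correct and follows essentially the same route as the paper: bound $\sum\Phi_{H_1}(\bx)$ below by $m(H_1)/\Gamma(\bX)$, chain through $m(H)\to m(G)\to |\bX|\Phi(\bX)\Gamma(\bX)$ via \Lem{subgraph_after_blowup} and \Lem{seed}(b), and then square the Talagrand bound to get the $\eps^2$. The one minor difference is in Step~3: the paper passes through $\sum_{\bx\in\bX}\Phi_G(\bx)$ and applies Jensen's inequality to $\sqrt{\cdot}$, whereas you use the cruder pointwise bound $\sqrt{\Phi_G(\bx)}\leq\sqrt{\Phi(\bX)}$ and square directly; your shortcut is slightly cleaner and lands at the same inequality $\sigma_{\bX}|\bX|\Phi(\bX)\geq \eps^2 n^d d^{-2c\delta}$.
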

	
	\begin{proof} 
		For any $\bx\in \bA\setminus \bA'$, we have $\Phi_{H_1}(\bx) \geq \frac{D(\bx)}{\Gamma(\bx)}$ and thus $\sum_{\bx \in \bA\setminus \bA'} \Phi_{H_1}(\bx) \geq \frac{m(H)/3}{\Gamma(\bA)}$.
		Since $\Gamma(\bA) \leq \Gamma(\bX)$ we have
		\begin{align} \label{eq:PhiAPhiX'}
			&	\sum_{\bx \in \bA\setminus \bA'} \Phi_{H_1}(\bx) \geq \frac{m(H)}{3\Gamma(\bA)} \geq \frac{2^s \cdot m(G)}{3 \Gamma(\bX) \log^7 d} \geq \frac{d^{-3c\delta}|\bX|\Phi(\bX)\Gamma(\bX)}{3 \Gamma(\bX) \log^7 d} \nonumber \\ 
			&\geq d^{-4c\delta}|\bX|\Phi(\bX) \geq d^{-4c\delta} \sum_{\bx \in \bX} \Phi_G(\bx)\text{.} 
		\end{align}
	where in the second inequality we used \Cref{lem:subgraph_after_blowup}, property (1) to lower bound the number of edges in $H$ with that of $G$.
	In the third inequality we used the regularity property (property (b) of~\Cref{lem:peeling}), in the fourth we used $d^{c\delta} \gg 2\log^7 d$ for large enough $d$, and the fifth inequality 
	uses the trivial upper bound $\Phi(\bX) \geq \Phi_G(\bx)$ for all $\bx \in \bX$.
	
	Now we apply the fact (Item (a) of \Cref{lem:peeling}) that $\dtal{G}$ is large. Using the coloring $\chi \equiv 1$ for edges in $G$, we get
	\[
 \sum_{\bx \in \bX} \sqrt{\Phi_G(\bx)} \geq \dtal{G} \geq \eps \cdot d^{-c\delta} \cdot n^d ~~\Rightarrow~~ \Exp_{\bx\in \bX} [\sqrt{\Phi_G(\bx)}] \geq \frac{\eps\cdot d^{-c\delta}}{\sigma_{\bX}}
	\]
	Jensen's inequality gives 
	\[
		\Exp_{\bx\in \bX} [\Phi_G(\bx)] \geq \frac{\eps^2 \cdot d^{-2c\delta}}{\sigma^2_{\bX}} ~~\Rightarrow~~ \frac{\sigma^2_{\bX}}{|\bX|} \sum_{\bx \in \bX} \Phi_G(\bx) \geq \eps^2 d^{-2c\delta}
		~~\Rightarrow \sigma_{\bX} \sum_{\bx \in \bX} \Phi_G(\bx) \geq \eps^2 d^{-2c\delta} n^d
	\]
	Plugging into  \Eqn{PhiAPhiX'} proves the claim. \end{proof}
%
%
%

	\subsubsection{Case 2: At most half the edges of $H$ are red}
	
	In this case we invoke the Red/Blue lemma, \Lem{redblue} to obtain a violation subgraph $H_2 = (\bL,\bR,E')$ with the following key properties. (Recall that by \Cref{obs:oo} we have $m(G) \geq \dtal{G}$ and so using item (a) of \Cref{lem:seed} and our assumption that $\eps \geq d^{-1/2}$ we have $m(G) \geq d^{-1/2-c\delta} n^d$. Thus, $\floor{m(G)/6} \geq m(G)/7$ clearly holds.)
	\begin{itemize}
		\item[(P1)] All edges are blue and $m(H_2) \geq 2^s \frac{m(G)}{7 \log^7 d}$.
		\item[(P2)] $\Gamma(\bR) \leq \Gamma(\bB) \leq 2^s \cdot \Gamma(\bY)$.
		\item[(P3)]  $D(\bR) \leq D(\bB) \leq 2^s \cdot D(\bY)$.
	\end{itemize}
	We claim that $H_2$ makes progress towards a $(\sigma_2, \tau)$-nice blue subgraph (\Cref{def:blue-nice}).
Condition (a) holds by definition. 
Condition (d) is satisfied because $\tau \geq 0.5 \sigma_{\bY} d^{0.5 - 7c\delta}$ and $\sigma_{\bY} =\sigma_2$.
We need to establish condition (b) and (c). That is, we need to establish
\begin{enumerate}
	\item[(b)] $\sigma_{\bY}\cdot \Phi_{H_2}(\by) \leq \sqrt{d}$ for all $\by \in \bR$
	\item[(c)] $\sigma_{\bY}\sum_{\by \in \bR} \Phi_{H_2}(\by) \geq \eps^2 \cdot n^d \cdot d^{-6c\delta}$
\end{enumerate}
As in Case 1, we begin by removing high degree vertices. Let $\bR' \subseteq \bR$ be the vertices $\by\in \bR$ which have  $\Phi_{H_2}(\by) > \frac{\sqrt{d}}{\sigma_{\bY}}$. 
If $|\bR'| \geq d^{-5c\delta}|\bY|$, then we would just focus on $H_2'(\bR', \bL, E'')$ obtained by removing all vertices not in $\bR'$ from $\bR$, then for each $\by \in \bR'$, removing in-coming edges so that $\Phi_{H_2'}(\by) = \frac{\sqrt{d}}{\sigma_{\bY}}$. Then, $H_2'$ satisfies (b) and (c) for a very similar reason as in Case 1, and after these modifications (a) and (d) still hold as well.
Thus, we may assume $|\bR'|$ is smaller than $d^{-5c\delta} |\bY|$ and we define $H_2(\bL, \bR\setminus \bR', E')$, and this leads to a negligible decrease in the number of edges.
Condition (b) holds by design, and the proof that condition (c) holds is similar. We provide it for completeness.
\noindent

	\begin{claim} $\sigma_{\bY} \sum_{\by\in \bR\setminus \bR'} \Phi_{H_2}(\by) \geq \eps^2 \cdot n^d \cdot d^{-6c\delta}$. \end{claim}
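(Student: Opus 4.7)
The plan is to mimic almost exactly the proof of the analogous claim in Case 1, with the roles of $\bX$ and $\bY$ (and of $\bA, \bR$) swapped. The key inputs are the same: the regularity of the seed graph $G$ from \Lem{peeling}, the edge-count and degree control on $H_2$ that came out of the Red/Blue lemma (properties P1--P3), and the directed Talagrand bound $\dtal{G} \geq \eps \cdot d^{-c\delta} \cdot n^d$ applied now with the constant $\chi \equiv 0$ coloring so that the left hand side becomes $\sum_{\by \in \bY} \sqrt{\Phi_G(\by)}$.

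First I would lower bound $\sum_{\by \in \bR \setminus \bR'} \Phi_{H_2}(\by)$ by $m(H_2)/(2 \Gamma(\bR))$, using the standard inequality $\Phi_{H_2}(\by) \geq D_{H_2}(\by)/\Gamma(\by)$ together with the fact that the edges destroyed by removing $\bR'$ are a negligible fraction of $m(H_2)$ (by the same counting used above for $\bA'$). Then, P2 gives $\Gamma(\bR) \leq 2^s \Gamma(\bY)$, and P1 gives $m(H_2) \geq 2^s m(G)/(7 \log^7 d)$, so the two factors of $2^s$ cancel, leaving a bound of the form $m(G)/(\Gamma(\bY) \cdot \log^{O(1)} d)$. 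Applying the regularity property of $G$ from part (b) of \Lem{peeling} on the $\bY$-side, namely $m(G) \geq d^{-3c\delta} |\bY| \Phi(\bY) \Gamma(\bY)$, the $\Gamma(\bY)$ cancels and the $\log$-factor is absorbed by $d^{c\delta}$, yielding
\[
\sum_{\by \in \bR \setminus \bR'} \Phi_{H_2}(\by) \;\geq\; d^{-4c\delta} \cdot |\bY| \cdot \Phi(\bY) \;\geq\; d^{-4c\delta} \sum_{\by \in \bY} \Phi_G(\by),
\]
where the last step uses $\Phi(\bY) = \max_{\by \in \bY} \Phi_G(\by)$.

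Second, I would pass from the sum of $\Phi_G(\by)$ to a sum of $\sqrt{\Phi_G(\by)}$ via Jensen's inequality, mirroring the final calculation of Case 1: using the $\chi \equiv 0$ coloring, $\sum_{\by \in \bY} \sqrt{\Phi_G(\by)} \geq \dtal{G} \geq \eps \cdot d^{-c\delta} \cdot n^d$, and then $\mathbb{E}_{\by \in \bY}[\sqrt{\Phi_G(\by)}] \geq \eps d^{-c\delta}/\sigma_{\bY}$, whence by Jensen $\mathbb{E}_{\by \in \bY}[\Phi_G(\by)] \geq \eps^2 d^{-2c\delta}/\sigma_{\bY}^2$, equivalently $\sigma_{\bY} \sum_{\by \in \bY} \Phi_G(\by) \geq \eps^2 \cdot n^d \cdot d^{-2c\delta}$. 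Plugging into the chain above gives $\sigma_{\bY} \sum_{\by \in \bR \setminus \bR'} \Phi_{H_2}(\by) \geq \eps^2 \cdot n^d \cdot d^{-6c\delta}$, which is exactly what was claimed.

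No step looks hard; the claim is really just a symmetric restatement of the Case~1 claim, and the only bookkeeping concern is making sure that the factors of $2^s$ coming from the blow-up iteration (which now appear both in $m(H_2)$ via P1 and in $\Gamma(\bR)$ via P2) cancel rather than accumulate. They do, because the persist-or-blow-up lemma blows up the $\bY$-side degrees by at most the same factor $2^s$ by which it multiplies the edge count, so the ratio $m(H_2)/\Gamma(\bR)$ is of the same order as $m(G)/\Gamma(\bY)$ up to polylogarithmic slack absorbed into $d^{O(\delta)}$.
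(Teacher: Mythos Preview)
The proposal is correct and takes essentially the same approach as the paper: both lower bound $\sum_{\by \in \bR\setminus \bR'} \Phi_{H_2}(\by)$ by $m(H_2)/(2\Gamma(\bR))$, use (P1) and (P2) so that the two factors of $2^s$ cancel, apply the $\bY$-side regularity of $G$ from \Lem{peeling}(b) to reach $d^{-4c\delta}\sum_{\by\in\bY}\Phi_G(\by)$, and finish with the $\chi\equiv 0$ coloring plus Jensen exactly as in Case~1. Your explicit remark that the $2^s$ factors must cancel (and why they do) is precisely the one point that distinguishes this case from Case~1, and the paper handles it identically.
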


\begin{proof} 
	For any $\by\in \bR\setminus \bR'$, we have $\Phi_{H_2}(\by) \geq \frac{D(\by)}{\Gamma(\by)}$ and thus $\sum_{\by \in \bR\setminus \bR'} \Phi_{H_2}(\by) \geq \frac{m(H)/3}{\Gamma(\bR)}$.
	Since $\Gamma(\bR) \leq 2^s \cdot \Gamma(\bY)$ we have
	\begin{align} \label{eq:PhiAPhiX}
		&	\sum_{\by \in \bR\setminus \bR'} \Phi_H(\by) \geq \frac{m(H)}{3\Gamma(\bR)} \geq \frac{2^s \cdot m(G)}{2^s \cdot 21 \Gamma(\bY) \log^7 d} \geq \frac{d^{-3c\delta}|\bY|\Phi(\bY)\Gamma(\bY)}{21 \Gamma(\bY) \log^7 d} \nonumber \\ 
		&\geq d^{-4c\delta}|\bY|\Phi(\bY) \geq d^{-4c\delta} \sum_{\by \in \bY} \Phi_G(\by)\text{.} 
	\end{align}
	where in the second inequality we used~\Cref{lem:subgraph_after_blowup}, part 1, to lower bound the number of edges in $H$ with that of $G$, the original seed graph from~\Cref{lem:peeling}.
	In the third inequality we used the regularity property (property 2 of~\Cref{lem:peeling}), in the fourth we used $d^{c\delta} \gg 21\log^7 d$ for large enough $d$, and the fifth inequality 
	uses the trivial upper bound $\Phi(\bY) \geq \Phi_G(\by)$ for all $\by \in \bY$.

	The rest of the proof is the same as Case 1 except we apply the coloring $\chi \equiv 0$ for edges in $G$. We omit this very similar calculation.
\end{proof}

\noindent
These two cases conclude the proof of~\Cref{lem:good_subgraph}. 
All that remains is to prove the Red/Blue lemma,~\Cref{lem:redblue} and the Persist-or-Blow-up lemma,~\Cref{lem:blowup}.
We prove these in the subsequent two sections, and both of these use the translation of violation subgraphs idea.

\section{Proof of the Red/Blue Lemma, \Lem{redblue}}\label{sec:redblue}
Let us recall the red/blue lemma.

\redbluelemma*
\begin{proof} We first recall the definition of $p_{\bx,\ell}(\bx')$ in \Def{walkpdf}. For a fixed $\bx$, consider the process of sampling a hypercube $\bH \sim \HH(\bx)$ and then sampling $\bz \sim \cU_{\bH,\ell}(\bx)$. Recall from \Fact{dists} that this is one of three equivalent ways of expressing our random walk distribution. Given $\bx,\bx',\ell$, we have
\[
p_{\bx,\ell}(\bx') = \Pr\left[\bx,\bx' \in \bH_{100} \text{ and } \bz = \bx'\right] \text{.}
\]
Since these are probabilities, we trivially obtain	
\begin{equation}\label{eq:prob-triv}
	\text{For any $\bx, \bx'$}, ~~\sum_{\bx' \in \bX} p_{\bx,\ell}(\bx') \leq 1~~\text{and}~~\sum_{\bx \in \bX} p_{\bx',\ell}(\bx) \leq 1
\end{equation}

We use these probabilities  to set up a flow problem as follows.
Recall the definition of red and blue edges (\Cref{def:red} and~\Cref{def:blue}).
Let $B$ denote the set of all edges in the fully augmented hypergrid that are blue for walk length $\ell$. 
For every non-red edge $(\bx,\by)$ of $G$ and every shift $\bs \in \supp(\upshift{\bx}{\ell})$, if the edge $e = (\bx + \bs, \by + \bs)$ is blue, then we put $\flow(e) := p_{\bx,\ell}(\bx+\bs)$ units of flow on $e$. Note that this value is also equal to $p_{\by, \ell}(\by + \bs)$ since both $\bx$ and $\by$ are $99$-typical and any choice that takes $\bx$ to $\bx+\bs$ can be coupled
with one which takes $\by$ to $\by+\bs$.

\begin{claim} \label{clm:flow} Every non-red edge of $G$ inserts at least $0.93$ units of flow on the edges in	 $B$. \end{claim}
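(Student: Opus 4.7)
The plan is to lower-bound the inserted flow
\[
F \;=\; \sum_{\bs} p_{\bx,\ell}(\bx+\bs)\cdot \mathbf{1}[(\bx+\bs, \by+\bs) \in B]
\]
by $0.95$. I would decompose $F$ as (total mass) $-$ (mass on shifts $\bs$ where $(\bx+\bs,\by+\bs)$ is not a violation) $-$ (mass on shifts where it is a violation but not blue), and bound each piece.

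For the total mass, $\sum_{\bs} p_{\bx,\ell}(\bx+\bs) \geq 1-(\eps/d)^5$ by the $99$-typicality of $\bx$ together with the walk-length bound $\ell \leq \sqrt{d}/\log^5(d/\eps)$: the slack $\sqrt{400d\log(d/\eps)}-\sqrt{396d\log(d/\eps)} \approx 0.1\sqrt{d\log(d/\eps)}$ between the $100$-th and $99$-th middle-layer thresholds far exceeds $\ell$, so whenever $\bx \in \bH_{99}$ (probability $\geq 1-(\eps/d)^5$), the walk destination also stays in $\bH_{100}$. For the non-violation mass, writing $(\bx,\by)$ as an $i$-edge, the shift distributions $\upshift{\bx}{\ell}$ and $\upshift{\by}{\ell}$ are identical conditioned on the walk's coordinate set $R$ not containing $i$ (an event of probability $1-\ell/d$), since $\bx_r = \by_r$ for $r \neq i$. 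Combined with $(\ell,\log^{-5} d)$-up-persistence of $\bx$ and $\by$, and the pointwise bound $p_{\bx,\ell}(\bx+\bs) \leq \Pr_{\bs' \sim \upshift{\bx}{\ell}}[\bs'=\bs]$ (the former merely imposes extra middle-layer restrictions), the non-violation mass is at most $2\log^{-5} d + \ell/d = o(1)$.

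The main step is bounding the ``violation but not blue'' mass. For each such $\bs$, the non-blueness of $(\bx+\bs, \by+\bs)$, after the substitution $\bz'' = \bz+\bs$ with $\bz \in I(\bx,\by)$, reads $\EX_{\bz \in I(\bx,\by)}\Pr_\bt[f(\bz+\bs-\bt)=1] < 0.01$; Markov's inequality then forces at least $90\%$ of $\bz \in I(\bx,\by)$ to satisfy $\Pr_\bt[f(\bz+\bs-\bt)=1] \leq 0.1$, i.e., $\bz+\bs$ is $\ell$-$\mzb$. Writing $N$ for the set of such bad shifts and swapping summations yields
\[
0.9 \cdot \sum_{\bs \in N} p_{\bx,\ell}(\bx+\bs) \;\leq\; \frac{1}{|I(\bx,\by)|}\sum_{\bz \in I(\bx,\by)} \sum_{\bs} p_{\bx,\ell}(\bx+\bs)\cdot \mathbf{1}[\bz+\bs \text{ is } \mzb].
\]
For each fixed $\bz \in I(\bx,\by)$, I would again exploit that $\bx$ and $\bz$ differ only in coordinate $i$, so $\upshift{\bx}{\ell}$ and $\upshift{\bz}{\ell}$ are identical on $\{i \notin R\}$; combined with $p_{\bx,\ell}(\bx+\bs) \leq \Pr_{\bs' \sim \upshift{\bx}{\ell}}[\bs'=\bs]$ this gives $\sum_{\bs} p_{\bx,\ell}(\bx+\bs)\cdot\mathbf{1}[\bz+\bs \text{ is } \mzb] \leq \Pr_{\bz' \sim \cU_{\ell}(\bz)}[\bz' \text{ is } \mzb] + \ell/d$. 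Averaging over $\bz \in I(\bx,\by)$ and invoking non-redness of $(\bx,\by)$ bounds the right-hand side by $0.01 + \ell/d$, so the bad mass is at most $\approx 0.012 + o(1)$.

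The main obstacle I anticipate is this last comparison between walks rooted at $\bx$ versus at interior points $\bz$: the $\{i \notin R\}$-coupling handles the bulk of the shift distribution exactly, and the remaining $\ell/d$-probability event where $i \in R$ can be absorbed as an additive error because $\ell/d = o(1)$ under the walk-length hypothesis. Summing the three contributions gives
\[
F \;\geq\; \bigl(1-(\eps/d)^5\bigr) - \bigl(2\log^{-5} d + \ell/d\bigr) - \bigl(0.012 + o(1)\bigr) \;\geq\; 0.95
\]
for $d$ sufficiently large, establishing the claim.
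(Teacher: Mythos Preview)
Your proposal is correct and uses the same ingredients as the paper's proof: $99$-typicality for the middle-layer mass, $(\ell,\log^{-5} d)$-up-persistence of both endpoints together with the $\{i \notin R\}$-coupling for the violation condition, and non-redness plus Markov for the $\mzb$/blue step. The only organizational difference is that the paper argues forward (defines events $\cE_1,\dots,\cE_5$, shows $\cE_4$ says ``at least half of the shifted interval is not $\mzb$'' and holds with probability $\geq 0.97$ via Markov on $1-X_\bs$, then observes half-not-$\mzb$ implies blue via $0.5\times 0.1 \geq 0.01$), whereas you argue contrapositively (not blue $\Rightarrow$ by Markov $\geq 90\%$ of the interval is $\mzb$, then bound the mass of such shifts using non-redness after swapping the sums); the two are dual formulations of the same computation.
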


\begin{proof} Fix a non-red edge $(\bx,\by)$, and let $i$
denote its dimension. Generate $\bH \sim \HH(\bx)$ and $\bs \sim \cU\cS_{\bH,\ell}(\bx)$. Note that it is equivalent to directly sample $\bs \sim \upshift{\bx}{\ell}$.
We then consider the random edge $e = (\bx + \bs, \by + \bs)$. We set $\bx' = \bx + \bs$ and $\by' = \by + \bs$. Let us define the following
series of events. (i) $\cE_1$: $\bs_i = 0$. (ii) $\cE_2$: $f(\bx') = 1$. (iii) $\cE_3$: $f(\by') = 0$. (iv) $\cE_4$: at least half of $I(\bx',\by')$ is \emph{not} $\ell$-mostly-zero-below, (v) $\cE_5$: $\bx,\bx' \in \bH_{100}$. We will show that whenever $\cE_2$, $\cE_3$, and $\cE_4$ occur, the edge $(\bx',\by')$ is blue by definition. Therefore, recalling the definition of $p_{\bx,\ell}(\bx')$, the edge $(\bx,\by)$ inserts at least $\Pr[\wedge_{j=1}^5 \cE_j]$ units of flow in $B$. Subsequently, we will show that the probability of this event is at least $0.95$ and this will prove the claim.

Since $\norm{\bs}_0 \leq \ell \leq \sqrt{d}$, we have $\Pr[\cE_1] \geq 1-1/\sqrt{d}$. Since $\bx$ is $(\ell,\log^{-5} d)$-up-persistent, $\Pr[\cE_2] \geq 1-\log^{-5} d$. 
Since $\by$ is $(\ell,\log^{-5} d)$-up-persistent, $\Pr[\cE_3] \geq 1-\log^{-5} d$. By a union bound we have
\[
	\Pr[\neg \cE_1 \vee \neg \cE_2 \vee \neg \cE_3] \leq \Pr[\neg \cE_1] + \Pr[\neg \cE_2] + \Pr[\neg \cE_3 ~|~ \cE_1] \leq \frac{1}{\sqrt{d}} + \frac{1}{\log^5 d} + \frac{1}{\log^5 d} \leq 3 \log^{-5} d
\]
and so
\begin{align} \label{eq:prob123}
    \Pr[\cE_1 \wedge \cE_2 \wedge \cE_3] \geq 1-3\log^{-5} d \text{.}
\end{align}
To deal with $\cE_4$, we bring in the non-redness of our edge $(\bx,\by)$. By definition, 
\[
\Pr_{\bz \in I(\bx,\by)} \Pr_{\bz' \sim \updist{\bz}{\ell}}[\bz' \ \textrm{is not $\ell$-$\mzb$}] \geq 0.99
\]
In terms of shifts, we can express this bound as
\[
\Pr_{\bz \in I(\bx,\by)} \Pr_{\bs \sim \upshift{\bz}{\ell}}[\bz+\bs \ \textrm{is not $\ell$-$\mzb$}] \geq 0.99
\]
Since the probability of $\cE_1$ is at least $1-o(1)$, we have
\[
\Pr_{\bz \in I(\bx,\by)} \Pr_{\bs \sim \upshift{\bz}{\ell}}[\bz+\bs \ \textrm{is not $\ell$-$\mzb$} ~|~ \cE_1] \geq 0.98 
\]
Note that conditioned on $\cE_1$, the distributions $\upshift{\bz}{\ell}$ and $\upshift{\bx}{\ell}$ are identical. Hence,
\[
\Pr_{\bs \sim \upshift{\bx}{\ell}} \Pr_{\bz \in I(\bx,\by)} [\bz+\bs \ \textrm{is not $\ell$-$\mzb$} ~|~ \cE_1] \geq 0.98
\]
Let $X_\bs$ be the fraction of points in $I(\bx+\bs,\by+\bs)$ that are not $\ell$-$\mzb$. By linearity
of expectation, $\EX_\bs[X_{\bs} ~|~ \cE_1] \geq 0.98$. Hence $\EX_{\bs}[1-X_{\bs} ~|~ \cE_1] \leq 0.02$
and by Markov's inequality, $\Pr_{\bs}[1-X_{\bs} > 0.5 ~|~ \cE_1] \leq 2/50$. Hence, $\Pr_{\bs}[X_\bs \geq 0.5 ~|~ \cE_1] \geq 48/50 = .96$. Since $\Pr[\cE_1] = 1-o(1)$, we have $\Pr[\cE_4] = \Pr_{\bs}[X_{\bs} \geq 0.5] \geq 0.95$. 

Combining with \Eqn{prob123}, we have $\Pr[\wedge_{j=1}^4 \cE_j] \geq 0.94$. When $\wedge_{j=1}^4 \cE_j$ occurs, the edge $(\bx',\by')$ is a violated edge and at least half of $I(\bx',\by')$ is not $\ell$-$\mzb$. For $\bz' \in I(\bx',\by')$ that is not $\ell$-$\mzb$, by definition $\Pr_{\bw \sim \downdist{\bz'}{\ell}} [f(\bw) = 1] \geq 0.1$.
Hence,
\[
\Pr_{\bz' \in_R I(\bx',\by')} \Pr_{\bw \sim \downdist{\bz'}{\ell}} [f(\bw) = 1] \geq 0.5 \times 0.1 \geq 0.01
\]
We conclude that $(\bx',\by')$ is blue, whenever $\wedge_{j=1}^4 \cE_j$ occurs.

Stepping back, with probability at least $0.94$ over the shift $\bs \sim \upshift{\bx}{\ell}$, the edge $(\bx+\bs,\by+\bs)$ is blue. Finally, since all points in $\bX$ are $99$-typical, we have
$\Pr[\bx \in \bH_{99}] \geq 1 - (\eps/d)^5$, and conditioned on this event we have $\bx' \in \bH_{100}$ since $\ell \ll \sqrt{d}$. 
Together, we get
 $\Pr[\cE_5] \geq 1 - (\eps/d)^5 \geq 0.99$. Thus, by a union bound $\Pr[\wedge_{j=1}^5 \cE_j] \geq 0.93$ and so the amount of flow that $(\bx,\by)$ inserts is at least $0.93$. \end{proof}

Let $E' \subseteq B$ denote the set of blue edges which receive non-zero flow. Let $H(\bL,\bR,E')$ denote the bipartite graph on these edges. 
Since $\ell \leq \sqrt{d}/\log^5 (d/\eps)$, by the reversibility \Lem{prob-rev}, $p_{\bx,\ell}(\bx') \leq 2p_{\bx',\ell}(\bx)$ for any $\bx \in \bX$, $\bx' \in \bL$ and $p_{\by,\ell}(\by') \leq 2 p_{\by',\ell}(\by)$ for any $\by \in \bY$, $\by' \in \bR$. Using this bound we're able to establish the desired capacity constraints on the flow as follows\footnote{We replaced the upper bound $1+\frac{1}{\log^3 d}$ by ``$2$'' as that serves our purpose; hopefully, the reader is not confused.
}
\begin{claim} [Edge Congestion] \label{clm:flow-edge} The total flow on an edge $(\bx',\by') \in B$ is at most $2$. \end{claim}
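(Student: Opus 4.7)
The plan is a direct two-line computation leveraging the reversibility lemma (\Lem{prob-rev}) and the fact that the $p_{\bx',\ell}(\cdot)$ values are (truncations of) a probability distribution over destinations of a single random walk.

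First, I would characterize which non-red edges of $G$ can contribute flow to a fixed blue edge $(\bx',\by') \in B$. By construction, an edge $(\bx,\by) \in E$ contributes only via some shift $\bs$ with $\bx + \bs = \bx'$ and $\by + \bs = \by'$. Such a shift must satisfy $\bs = \bx' - \bx = \by' - \by$, so $\by - \bx = \by' - \bx'$. In particular, $(\bx,\by)$ is parallel to $(\bx',\by')$: if the latter is an $i$-edge of length $a$ (i.e., $\by' = \bx' + a\be_i$), then so is the former, and the contributing edges are indexed simply by the choice of $\bx \preceq \bx'$ with $(\bx, \bx + a\be_i) \in E$. Let $S \subseteq [n]^d$ denote this set of starting points. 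Then the total flow on $(\bx',\by')$ equals $\sum_{\bx \in S} p_{\bx,\ell}(\bx')$.

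Next, since $\ell \leq \sqrt{d}/\log^5(d/\eps)$, the reversibility lemma \Lem{prob-rev} applies pointwise: for every $\bx \in S$ we have $p_{\bx,\ell}(\bx') \leq (1 + \log^{-3} d)\, p_{\bx',\ell}(\bx) \leq 2\, p_{\bx',\ell}(\bx)$. Summing over $\bx \in S$, the total flow is at most $2 \sum_{\bx \in S} p_{\bx',\ell}(\bx)$.

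Finally, I would bound this sum by $1$. Recall from \Def{walkpdf} that
\[
p_{\bx',\ell}(\bz) = \EX_{\bH \sim \HH(\bx')}\bigl[\mathbf{1}(\bx' \in \bH_{100} \wedge \bz \in \bH_{100}) \cdot \Pr_{\bw \sim \cD_{\bH,\ell}(\bx')}[\bw = \bz]\bigr].
\]
Thus $\sum_{\bz} p_{\bx',\ell}(\bz) \leq \EX_{\bH \sim \HH(\bx')}\bigl[\sum_{\bz \in \bH} \Pr_{\bw \sim \cD_{\bH,\ell}(\bx')}[\bw = \bz]\bigr] = 1$, since for each fixed $\bH$ the inner sum is a total probability. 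Restricting to $\bz \in S$ only shrinks the sum. Combining the two inequalities gives total flow $\leq 2$, as claimed. The only nontrivial input is the reversibility lemma; everything else is a structural observation about the flow and a trivial total-probability bound, so no step should pose a real obstacle.
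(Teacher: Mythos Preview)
Your proposal is correct and essentially identical to the paper's proof: both bound the total flow by $\sum_{\bx} p_{\bx,\ell}(\bx')$, apply the reversibility lemma to get $\leq 2\sum_{\bx} p_{\bx',\ell}(\bx)$, and use the total-probability bound $\leq 1$. Your explicit identification of the contributing set $S$ via the parallel-edge observation is a bit more detailed than the paper (which simply sums over all of $\bX$), but the argument is the same.
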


\begin{proof} By construction, the total flow on an edge $(\bx',\by')$ is at most 
\[
\sum_{\bx \in \bX} p_{\bx,\ell}(\bx') \leq 2 \sum_{\bx \in \bX} p_{\bx',\ell}(\bx) \leq 2
\] 
since by \eqref{eq:prob-triv} we have $\sum_{\bx \in \bX} p_{\bx',\ell}(\bx) \leq 1$. \end{proof}

\begin{claim} [Vertex Congestion] \label{clm:flow-vert} The following hold.
\begin{asparaenum}
    \item The total amount of flow through a vertex $\bx' \in \bL$ is at most $2D_G(\bX)$.
    \item The total amount of flow through a vertex $\by' \in \bR$ is at most $2D_G(\bY)$.
    \item For all $i \in [d]$, the total amount of $i$-flow through a vertex $\bx' \in \bL$ is at most $2\Gamma_{G,i}(\bX)$.
    \item For all $i \in [d]$, the total amount of $i$-flow through a vertex $\by' \in \bR$ is at most $2\Gamma_{G,i}(\bY)$.
\end{asparaenum}
\end{claim}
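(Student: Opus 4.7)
The plan is to mimic the edge-congestion argument of \Clm{flow-edge} by organizing the flow at each vertex according to its source in $G$ and then invoking the reversibility lemma, \Lem{prob-rev}. I will treat parts (1) and (2) explicitly; parts (3) and (4) follow by the same arguments, with the neighborhood count restricted from all edges to $i$-edges and $D(\bX), D(\bY)$ replaced by $\Gamma_i(\bX), \Gamma_i(\bY)$.

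For (1), I will group the flow arriving at $\bx'\in\bL$ by its source $\bx\in\bX$ with $\bx\preceq\bx'$. Once $\bx$ is fixed the shift is forced to be $\bs=\bx'-\bx$, and each of the (at most $D_G(\bx)\le D(\bX)$) neighbors $\by$ of $\bx$ in $G$ deposits flow $p_{\bx,\ell}(\bx')$ on a \emph{distinct} target edge $(\bx',\by+\bs)$ in $H$. Summing over $\bx$ and invoking \Lem{prob-rev} together with the trivial bound $\sum_{\bx}p_{\bx',\ell}(\bx)\le 1$ (the right-hand side is a downward-walk marginal) gives
\[
\text{flow through }\bx' \;\le\; D(\bX)\sum_{\bx\in\bX}p_{\bx,\ell}(\bx')
\;\le\; 2D(\bX)\sum_{\bx\in\bX}p_{\bx',\ell}(\bx)
\;\le\; 2D(\bX).
\]

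For (2), I will group contributions at $\by'\in\bR$ by their source $\by\in\bY$ with $\by\preceq\by'$, inducing shift $\bs_\by=\by'-\by$; each of the (at most $D_G(\by)\le D(\bY)$) neighbors $\bx$ of $\by$ in $G$ deposits flow $p_{\bx,\ell}(\bx+\bs_\by)$ on a distinct target edge $(\bx+\bs_\by,\by')$. The wrinkle is that this flow value is indexed by the source $\bx$ rather than by $\by$. To handle this, I plan to establish the comparison $p_{\bx,\ell}(\bx+\bs_\by)=(1\pm o(1))\,p_{\by,\ell}(\by')$ whenever $(\bx,\by)\in E(G)$ with $\by-\bx=a\be_i$: under \Def{walkdist} the walks from $\bx$ and from $\by$ agree in all $d-1$ coordinates $r\ne i$ (since $\bx_r=\by_r$ and $(\bx+\bs_\by)_r=\by'_r$ there), and coordinate $i$ enters only through the event $i\in R$ of probability $\ell/d=o(1)$, which can perturb the ratio by at most $1\pm o(1)$. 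With this substitution, the calculation proceeds exactly as in part (1) and yields the $2D(\bY)$ bound.

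The only nontrivial step I anticipate is cleanly verifying the $(1\pm o(1))$ comparison above. My plan is to condition on whether coordinate $i$ lies in the random set $R$: the case $i\notin R$ produces identical per-coordinate probabilities from $\bx$ and $\by$, while the complementary case contributes at most $O(\ell/d)$ to the total probability and is absorbed in the slack, an adaptation of the layer-counting in the proof of \Lem{prob-rev}. Parts (3) and (4) then follow without new ideas by restricting the neighborhood counts to a single direction~$i$.
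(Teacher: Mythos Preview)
Your treatment of parts (1) and (3) is correct and matches the paper's argument exactly. You are also right to flag that parts (2) and (4) are \emph{not} literally analogous: the flow deposited on an edge $(\bx',\by')$ by a source edge $(\bx,\by)$ is $p_{\bx,\ell}(\bx')$, which is indexed by $\bx$ and not by $\by$, so grouping by $\by$ requires the extra comparison you propose. The paper's ``analogous argument'' sweeps this under the rug, and your instinct to supply the missing step is good.

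However, the pointwise multiplicative claim $p_{\bx,\ell}(\bx+\bs_\by)=(1\pm o(1))\,p_{\by,\ell}(\by')$ does not follow from the reasoning you sketch. The issue is not merely that $\Pr[i\in R]=\ell/d$ is small; what you need is that the \emph{ratio} is close to $1$, and that fails in general. Concretely, take any shift with $(\bs_\by)_i>0$ (this occurs whenever $\by'_i>\by_i$). Then the walk must satisfy $i\in R$ and make the jump $(\bs_\by)_i$ in coordinate $i$, so both $p_{\bx,\ell}(\bx+\bs_\by)$ and $p_{\by,\ell}(\by')$ factor as (probability of the correct jump in coordinate $i$) times (a common factor for the remaining coordinates). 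But the coordinate-$i$ jump probability from $\bx_i$ and from $\by_i$ can differ by an arbitrary factor, since $\bx_i\neq\by_i$ and the interval statistics in \Def{walkdist} depend on the starting value. So the ratio is not $(1\pm o(1))$. Even when $(\bs_\by)_i=0$, conditioning on $i\in R$ introduces the lazy-step probability $\Pr[c_i<\bx_i]$ versus $\Pr[c_i<\by_i]$, which again need not be comparable; the additive bound $\Pr[i\in R]=\ell/d$ does not convert to a multiplicative one unless you know the $i\notin R$ contribution dominates, which is not automatic for every individual $\bs$.

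The cleanest repair is to modify the flow rather than the analysis: when the source edge $(\bx,\by)$ lies in direction $i$, only route flow along shifts generated with $i\notin R$, i.e.\ deposit $p^{(i\notin R)}_{\bx,\ell}(\bx+\bs)$ instead of $p_{\bx,\ell}(\bx+\bs)$. This costs only a $(1-\ell/d)$ factor in \Clm{flow} (well within the slack there), leaves the congestion bounds for the $\bX$-side unchanged since $p^{(i\notin R)}\le p$, and---crucially---makes the comparison exact: with $i\notin R$ the walk never touches coordinate $i$, so the per-coordinate probabilities from $\bx$ and from $\by$ agree on all of $[d]\setminus\{i\}$, and the coordinate-$i$ marginal of $\bH$ affects only the middle-layer indicator by $\pm 1$ in Hamming weight, which is absorbed by $99$-typicality. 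With this change, your ``proceed as in part (1)'' plan for (2) and (4) goes through verbatim.
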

\begin{proof} 
	Fix a vertex $\bx'\in \bL$ and consider the edges $(\bx',\by'_t)$ of $E' = E(H)$ incident on it with $t = 1, \ldots, d'$.
	For $(\bx',\by'_t)$ to receive flow, there must exist edges of the form $(\bx,\by) \in E(G)$ such that $\bx' = \bx + \bs$ and $\by'_t = \by + \bs$ for some shift $\bs$.
	Call such an $(\bx,\by)$ parallel to $(\bx', \by'_t)$ and denote it as $(\bx,\by)~||~(\bx',\by'_t)$. Note that the same $(\bx,\by)\in E(G)$ can be parallel 
	to at most one edge incident on $\bx'$ since fixing $\bx,\by$ and $\bx'$ fixes the $\by'_t$.
	The total flow through vertex $\bx' \in \bL$ is therefore
	\[
		\flow(\bx') = \sum_{t=1}^{d'} \flow(\bx',\by'_t) = \sum_{t=1}^{d'} \sum_{(\bx,\by) \in E~||~(\bx', \by'_t)} p_{\bx,\ell}(\bx') \leq \sum_{(\bx,\by) \in E(G)} p_{\bx, \ell}(\bx')
	\]
This can be upper-bounded as follows
\begin{align*}
\sum_{(\bx,\by) \in E} p_{\bx,\ell}(\bx') &\leq D_G(\bX) \sum_{\bx \in \bX} p_{\bx,\ell}(\bx')  \ \ \ \textrm{(since degree of $\bx \in \bX$ in $G$ is $\leq D_G(\bX)$)} \\
&\leq 2D_G(\bX) \sum_{\bx \in [n]^d} p_{\bx',\ell}(\bx) \ \ \ \textrm{(since $p_{\bx,\ell}(\bx') \leq 2 p_{\bx',\ell}(\bx)$)} \\
&\leq 2D_G(\bX) \ \ \ \textrm{(since $\sum_{\bx \in [n]^d} p_{\bx',\ell}(\bx) \leq 1$ due to \eqref{eq:prob-triv})}
\end{align*}
Analogous arguments prove all the rest. To see (2), we work from the ``right'' side. For completeness, 
fix $\by'\in \bR$ and consider the edges $(\bx'_t, \by')$ of $E'$ incident on it.
For such an edge to receive flow, there must exist edges of the form $(\bx_t, \by) \in E(G)$ with $\by' = \by + \bs$ and $\bx'_t = \bx_t + \bs$.
The total flow coming into $\by'\in \bR$ is therefore
	\[
\flow(\by') = \sum_{t=1}^{d'} \flow(\bx'_t,\by') = \sum_{t=1}^{d'} \sum_{(\bx_t,\by) \in E~||~(\bx'_t, \by')} p_{\by,\ell}(\by') \leq \sum_{(\bx,\by) \in E(G)} p_{\by, \ell}(\by')
\]
The rest of the argument is same as above. For (3) and (4), we restrict the above argument only to $i$-edges and note that any edge parallel to an $i$-edge is also an $i$-edge.
\end{proof}





By \Clm{flow} and the fact that at least half the edges in $G$ are not red, the total amount of flow is at least $m(G)/3$ and this flow satisfies the constraints listed in \Clm{flow-edge} and \Clm{flow-vert}. Thus, dividing by $2$ yields a flow of value $m(G)/6$ satisfying the following. 
\begin{enumerate}
    \item[C1.] The flow on every edge is at most $1$.
    \item[C2.] The total flow through any vertex in $\bL$ is at most $D_G(\bX)$. The total $i$-flow through any vertex in $\bL$ is at most $\Gamma_{G,i}(\bX)$.
    \item[C3.] The total flow through any vertex in $\bR$ is at most $D_G(\bY)$. The total $i$-flow through any vertex in $\bR$ is at most $\Gamma_{G,i}(\bY)$.
\end{enumerate}
By integrality of flow, there exists an integral flow of at least $\lfloor m(G)/6 \rfloor$ units satisfying the same capacity constraints. By item (C1) above, the integral flow is a subgraph containing at least $\floor{m(G)/6}$ edges and satisfying the desired constraints listed in the lemma statement. \end{proof}


%

\section{Proof of the `Persist-or-Blow-Up' Lemma, \Lem{blowup}} \label{sec:blowup}
Let us recall the `Persist-or-Blow-Up' lemma.

\blowuplemma*

\noindent
The proof strategy of this lemma is similar to the proof of~\Cref{lem:redblue} described in~\Cref{sec:redblue}.
We first recall the definition of $p_{\bx,\ell}(\bx')$ in \Def{walkpdf}. For a fixed $\bx$, consider the process of sampling a hypercube $\bH \sim \HH(\bx)$ and then sampling $\bz \sim \cU_{\bH,\ell}(\bx)$. Recall from \Fact{dists} that this is one of three equivalent ways of expressing our random walk distribution. Given $\bx,\bx',\ell$, we have
\[
p_{\bx,\ell}(\bx') = \Pr\left[\bx,\bx' \in \bH_{100} \text{ and } \bz = \bx'\right] \text{.}
\]
We use these values to set up a flow problem as follows. 
For every edge $(\bx,\by)$ of $G$ and $\bs \in \supp(\upshift{\bx}{\ell})$, if $e = (\bx + \bs, \by + \bs)$ is a violation, then we put $\flow(e) := p_{\bx,\ell}(\bx+\bs)$ units of flow on the edge $e$.
As argued in~\Cref{sec:redblue}, this is the same as $p_{\by, \ell}(\by+\bs)$.
Upon processing every edge of $G$, we get a fractional flow supported on edges of another bipartite violation subgraph
$G' = (\bX', \bY', E')$. Note that by \Clm{translate-typical}, we have that all vertices in $G'$ are $(c+\frac{\ell}{\sqrt{d}})$-typical.

\noindent
The next two claims are analogous to~~\Cref{clm:flow}, \Cref{clm:flow-edge} and~\Cref{clm:flow-vert}, respectively.	

\begin{claim} \label{clm:edge-flow} Every edge $(\bx, \by) \in E(G)$ inserts at least $1-\log^{-4} d$ units of flow. \end{claim}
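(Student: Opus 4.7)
Fix an edge $(\bx, \by) \in E$ and let $i \in [d]$ be the coordinate in which $\bx$ and $\by$ differ. Using the equivalent formulation in \Fact{dists}, we sample $\bH \sim \HH(\bx)$ and $\bz \sim \cU_{\bH, \ell}(\bx)$, and write $\bs \eqdef \bz - \bx$. With this notation, the total flow inserted by $(\bx, \by)$ equals
\[
\Exp_{\bH, \bz}\bigl[\mathbf{1}(\bx, \bz \in \bH_{100}) \cdot \mathbf{1}((\bx+\bs, \by+\bs) \text{ is a violation})\bigr].
\]
The plan is to define four good events and bound the failure probability of each one by $O(\log^{-5} d)$, then union-bound.

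Let $\cE_0$ be the event that $\bx, \bz \in \bH_{100}$; $\cE_1$ the event that $\bs_i = 0$; $\cE_2$ the event that $f(\bx + \bs) = 1$; and $\cE_3$ the event that $f(\by + \bs) = 0$. On $\cE_0 \cap \cE_1 \cap \cE_2 \cap \cE_3$, the pair $(\bx + \bs, \by + \bs)$ is a valid axis-aligned violation in dimension $i$, so the flow contributed is at least $\Pr[\cE_0 \cap \cE_1 \cap \cE_2 \cap \cE_3]$. For $\cE_0$, since $\bx$ is $c$-typical with $c \le 99 \le 100$ we have $\Pr[\bx \in \bH_{100}] \geq 1 - (\eps/d)^5$, and conditioned on $\bx \in \bH_c$ the destination $\bz$ differs from $\bx$ in at most $\ell \leq \sqrt{d}/\log^5(d/\eps)$ coordinates, which (by the definition of the $c$-middle layers and our bound on $\ell$) places $\bz$ inside $\bH_{100}$. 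For $\cE_1$, the event $\bs_i \neq 0$ requires $i \in R$ together with $c_i > \bx_i$ in \Cref{def:walkdist}, so $\Pr[\neg \cE_1] \leq \ell/d = O(\log^{-5} d / \sqrt{d})$. For $\cE_2$, $(\ell, \log^{-5} d)$-up-persistence of $\bx$ gives $\Pr[\neg \cE_2] \leq \log^{-5} d$.

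The main subtlety is $\cE_3$, because $\bs$ is drawn from $\upshift{\bx}{\ell}$ while the persistence guarantee is about $\updist{\by}{\ell}$. Here we exploit conditioning on $\cE_1$: since $\bx$ and $\by$ agree on every coordinate $r \neq i$, and the per-coordinate sampling of $q_r, I_r, c_r$ in \Cref{def:walkdist} depends only on the starting point's $r$-th coordinate, the conditional distributions of $\bs\mid \cE_1$ under $\upshift{\bx}{\ell}$ and under $\upshift{\by}{\ell}$ coincide. Hence conditioned on $\cE_1$, the point $\by + \bs$ is distributed as a sample from $\updist{\by}{\ell}$ conditioned on the same event, and up-persistence of $\by$ yields $\Pr[\neg \cE_3 \mid \cE_1] \leq \log^{-5} d / \Pr[\cE_1] \leq 2\log^{-5} d$.

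A union bound over $\cE_0^c, \cE_1^c, \cE_2^c, \cE_3^c$ gives failure probability at most $(\eps/d)^5 + \ell/d + \log^{-5} d + 2\log^{-5} d \leq \log^{-4} d$ for large $d$, proving the claim. The only genuinely delicate step is the coupling argument that identifies the law of $\by + \bs$ with a $\by$-walk when $\bs_i = 0$; everything else is bookkeeping with typicality, laziness, and persistence.
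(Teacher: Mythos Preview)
Your proof is correct and follows essentially the same approach as the paper: define the four events (both endpoints in the $100$-middle layers, the $i$-th coordinate doesn't move, $\bx+\bs$ stays $1$, $\by+\bs$ stays $0$), bound each failure probability, and union-bound; your events $\cE_0,\cE_1,\cE_2,\cE_3$ correspond exactly to the paper's $\cE_4,\cE_1,\cE_2,\cE_3$, and you are in fact more careful than the paper's own proof of this claim, which simply asserts $\Pr[\cE_3]\ge 1-\log^{-5}d$ from up-persistence of $\by$ with no coupling discussion.

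One small technical wrinkle in the step you flagged as delicate: the assertion that ``the conditional distributions of $\bs\mid\cE_1$ under $\upshift{\bx}{\ell}$ and under $\upshift{\by}{\ell}$ coincide'' is not literally true, because conditioning on $\bs_i=0$ changes the posterior of whether $i\in R$, and that posterior depends on $\bx_i$ versus $\by_i$ (the probability that the $i$-th step is lazy differs). The clean fix is to condition instead on the stronger event $\{i\notin R\}$, which has probability $1-\ell/d$ independent of the starting point and under which the two shift distributions are genuinely identical on all coordinates; your union bound then goes through unchanged. The paper's analogous argument in the proof of \Clm{flow} has the same imprecision, so this is not a defect relative to the paper.
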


\begin{proof} The proof of this claim is similar to that of~\Cref{clm:flow}.
	Fix an edge $(\bx,\by) \in E(G)$ and let this be an $i$-edge. 
	Generate $\bH \sim \mathbb{H}(\bx)$ and  a shift $\bs \sim \cU\cS_{\bH,\ell}(\bx)$, and let $\bx' = \bx + \bs$ and $\by' = \by + \bs$. 
	Consider the events: (i) $\calE_1$: $\bs_i = 0$, (ii) $\calE_2$: $f(\bx') = 1$, (iii) $\calE_3$: $f(\by') = 0$, (iv) $\calE_4$: $\bx, \bx' \in \bH_{100}$.
	Note that the total flow inserted by $(\bx, \by)$ is at least $\Pr[\wedge_{i=1}^4 \calE_i]$.
$\Pr[\calE_1] \geq 1-1/\sqrt{d}$, since $\norm{\bs}_0 \leq \ell \le \sqrt{d}$. 
Since $\bx,\by$ are both $(\ell,\log^{-5} d)$-up-persistent and $f(\bx) = 1$, $f(\by) = 0$, we get $\Pr[\calE_2], \Pr[\calE_3~|~\calE_1] \geq 1 - \frac{1}{\log^5 d}$.
Finally, since $\bx$ is $99$-typical, with probability $1-(\eps/d)^5$ we have $\bx \in \bH_{99}$ which implies $\bx'\in \bH_{100}$ since $\ell \ll \sqrt{d}$.
Thus by a union bound, $\Pr[\wedge_{i=1}^4 \calE_i] \geq 1 - 2\log^{-5} d - 1/\sqrt{d} - (\eps/d)^5 \geq 1-\log^{-4} d$.  \end{proof}

\begin{claim} [Edge Congestion] \label{clm:edge-cong} The flow on any edge $(\bx',\by')$ is at most $\sum_{\bx \in \bX} p_{\bx,\ell}(\bx') \leq (1+\log^{-3} d)$. \end{claim}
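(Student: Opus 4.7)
The plan is to split the claim into its two inequalities and dispatch each in turn. For the left inequality, I fix the target edge $(\bx',\by')$ and identify which triples $((\bx,\by) \in E, \bs \in \supp(\upshift{\bx}{\ell}))$ can contribute flow to it. By construction, a triple contributes iff $\bx + \bs = \bx'$ and $\by + \bs = \by'$. Subtracting these equalities gives $\by - \bx = \by' - \bx'$, so once $\bx \in \bX$ is chosen both $\bs = \bx' - \bx$ and $\by = \bx + (\by'-\bx')$ are determined, and the contribution is exactly $p_{\bx,\ell}(\bx')$. Summing over $\bx \in \bX$ for which the implied partner actually lies in $\bY$ and the implied edge is in $E$, the total flow on $(\bx',\by')$ is bounded by $\sum_{\bx \in \bX} p_{\bx,\ell}(\bx')$.

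For the right inequality, I apply the reversibility lemma (\Lem{prob-rev}), which applies because the hypothesis of \Lem{blowup} gives $\ell \leq \sqrt{d}/\log^5(d/\eps)$. The lemma yields $p_{\bx,\ell}(\bx') \leq (1+\log^{-3} d)\, p_{\bx',\ell}(\bx)$ for every comparable pair $\bx \prec \bx'$, and $p_{\bx,\ell}(\bx') = 0$ unless $\bx \prec \bx'$. Extending the summation from $\bx \in \bX$ to all of $[n]^d$ gives
\[
\sum_{\bx \in \bX} p_{\bx,\ell}(\bx') \;\leq\; (1+\log^{-3} d) \sum_{\bx \in [n]^d} p_{\bx',\ell}(\bx).
\]
The tail sum is at most $1$: by \Def{walkpdf},
\[
\sum_{\bx \in [n]^d} p_{\bx',\ell}(\bx) = \EX_{\bH \sim \HH(\bx')}\!\left[\mathbf{1}(\bx' \in \bH_{100}) \sum_{\bx \in [n]^d} \mathbf{1}(\bx \in \bH_{100}) \cdot \Pr_{\bz \sim \cD_{\bH,\ell}(\bx')}[\bz = \bx]\right] \leq 1,
\]
since the inner sum is dominated by $\sum_{\bx} \Pr[\bz = \bx] = 1$.

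There is no real technical obstacle; the whole argument is a bookkeeping step followed by one direct invocation of \Lem{prob-rev}. The only subtle point is the observation that any translation preserves both the direction and the length of an edge, which is what lets us re-index the incoming flow to $(\bx',\by')$ by the source vertex $\bx$ alone.
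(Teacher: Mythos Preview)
Your proof is correct and follows essentially the same approach as the paper: identify that translation preserves the edge vector $\by-\bx$, so each $\bx\in\bX$ contributes at most one term $p_{\bx,\ell}(\bx')$ to the flow on $(\bx',\by')$; then invoke \Lem{prob-rev} to flip to $p_{\bx',\ell}(\bx)$ and bound the resulting sum by $1$. Your write-up is in fact more explicit than the paper's about why the reversed sum is at most $1$ (unpacking \Def{walkpdf}), but the underlying argument is identical.
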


\begin{proof} Consider an edge $(\bx',\by')\in E(G')$, which receives flow from some $(\bx,\by)$ in $G$. Flow is inserted by translations of edges,
so $\by - \bx = \by' - \bx'$. Hence, for a given $\bx$, there exists a unique $\by$ such that $(\bx,\by)$ inserts flow on $(\bx',\by')$. 
By construction, the flow inserted is $p_{\bx,\ell}(\bx')$. Thus, the total flow that $(\bx',\by')$ receives is at most $\sum_{\bx \in \bX} p_{\bx,\ell}(\bx')$. The RHS bound holds by \Lem{prob-rev} and the observation \eqref{eq:prob-triv} that $\sum_{\bx \in \bX} p_{\bx',\ell}(\bx) \leq 1$  \end{proof}

\begin{claim} [Vertex Congestion] \label{clm:vert-cong} The following hold.
    \begin{asparaenum}
        \item For any $\bx' \in \bX'$, the total flow on edges incident to $\bx'$ is at most \[
        D_G(\bX) \sum_{\bx \in \bX} p_{\bx,\ell}(\bx') \leq D_G(\bX)(1+\log^{-3} d) \text{.}\]
        \item For any $\bx' \in \bX'$, the total $i$-flow on edges incident to $\bx'$ is at most \[\Gamma_{G,i}(\bX) \sum_{\bx \in \bX} p_{\bx,\ell}(\bx') \leq \Gamma_{G,i}(\bX)(1+\log^{-3} d) \text{.}\]
        \item For any $\by' \in \bY'$, the total flow on edges incident to $\by'$ is at most \[D_G(\bY) \sum_{\by \in \bY} p_{\by,\ell}(\by') \leq D_G(\bY)(1+\log^{-3} d) \text{.}\]
        \item For any $\by' \in \bY'$, the total $i$-flow on edges incident to $\by'$ is at most \[\Gamma_{G,i}(\bY) \sum_{\by \in \bY} p_{\by,\ell}(\by') \leq \Gamma_{G,i}(\bY)(1+\log^{-3} d)\text{.}\]
    \end{asparaenum}
\end{claim}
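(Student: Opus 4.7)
The plan is to establish each of the four items by a direct summation argument that groups contributions to the flow by an appropriate endpoint of the source edge, followed by an application of the Reversibility Lemma \Lem{prob-rev}. The overall structure parallels \Clm{flow-vert} from the proof of \Lem{redblue}, with the sharper factor $(1+\log^{-3}d)$ (rather than the $2$ used there) since here we do not compose with a union-bound loss.

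For item (1), I would first observe that every edge of $G'$ incident to $\bx'$ arises from a unique source edge $(\bx,\by) \in E$ together with the shift $\bs = \bx' - \bx$, and contributes exactly $p_{\bx,\ell}(\bx')$ units of flow (to the destination $(\bx', \by + \bs)$). Grouping by the $\bx$-endpoint and using $D(\bx) \leq D(\bX)$ yields
\[
\sum_{(\bx,\by)\in E} p_{\bx,\ell}(\bx') \;\leq\; D(\bX) \sum_{\bx \in \bX} p_{\bx,\ell}(\bx').
\]
The second inequality of the claim then follows by applying \Lem{prob-rev} to get $p_{\bx,\ell}(\bx') \leq (1+\log^{-3}d)\, p_{\bx',\ell}(\bx)$ and using $\sum_{\bx \in [n]^d} p_{\bx',\ell}(\bx) \leq 1$, since these are the marginals of a single downward walk from $\bx'$. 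Item (2) is essentially identical, but restricting to $i$-edges so that $\Gamma_i(\bx) \leq \Gamma_i(\bX)$ plays the role of the degree bound.

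For items (3) and (4), the extra subtlety is that the flow on a destination edge $(\bx+\bs,\by+\bs)$ is naturally expressed as $p_{\bx,\ell}(\bx+\bs)$, yet we want to sum probabilities from the $\by$-side. I would argue by an approximate translation-invariance: for an $i$-edge source $(\bx,\by) \in E$ with $\by = \bx + a\be_i$, the probabilities $p_{\bx,\ell}(\bx+\bs)$ and $p_{\by,\ell}(\by+\bs)$ differ by a multiplicative factor of $1 \pm o(1)$. The reason is that the hypercube distributions $\HH(\bx)$ and $\HH(\by)$ agree in every coordinate $r \neq i$, and within any common hypercube the Hamming weights of $\bx$ and $\by$ differ by exactly $1$; re-running the binomial-ratio calculation from the proof of \Lem{prob-rev} gives the claimed ratio. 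With this in hand, grouping source edges by their $\by$-endpoint and using $D(\by) \leq D(\bY)$ (respectively $\Gamma_i(\by) \leq \Gamma_i(\bY)$) yields
\[
\sum_{(\bx,\by)\in E} p_{\bx,\ell}(\bx+\bs) \;\leq\; (1+o(1))\, D(\bY) \sum_{\by\in\bY} p_{\by,\ell}(\by'),
\]
and a final application of \Lem{prob-rev} closes out the $\leq D(\bY)(1+\log^{-3}d)$ bound.

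The main obstacle is the translation-invariance step needed for items (3) and (4); however, this is a direct binomial-ratio computation in the spirit of the proof of \Lem{prob-rev}, exploiting the fact that $\bx,\by$ are adjacent in the fully augmented hypergrid and that both are assumed to be $99$-typical, so the slack $(1+o(1))$ is comfortably absorbed into the final $(1+\log^{-3}d)$ factor. Items (1) and (2) are essentially bookkeeping, and in total the four bounds give exactly the edge/vertex capacity constraints needed to carve out an integral flow of sufficient value in the next step of the \Lem{blowup} proof.
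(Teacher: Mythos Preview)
Your argument for items (1) and (2) is exactly the paper's: group by the $\bx$-endpoint, pull out the max degree, and apply \Lem{prob-rev}. For items (3) and (4) the paper simply writes ``statements (3) and (4) have analogous proofs'' and does not address the asymmetry you correctly flag---the flow weight on a destination edge is $p_{\bx,\ell}(\bx+\bs)$, so grouping by the $\by$-endpoint requires relating this to $p_{\by,\ell}(\by+\bs)$. Your translation-invariance idea is the natural way to close this gap and in fact goes beyond what the paper spells out.

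One point in your sketch needs tightening. Hypercubes drawn from $\HH(\bx)$ and $\HH(\by)$ are typically \emph{not} common hypercubes containing both points; rather, you should \emph{couple} $\bH\sim\HH(\bx)$ and $\bH'\sim\HH(\by)$ so that they agree on every coordinate $r\neq i$ (the marginals there are identical since $\bx_r=\by_r$). Under this coupling the Hamming weights of $\bx$ in $\bH$ and $\by$ in $\bH'$ differ by \emph{at most} $1$ (depending on whether each lands at the lower or upper endpoint in coordinate $i$), and the relevant binomial-ratio computation compares two up-walk counts $\binom{|U(\bx)|}{\ell-|S|}$ and $\binom{|U(\by)|}{\ell-|S|}$ whose top arguments differ by at most $1$; in the middle layers this ratio is $1+O(\ell/d)$. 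When $\bs_i>0$ the ratio is in fact exactly $1$: both walks force coordinate $i$ to sit at the lower hypercube endpoint, and the coordinate-$i$ marginal of $\HH(\cdot)$ conditioned on a jump of length $\bs_i$ depends only on $\bs_i$, not on the starting value. Either way the extra factor $1+O(\ell/d)$ is comfortably absorbed into $1+\log^{-3}d$, so your bound goes through.
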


\begin{proof} Consider $\bx' \in \bX'$. All the $i$-flow inserted on edges incident to $\bx'$
comes from $i$-edges $(\bx,\by)$ in $G$. Every $i$-edge in $G$ inserts flow on at most a single edge incident to $\bx'$ and there are at most $\Gamma_{G,i}(\bX)$ $i$-edges incident to any vertex $\bx \in \bX$. Hence, the total $i$-flow inserted by a $\bx \in \bX$ through $\bx'$ is at most $\Gamma_{G,i}(\bX) \cdot p_{\bx,\ell}(\bx')$. Thus, summing over all $\bx \in \bX$ and using the reversibility \Lem{prob-rev} shows that the total $i$-flow on edges incident to $\bx'$ is at most
\[
\Gamma_{G,i}(\bX) \sum_{\bx \in \bX} p_{\bx,\ell}(\bx') \leq (1+\log^{-3} d)\Gamma_{G,i}(\bX) \sum_{\bx \in \bX} p_{\bx',\ell}(\bx) \leq (1+\log^{-3} d)\Gamma_{G,i}(\bX)
\]
and this proves (2). The proof of (1) is identical, with $D_G(\bX)$ replacing $\Gamma_{G,i}(\bX)$, and statements (3) and (4) have analogous proofs where we work
with $p_{\by,\ell}(\by')$'s instead. All this is analogous to the proof of~\Cref{clm:flow-vert}. \end{proof}

\noindent
Next we require a definition of heavy vertices. Recall, we now have a flow $\flow$ on $G' = (\bX', \bY', E')$.

\begin{definition} [Heavy Vertices] \label{def:heavy} A vertex $\bx' \in \bX'$ is called \emph{heavy} if it satisfies any of the following. 
\begin{asparaenum}
    \item[H1.] There is an edge $e' = (\bx',\by') \in E'$ with  $\flow(e') \geq 1/2$.
    \item[H2.] The total flow on edges incident to $\bx'$ is at least $D_G(\bX)/2$.
    \item[H3.] There exists $i \in [d]$ such that the total $i$-flow on edges incident to $\bx'$ is at least $\Gamma_{G,i}(\bX)/2$.
\end{asparaenum}
\end{definition}
\noindent
Note that heavy vertices are defined only in $\bX$ and not in $\bY$.
Note that by~\Cref{clm:edge-flow} and~\Cref{clm:vert-cong}, the upper bounds on (H1), (H2) and (H3) are at most $1+\log^{-3}d$, $D_G(\bX)(1+\log^{-3}d)$ and $\Gamma_{G,i}(\bX)(1+\log^{-3} d)$, respectively, and so a vertex is heavy if any of these upper bounds are reached up to ``factor 2''.

\begin{claim} \label{clm:heavy} All heavy vertices are $(\ell,0.6)$-down persistent. \end{claim}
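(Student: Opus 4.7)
The plan is to unpack each of the three heaviness conditions into the same core inequality, namely that a significant mass of the downward random walk from $\bx'$ lands inside $\bX$, and then to appeal to the reversibility \Lem{prob-rev} to ``flip'' this flow into a persistence statement. Since $\bx' \in \bX'$ means $f(\bx') = 1$, down-persistence amounts to showing $\Pr_{\bz \sim \cD_{\ell}(\bx')}[f(\bz) = 1] \geq 0.4$, and it will be enough to prove $\sum_{\bx \in \bX} p_{\bx',\ell}(\bx) \geq 0.4$, because each $\bx \in \bX$ is $1$-valued and $p_{\bx',\ell}(\bx)$ lower bounds $\Pr_{\bz \sim \cD_{\ell}(\bx')}[\bz = \bx]$ by construction (\Def{walkpdf}).

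First I would reduce each heaviness case to the uniform bound $\sum_{\bx \in \bX} p_{\bx,\ell}(\bx') \geq 1/2$. Consider the total $i$-flow into $\bx'$: by the flow's definition, it equals $\sum_{\bx \in \bX} \Gamma_i(\bx)\, p_{\bx,\ell}(\bx') \cdot \mathbf{1}[\cdot]$ up to the violation indicator counting how many $i$-edges $(\bx,\by) \in E$ translate to a violation incident to $\bx'$, and this is upper bounded by $\Gamma_i(\bX)\sum_{\bx \in \bX} p_{\bx,\ell}(\bx')$. So if the $i$-flow exceeds $\Gamma_i(\bX)/2$ (case (3) of \Def{heavy}), then $\sum_{\bx \in \bX} p_{\bx,\ell}(\bx') \geq 1/2$. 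Exactly the same argument, replacing $\Gamma_i$ with $D$, handles case (2). For case (1), an edge $(\bx',\by')$ receives flow $\sum_{\bx} p_{\bx,\ell}(\bx')$ summed over those $\bx \in \bX$ for which $(\bx,\by' - \bx' + \bx) \in E$ (because each source $(\bx,\by) \in E$ fixes its translated target uniquely via $\by-\bx = \by'-\bx'$), and this subsum already exceeds $1/2$, which only makes the full sum larger.

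Next I invoke \Lem{prob-rev}: for every $\bx \in \bX$, $p_{\bx,\ell}(\bx') \leq (1 + \log^{-3} d)\, p_{\bx',\ell}(\bx)$, so summing gives $\sum_{\bx \in \bX} p_{\bx',\ell}(\bx) \geq (1 - \log^{-3} d)/2 \geq 0.49$ (valid because $\ell \leq \sqrt d / \log^5(d/\eps)$, which is the hypothesis of the lemma, and the $c$-typicality condition preserves the hypercube-in-middle-layers indicator inside $p_{\bx,\ell}$ whenever $c \leq 99$). The prerequisite that both $\bx$ and $\bx'$ are comparable lies in the support of the shift, so the reversibility lemma applies. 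Finally, as all $\bx \in \bX$ have $f(\bx) = 1$, the inequality $\Pr_{\bz \sim \cD_{\ell}(\bx')}[f(\bz) = 1] \geq \sum_{\bx \in \bX} p_{\bx',\ell}(\bx) \geq 0.49$ yields $\Pr[f(\bz) \neq f(\bx')] \leq 0.51 \leq 0.6$, which is exactly $(\ell, 0.6)$-down persistence.

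The only real subtlety, and the step I expect to double-check most carefully, is making sure the reversibility lemma's middle-layer bookkeeping lines up across the three cases. Concretely, $p_{\bx,\ell}(\bx')$ is a ``truncated'' probability (it throws out contributions from hypercubes where $\bx$ or $\bx'$ escapes $\bH_{100}$), so when I bound the flow by $p_{\bx,\ell}(\bx')$ I am under-counting and the heaviness constants stay $1/2$; conversely, when I use reversibility the conclusion is about the same truncated probability for the downward walk, which is again a lower bound on the true downward-walk reach. All vertices in $\bX \cup \bX'$ are $c$-typical for $c \leq 99$ (by hypothesis of \Lem{blowup} and \Clm{translate-typical}), so the truncation costs only $(\eps/d)^5$, negligible against the $0.49$ slack. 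Aside from this careful accounting, the argument is a clean three-line chain: heaviness $\Rightarrow$ forward flow into $\bx'$ is large $\Rightarrow$ reverse (downward) walk from $\bx'$ lands in $\bX$ with probability $\geq 0.49$ $\Rightarrow$ $\bx'$ is $(\ell, 0.6)$-down-persistent.
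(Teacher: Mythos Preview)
Your proposal is correct and follows essentially the same approach as the paper's own proof: reduce each of the three heaviness conditions to the single inequality $\sum_{\bx \in \bX} p_{\bx,\ell}(\bx') \geq 1/2$ (the paper does this by citing the upper bounds in \Clm{edge-cong} and \Clm{vert-cong}, while you rederive them inline), then apply the reversibility \Lem{prob-rev} to flip this into $\sum_{\bx \in \bX} p_{\bx',\ell}(\bx) \geq 0.4$, and conclude $(\ell,0.6)$-down-persistence since every $\bx \in \bX$ is $1$-valued. Your extra caution about the middle-layer truncation is harmless but unnecessary here, since the flow is \emph{defined} via the truncated probabilities $p_{\bx,\ell}(\bx')$, so the heaviness thresholds are already stated in those terms.
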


\begin{proof} Consider a heavy vertex $\bx'$. That is, $\bx'$ satisfies one of the three conditions listed in \Def{heavy}. Suppose it satisfies the first condition: there is some violated edge $(\bx',\by')$ receiving at least $1/2$ units of flow. By \Clm{edge-cong}, the total flow on $(\bx',\by')$
is at most $\sum_{\bx \in \bX} p_{\bx,\ell}(\bx')$. Hence, $\sum_{\bx \in \bX} p_{\bx,\ell}(\bx') \geq 1/2$. In fact, observe that we can prove the exact same inequality if $\bx'$ satisfies the second or third condition of \Def{heavy}, by using the upper bound given by the LHS of items (1) and (2), respectively, of \Clm{vert-cong}. Now, applying the reversibility \Lem{prob-rev}, we have $(1+\log^{-3}d) \sum_{\bx \in \bX} p_{\bx',\ell}(\bx) \geq 1/2$.
Note that $f(\bx) = 1$ for all $\bx \in \bX$. Hence,
\begin{align}
    \Pr_{\bz \sim \cD_{\ell}(\bx')}[f(\bz) = 1] \geq \sum_{\bx \in \bX} p_{\bx',\ell}(\bx) \geq \frac{1}{2(1+\log^{-3} d)} \geq 0.4
\end{align}
and so $\bx'$ is $(\ell,0.6)$-down-persistent. \end{proof}

We are now set up to complete the proof. For convenience, we use $m$ to denote $m(G)$. We refer to the flow on edges incident to heavy vertices as the \emph{heavy flow}. 
We let $\bX'_H \subseteq \bX'$ be the subset of heavy vertices, and let $\bX'_L := \bX' \setminus \bX_H'$ be the subset of non-heavy vertices. 
We let $G_H = (\bX'_H,\bY'_H,E'_H)$ denote the bipartite graph of all edges incident to heavy vertices. 
We refer to the flow on edges incident to non-heavy vertices as the \emph{light flow}. We let $G_L = (\bX'_L,\bY'_L,E'_L)$ denote the bipartite graph of all edges incident to non-heavy vertices. We split into two cases based on the amount of heavy flow.

\subsection{Case 1: The total amount of heavy flow is at least $\frac{m}{\log^4 d}$}

Note that by \Clm{heavy}, all vertices in $\bX'_H$ are $(\ell,0.6)$-down persistent.
By \Clm{edge-cong} and \Clm{vert-cong}, the heavy flow satisfies the following capacity constraints.

\begin{enumerate}
    \item[A1.] The flow on every edge is at most $(1+\log^{-3} d)$.
    \item[A2.] For every $\bx' \in \bX'_H$, the total flow on edges incident to $\bx'$ is at most $D_G(\bX)(1+\log^{-3} d)$ and the total $i$-flow on edges incident to $\bx'$ is at most $\Gamma_{G,i}(\bX)(1+\log^{-3} d)$. 
    \item[A3.] For every $\by' \in \bY'_H$, the total flow on edges incident to $\by'$ is at most $D_G(\bY)(1+\log^{-3} d)$ and the total $i$-flow on edges incident to $\by'$ is at most $\Gamma_{G,i}(\bY)(1+\log^{-3} d)$. 
\end{enumerate}
\noindent
Recall that $\bX$ and $\bY$ were the bipartitions of the {\em original} graph $G$ and not of $G'$.

Let us divide the flow by $(1+\log^{-3}d)$. Thus, we now have at least $\frac{m}{(1+\log^{-3}d)\log^4 d} \geq \frac{m}{\log^5d}$ units
of flow satisfying the following capacity constraints. 
\begin{enumerate}
    \item[A'1.] The flow on every edge is at most one.
    \item[A'2.] For every $\bx' \in \bX'_H$, the total flow on edges incident to $\bx'$ is at most $D_G(\bX)$ and the total $i$-flow on edges incident to $\bx'$ is at most $\Gamma_{G,i}(\bX)$. 
    \item[A'3.] For every $\by' \in \bY'_H$, the total flow on edges incident to $\by'$ is at most $D_G(\bY)$ and the total $i$-flow on edges incident to $\by'$ is at most $\Gamma_{G,i}(\bY)$. 
\end{enumerate}
By integrality of flow, there is an integral flow of at least $\floor{\frac{m}{\log^5 d}}$ units
satisfying the above constraints. By condition (A'1) above, this integral flow is a subgraph of $G_H$ with at least $\floor{\frac{m}{\log^5 d}}$ edges, and satisfying the degree bounds listed in (1c) and (1d) of the lemma statement. Thus, this subgraph satisfies case (1) (``down-persistence case'') of the lemma statement.

\subsection{Case 2: The total amount of heavy flow is at most $\frac{m}{\log^4 d}$}

By \Clm{edge-flow}, the total flow is at least $m(1-\log^{-4}d)$ units. Thus, after removing the heavy flow, the remaining light flow is at least $m(1-2\log^{-4}d)$ units. The light flow satisfies the following capacity constraints. 
\begin{enumerate}
    \item[B1.] Every edge in $G_L = (\bX'_L, \bY'_L, E'_L)$ has at most $1/2$ units of flow.
    \item[B2.] For every $\bx' \in \bX'_L$, the total flow on edges incident to $\bx'$ is at most $D_G(\bX)/2$ and the total $i$-flow on edges incident to $\bx'$ is at most $\Gamma_{G,i}(\bX)/2$.
    \item[B3.] For every $\by' \in \bY'_L$, the total flow on edges incident to $\by'$ is at most $(1+\log^{-3}d)D_G(\bY)$ and the total $i$-flow on edges incident to $\by'$ is at most $(1+\log^{-3}d)\Gamma_{G,i}(\bY)$.
\end{enumerate}
Once again, recall that $\bX$ and $\bY$ were the bipartitions of the {\em original} graph $G$ and not of $G'$, and thus item (B1) above does not imply (B2) or (B3).
Items (B1) and (B2) are simply by \Def{heavy} since all vertices in $\bX_L'$ are not heavy. Item (B3) follows from the RHS bound on the vertex congestion in \Clm{vert-cong}.

We now re-scale the flow by multiplying it by $\frac{2}{1+\log^{-3} d}$. We now have $2m\frac{(1-2\log^{-4}d)}{1+\log^{-3}d} \geq 2m(1-2\log^{-3}d)$ units of flow with the following capacity constraints:
\begin{enumerate}
    \item[B'1.]  Every edge has at most one unit of flow.
    \item[B'2.]  For every $\bx' \in \bX'_L$, the total flow on edges incident to $\bx'$ is at most $D_G(\bX)$ and the total $i$-flow on edges incident to $\bx'$ is at most $\Gamma_{G,i}(\bX)$.
    \item[B'3.]  For every $\by' \in \bY'_L$, the total flow on edges incident to $\by'$ is at most $2D_G(\bY)$ and the total $i$-flow on edges incident to $\by'$ is at most $2\Gamma_{G,i}(\bY)$.
\end{enumerate}

By integrality of flow, we obtain an integral flow of at least $\floor{2m(1-3\log^{-4}d)}$ units satisfying the same constraints listed above. In particular, the flow on any edge is at most one and so the integral flow is a violation subgraph with at least $\floor{2m(1-3\log^{-4}d)}$ edges and satisfying the degree bounds listed in case (2) of the lemma statement.

\section{Proof of \Thm{mono-testing} and \Thm{cont-testing}} \label{sec:proofs_of_main_theorems}

In this section we prove \Thm{mono-testing} and \Thm{cont-testing} using our main result \Cref{thm:main} combined with the domain reduction theorem of \cite{BlackCS20} and the $\widetilde{O}(\eps^{-1}d)$ tester of \cite{DGLRRS99} and \cite{BeRaYa14}. First, observe that by \Cref{thm:main}, repeating \Alg{alg} $\eps^{-2} d^{1/2 + O((\log \log nd)^{-1}) }$ times immediately gives the following corollary. 

\begin{corollary} [Corollary of \Cref{thm:main}] \label{cor:tester_small_n} Let $n \leq \poly(d)$. There is a tester which, given a parameter $\eps \in (0,1)$ where $\eps \geq d^{-1/2}$, and a function $f \colon [n]^d \to \{0,1\}$, makes $\eps^{-2} \cdot d^{1/2 + O((\log \log d)^{-1}) }$ non-adaptive queries to $f$  and (a) accepts when $f$ is monotone, and (b) rejects with probability at least $2/3$ when $f$ is $\eps$-far from monotone. \end{corollary}

We will refer to the tester of \Cref{cor:tester_small_n} as the \pathtester. We will also use the tester of \cite{BeRaYa14} with query-complexity $O(\frac{d}{\eps} \log \frac{d}{\eps})$, which we will refer to as the \linetester. Note that this tester is also non-adaptive and has one-sided error (see \cite{BeRaYa14}, Theorem 1.3). To prove \Thm{mono-testing} using \pathtester~ and \linetester, we use Theorem 1.3 of \cite{BlackCS20} (domain reduction for $[n]^d$), which we state here for ease of reading.


\begin{theorem}[Domain Reduction Theorem 1.3,~\cite{BlackCS20}]\label{thm:bcs20-discrete}
	Let $d$ be at least a sufficiently large constant, and suppose $f \colon [n]^d \to \{0,1\}$ is $\eps$-far from being monotone. If $\bT = T_1 \times \cdots \times T_d$ is a randomly chosen sub-grid, where for each $i \in [d]$, $T_i$ is a (multi)-set formed by taking $\ceil{(\eps^{-1}d)^8}$ independent, uniform samples from $[n]$, then $\EX_{\bT}[\eps_{f_{\bT}}] \geq \eps/2$.\footnote{Note that $f_{\bT}$ denotes the restriction of $f$ to $\bT$, and $\eps_{f_{\bT}}$ denotes its distance to monotonicity.} 
\end{theorem}

Analogously, to prove our \Thm{cont-testing} we use Theorem 1.4 of \cite{BlackCS20} (domain reduction for $\RR^d$).

\begin{theorem} [Domain Reduction Theorem 1.4,~\cite{BlackCS20}] \label{thm:bcs20} Let $d$ be at least a sufficiently large constant. Let $f\colon \mathbb{R}^d \to \{0,1\}$ be any measurable function and let $\cD = \prod_{i=1}^d \cD_i$ be a (Lebesgue integrable) product distribution such that the distance to monotonicity of $f$ w.r.t. $\cD$ is $\eps_f \geq \eps$. If $\bT = T_1 \times \cdots \times T_d$ is a randomly chosen sub-hypergrid, where for each $i \in [d]$, $T_i \subset \mathbb{R}$ is formed by taking $\ceil{(\eps^{-1}d)^8}$ i.i.d. samples from $\cD_i$, then $\EX_{\bT}\left[\eps_{f_{\bT}}\right] \geq \eps/2$.\footnote{Note that $\eps_{f_{\bT}}$ denotes the distance to monotonicity of the restriction $f_{\bT}$ with respect to the uniform distribution over $\bT$.}
\end{theorem}

We can now define our tester \Alg{tester-levin}, using \pathtester~, \linetester~ and the domain reduction theorems. The only difference between our testers for proving \Cref{thm:mono-testing} and \Cref{thm:cont-testing} is that they use \Cref{thm:bcs20-discrete} and \Cref{thm:bcs20}, respectively. Thus, we state them as one tester in \Alg{tester-levin} and prove both theorems together.

\begin{algorithm}
	\caption{Monotonicity tester for $f\colon D^d \to \{0,1\}$ where $D = [n]$ or $D = \RR$. Inputs: $f$ and $\eps \in (0,1)$.} \label{alg:tester-levin}
	\begin{algorithmic}[1]
		\State \textbf{let} $L = \ceil{\log(2/\eps)}$. 
		\State \textbf{for all} $\ell \in [L+1]$:
			\State \indent\label{step:set_eps_ell}\textbf{set} $Q_\ell := \lceil\frac{32\ell^2}{2^\ell \eps}\rceil$ and $\eps_\ell := 1/2^\ell$.
			\State \indent \textbf{repeat} $Q_\ell$ times:
				\State \label{step:domain_reduce_discrete} \indent \indent \textbf{if} $D = [n]$, sample $\bT = T_1 \times \cdots \times T_d$ as in \Thm{bcs20-discrete}.
				\State \label{step:domain_reduce} \indent \indent \textbf{if} $D = \RR$, sample $\bT = T_1 \times \cdots \times T_d$ as in \Thm{bcs20}. 
				\State \label{step:line_tester} \indent \indent \textbf{if} $\eps < d^{-1/2}$, then run \linetester($f_{\bT},\eps_\ell$) and if it rejects, then \Return REJECT.
				\State \label{step:small_grid_tester} \indent \indent \textbf{if} $\eps \geq d^{-1/2}$, then run \pathtester($f_{\bT},\eps_\ell$) and if it rejects, then \Return REJECT.
		\State \Return ACCEPT.
	\end{algorithmic}
\end{algorithm}

\begin{remark} We note that \cite{HY22} obtain a more efficient domain reduction result. However, the domain reduction from \cite{BlackCS20} can be used in a black-box fashion, resulting in a simpler tester.

Our tester (\Alg{tester-levin}) uses Levin's work investment strategy (see \cite{Go-book}, Section 8.2.4) to optimize the dependence on $\eps$. We remark that if one only cares about achieving a dependence of $\poly(1/\eps)$, then the following simpler tester suffices: invoke \Step{domain_reduce_discrete}, \Step{domain_reduce}, \Step{line_tester}, and \Step{small_grid_tester} (with $\eps_\ell$ replaced by $\eps/4$) of \Alg{tester-levin} $16/\eps$ times. By Markov's inequality and the fact that $\EX_{\bT}[\eps_{f_{\bT}}] \geq \eps/2$, with high probability at least one of the calls to \Step{domain_reduce_discrete} or \Step{domain_reduce} will yield a reduced hypergrid $\bT$ satisfying $\eps_{f_{\bT}} \geq \eps/4$. \Step{line_tester} or \Step{small_grid_tester} will then reject the restriction $f_{\bT}$, and thus reject $f$, with high probability. This leads to an $\eps^{-3}$ dependence on $\eps$, as opposed to the $\eps^{-2}$ achieved by \Alg{tester-levin}.

\end{remark}



In \Step{domain_reduce_discrete} or \Step{domain_reduce} of \Alg{tester-levin} we sample a sub-hypergrid $\bT = \prod_{i=1}^d T_i$, where each $T_i$ is of size $\ceil{(\eps^{-1}d)^8}$. By \Thm{bcs20-discrete} or \Thm{bcs20}, $\EX_{\bT}[\eps_{f_{\bT}}] \geq \eps/2$. Now, refer to \Step{set_eps_ell} of \Alg{tester-levin}. We prove \Clm{good-ell} below, which asserts that there exists $\ell^\ast \in [L+1]$ such that $\Pr_{\bT}\left[\eps_{f_{\bT}} \geq \eps_{\ell^\ast}\right] \geq \frac{2^{\ell^{\ast}}\eps}{8 (\ell^{\ast})^2} \geq 4/Q_{\ell^{\ast}}$. Thus when $\ell$ is set to $\ell^{\ast}$ in \Alg{tester-levin} \emph{at least one} of the $Q_{\ell^{\ast}}$ iterations of \Step{domain_reduce_discrete} or \Step{domain_reduce} returns $\bT$ satisfying $\eps_{f_{\bT}} \geq \eps_{\ell^{\ast}}$ with probability $\geq 1 - (1 - 4/Q_{\ell^{\ast}})^{Q_{\ell^{\ast}}} \geq 1 - (1/e)^4 \geq 15/16$. Thus, \Alg{tester-levin} rejects in either \Step{line_tester} or \Step{small_grid_tester} with probability $> \frac{15}{16} \cdot \frac{2}{3} = 5/8$. On the other hand, if $f$ is monotone, then $f_{\bT}$ is always monotone and so \Alg{tester-levin} accepts with probability $1$. (Since the tester has one-sided error, we can boost the rejection probability in the former case to at least $2/3$ by simply repeating the tester twice and rejecting if either iteration rejects. The rejection probability becomes at least $1-(3/8)^2 > 2/3$.)

We now analyze the query complexity. First, suppose $\eps < d^{-1/2}$ and recall that the query complexity of \Step{line_tester} is $O(\frac{d}{\eps_{\ell}} \log \frac{d}{\eps_{\ell}} )$. Thus, the query complexity of \Alg{tester-levin} in this case is at most
\[
\sum_{\ell = 1}^{L+1} Q_{\ell} \cdot O\left(\frac{d}{\eps_{\ell}} \log \frac{d}{\eps_{\ell}} \right) \leq \sum_{\ell = 1}^{L+1} \frac{\ell^2}{2^\ell \eps} \cdot O(2^{\ell} d \cdot \log (2^{\ell} d)) \leq L^3 \cdot O\left(\frac{d}{\eps} \log \frac{d}{\eps}\right)  \leq O\left(\frac{d}{\eps} \log^4 \frac{d}{\eps}\right)
\]
where in the second to last step we used $2^{\ell} \leq 2^{L+1} = O(1/\eps)$ inside the logarithm, and in the last step we simply used $L = O(\log 1/\eps)$. Now, suppose that $d^{-2} < \eps < d^{-1/2}$. Then, using the lower bound on $\eps$, the log-term simplifies to $O(\log^4 d)$. Using the upper bound on $\eps$ yields $d/\eps < \sqrt{d}/\eps^2$. Thus, the query complexity in this case is at most $O(\eps^{-2} \sqrt{d} \log^4 d)$, satisfying the desired bound. On the other hand, if $\eps \leq d^{-2}$, then we have $d \leq \eps^{-1/2}$ and so $d/\eps \leq \eps^{-3/2}$. Moreover, the log-term simplifies to $O(\log^{4} 1/\eps)$ and so the query complexity is bounded by $O(\eps^{-3/2} \log^4 1/\eps) = O(\eps^{-2})$, again satisfying the desired bound. Here we used $\eps^{-1/2} \geq \log^4 1/\eps$ since $\eps^{-1/2} \geq d$, and we assume that $d$ is a sufficiently large constant.

Now, suppose $\eps \geq d^{-1/2}$ and let $q(\eps,n,d)$ denote the query complexity of \pathtester~ with parameters $\eps, n$, and $d$. In particular, 
\[
q(\eps,\ceil{(\eps^{-1}d)^8},d) \leq \eps^{-2} \cdot d^{1/2 + O((\log \log d)^{-1})}
\] 
and so the query complexity of \Alg{tester-levin} in this case is 
\begin{align}
	\sum_{\ell=1}^{L+1} Q_\ell \cdot  q(\eps_{\ell},\ceil{(\eps^{-1}d)^8},d) &= \sum_{\ell=1}^{L+1} \left\lceil\frac{32\ell^2}{2^{\ell}\eps}\right\rceil \cdot 2^{2\ell} \cdot d^{1/2 + O((\log \log d)^{-1})} \nonumber \\
	&\leq \eps^{-1} \cdot d^{1/2 + O((\log \log d)^{-1})} \sum_{\ell=1}^{L+1} \ell^2 \cdot 2^{\ell} \nonumber \\
	&\leq \eps^{-1} \cdot d^{1/2 + O((\log \log d)^{-1})} O(L^3 \cdot 2^L) \leq \eps^{-2} \cdot d^{1/2 + O((\log \log d)^{-1})} \label{eq:query-com-L}
\end{align}


\noindent where in the last step we used the fact that $L = \Theta(\log (1/\eps))$ and $\eps \geq d^{-1/2}$. In particular, these facts imply 
\[
L^3 = O(\log^3 (1/\eps)) = d^{O(\frac{\log \log (1/\eps)}{\log d})} = d^{O(\frac{\log \log d}{\log d})} = d^{o(1/\log\log d)}
\]
and so this factor of $L^3$ is absorbed by the $d^{O((\log \log d)^{-1})}$ term in \Eqn{query-com-L}. \qed



\begin{claim} \label{clm:good-ell} If $\EX_{\bT}[\eps_{f_{\bT}}] \geq \eps / 2$, then there exists $\ell^{\ast} \in [L+1]$ such that $\Pr\left[\eps_{f_{\bT}} \geq 2^{-\ell^{\ast}}\right] \geq \frac{2^{\ell^{\ast}}\eps}{8 (\ell^{\ast})^2}$. \end{claim}

\begin{proof} We have $\int_{0}^1 \Pr\left[\eps_{f_{\bT}} \geq t\right] dt = \EX[\eps_{f_{\bT}}] \geq \eps/2$ and so $\int_{\eps/4}^1 \Pr\left[\eps_{f_{\bT}} \geq t\right] dt \geq \eps/4$. Thus,
\begin{align} \label{eq:levins}
	\frac{\eps}{4} &\leq \int_{\eps/4}^1 \Pr\left[\eps_{f_{\bT}} \geq t\right] dt \leq \sum_{\ell=0}^L \int_{1/2^{\ell+1}}^{1/2^{\ell}} \Pr\left[\eps_{f_{\bT}} \geq t \right] dt \nonumber \\
	&\leq \sum_{\ell=0}^L \frac{1}{2^{\ell+1}} \Pr\left[\eps_{f_{\bT}} \geq 1/2^{\ell+1}\right]
    = \sum_{\ell=1}^{L+1} \frac{1}{2^{\ell}} \Pr\left[\eps_{f_{\bT}} \geq 1/2^{\ell}\right] \text{.}
\end{align}

\noindent For the sake of contradiction, assume $\Pr\left[\eps_{f_{\bT}} \geq 1/2^{\ell}\right] < \frac{2^{\ell}\eps}{8\ell^2}$ for all $\ell \in [L+1]$. Using \Eqn{levins}, we have

\begin{align*}
	\eps \leq 4\sum_{\ell=1}^{L+1} \frac{1}{2^{\ell}} \Pr\left[\eps_{f_{\bT}} \geq 1/2^{\ell}\right]
      < \frac{\eps}{2}\sum_{\ell=1}^{L+1} \frac{1}{\ell^2} < \frac{\eps}{2} \cdot \frac{\pi^2}{6} < \varepsilon \text{.}
\end{align*}

\noindent This is a contradiction. \end{proof}

\section{Acknowledgments}
The authors thank the anonymous reviewers whose detailed comments have greatly improved the presentation of the paper. 

\bibliographystyle{alpha}
\bibliography{../../Bib/monotonicity}

\appendix

\section{Deferred Proofs}

\subsection{Equivalence of the Walk Distributions: Proof of \Fact{dists}} \label{sec:dists}

\begin{proof} 
	Fix a pair $(u,v)$ in $[n]^d$ where $u \preceq v$. We will show that the probability of sampling this pair from each distribution is the same. Let $S = \{i \in [d] \colon v_i > u_i\}$. Note that $u_j = v_j$ for all $j \neq S$. The probability of sampling the pair $(u,v)$ from the distribution described in item (1) of \Fact{dists} is computed as follows.
	
	\begin{align} \label{eq:P}
		\Pr_{\bx \in_R [n]^d \text{, } \by \sim \cU_{\tau}(\bx)} [(\bx, \by) = (u,v)] &&~=~& \frac{1}{n^d} \sum_{R\supseteq S~:~|R|=\tau} {d \choose \tau}^{-1} \prod_{i\in S} \Pr[c_i = v_i ~|~ \bx = u] \prod_{i\in R\setminus S} \Pr[c_i \leq u_i ~|~ \bx = u] \text{.}
	\end{align}
	
	Recall the distribution of $q_i, I_i, c_i$ from \Def{walkdist}. Consider $i \in S$ and let $d_i := \min(v_i-u_i,n-(v_i-u_i))$. Note that conditioned on $q_i$, the total number of intervals $I_i \ni u_i$ is $2^{q_i}$ and the number of such intervals that contain $v_i$ is $\max(0,2^{q_i} - d_i)$. Thus, we have
	
	\begin{align} \label{eq:i-in-S}
		i \in S ~\Longrightarrow~ \Pr[c_i = v_i ~|~ \bx = u] &= \EX_{q_i} \Big[ \Pr_{I_i}[v_i \in I_i] \Pr_{c_i \in I_i}[c_i = v_i ~|~ v_i \in I_i] \Big] \nonumber \\ 
		&= \frac{1}{\log n} \sum_{q \colon 2^{q_i} \geq d_i} \frac{2^{q_i} - d_i}{2^{q_i}} \cdot \frac{1}{2^{q_i}-1} = \frac{1}{2} \cdot \EX_{q_i}\left[\frac{\max(0,2^{q_i}-d_i)}{{2^{q_i} \choose 2}}\right] \text{.}
	\end{align}
	
	For an interval $I_i \ni u_i$, let $I_{i,u_i}$ denote the prefix of $I_i$ preceding (not including) $u_i$. Note that conditioned on an interval $I_i \ni u_i$, the probability of choosing $c_i \leq u_i$ is $|I_{i,u_i}|/(2^{q_i}-1)$. Thus, we have
	
	\begin{align} \label{eq:i-in-RminusS}
		i \in R \setminus S ~\Longrightarrow~ \Pr[c_i \leq u_i ~|~ \bx = u] = \EX_{q_i}\left[\frac{1}{2^{q_i}-1} \cdot \EX_{I_i \ni u_i}[|I_{i,u_i}|]\right]
	\end{align}

	We now compute the probability of sampling $(u,v)$ from the distribution described in item (2) of \Fact{dists}. Recall the distribution of $q_i,I_i,a_i,b_i$ from \Def{hypercube-dist}. For $i \in [d]$, let $\cE_i$ be the event that $a_i = u_i$ or $b_i = u_i$. Note that
	\[
	\Pr[\cE_i] = \EX_{q_i}\left[\Pr_{I_i}[I_i \ni u_i]\Pr_{a_i<b_i \in I_i}[u_i \in \{a_i,b_i\} ~|~ u_i \in I_i]\right] = \EX_{q_i} \left[\frac{2^{q_i}}{n} \cdot \frac{2}{2^{q_i}}\right] = \frac{2}{n}
	\]
	Let $\cE_{u}$ denote the event that $\bx = u$. We have
	\begin{align} \label{eq:E_u}
		\Pr[\cE_{u}] = \prod_{i=1}^d \Pr[\cE_i]\cdot \frac{1}{2^d} = \left(\frac{2}{n}\right)^d \frac{1}{2^d} = \frac{1}{n^d} \text{.}
	\end{align}
	Let $\cE_v$ denote the event that $\by = v$. We have
	\begin{align} \label{eq:E_v|E_u}
		\Pr\left[\cE_{v} ~|~ \cE_{u}\right] = \sum_{R\supseteq S~:~|R|=\tau} {d \choose \tau}^{-1} \prod_{i\in S} \Pr[a_i = u_i ~\textrm{and}~b_i = v_i~|~\cE_u] \cdot \prod_{i\in R\setminus S} \Pr[b_i = u_i~|~\cE_u]
	\end{align}
	Fix an $i\in S$ and recall $d_i := \min(v_i-u_i,n-(v_i-u_i))$. We have
	\[
	\Pr[a_i = u_i ~\textrm{and}~b_i = v_i~|~\cE_u] = \Pr[a_i = u_i ~\textrm{and}~b_i = v_i~|~\cE_i] = \frac{\Pr[a_i = u_i ~\textrm{and}~b_i = v_i]}{\Pr[\cE_i]}
	\]
	where the numerator is
	\[
	\Pr[a_i = u_i ~\textrm{and}~b_i = v_i] = \EX_{q_i}\left[\Pr_{I_i}\big[I_i \supseteq [u_i,v_i]\big] \cdot {2^{q_i} \choose 2}^{-1}\right] = \EX_{q_i}\left[\frac{\max(0,2^{q_i}-d_i)}{n \cdot {2^{q_i} \choose 2}}\right]
	\]
	and so
	\begin{align} \label{eq:2-i-in-S}
		i \in S ~\Longrightarrow~ \Pr[a_i = u_i ~\textrm{and}~b_i = v_i~|~\cE_u] = \frac{1}{2} \cdot \EX_{q_i}\left[\frac{\max(0,2^{q_i}-d_i)}{{2^{q_i} \choose 2}}\right]
	\end{align}
	which is equal to the probability computed in \Eqn{i-in-S}.
	
	Now fix an $i\in R\setminus S$. Recall the definition of $I_{i,u_i}$. We have 
	\[
	\Pr[b_i = u_i~|~\cE_{u}] = \Pr[b_i = u_i~|~\cE_i] = \frac{\Pr[b_i = u_i]}{\Pr[\cE_i]}
	\]
	where
	\begin{align*}
		\Pr[b_i = u_i] = \EX_{q_i} \EX_{I_i} \left[\mathbf{1}(u_i \in I_i) \frac{|I_{u_i}|}{{2^{q_i} \choose 2}}\right] = \EX_{q_i} \left[ \frac{1}{n} \sum_{I_i \ni u_i} |I_{i,u_i}| {2^{q_i} \choose 2}^{-1} \right] = \frac{2}{n} \EX_{q_i} \left[\frac{1}{2^{{q_i}-1}} \cdot \EX_{I_i \ni u_i} [|I_{i,u_i}|] \right]
	\end{align*}
	and so recalling that $\Pr[\cE_i] = 2/n$ we have
	\begin{align} \label{eq:2-i-in-RminusS}
		i \notin R \setminus S ~\Longrightarrow~ \Pr[b_i = u_i~|~\cE_{u}] = \EX_{q_i} \left[\frac{1}{2^{{q_i}-1}} \cdot \EX_{I_i \ni u_i} [|I_{i,u_i}|] \right] 
	\end{align}
	which is equal to the probability computed in \Eqn{i-in-RminusS}. Combining \Eqn{P}, \Eqn{i-in-S}, \Eqn{i-in-RminusS}, \Eqn{E_u}, \Eqn{E_v|E_u}, \Eqn{2-i-in-S}, \Eqn{2-i-in-RminusS}, we have
	\[
	\Pr_{\bH \sim \HH}\Pr_{\bx \in_R \bH \text{, } \by \sim \cU_{\bH,\tau}(\bx)}[(\bx,\by) = (u,v)] = \Pr[\cE_u] \cdot \Pr[\cE_v ~|~ \cE_u] = \Pr_{\bx \in_R [n]^d \text{, } \by \sim \cU_{\tau}(\bx)}[(\bx,\by) = (u,v)]
	\]
	and this proves that (1) and (2) of \Fact{dists} are equivalent.
	
	To show equivalence of (1) and (3), note that we only need to show that
	\begin{align} \label{eq:1-3-equiv}
		\Pr_{\bH \sim \HH(u) \text{, } \by \sim \cU_{\bH,\tau}(u)}[\by = v] = \Pr_{\by \sim \cU_{\tau}(u)}[\by = v]
	\end{align}
	This is proven by an analogous calculation. The expression for $\Pr_{\by \sim \cU_{\tau}(u)}[\by = v]$ is given by dropping the $\frac{1}{n^d}$ factor from \Eqn{P} and then plugging in the expressions obtained in \Eqn{i-in-S} and \Eqn{i-in-RminusS}. The quantity $\Pr_{\bH \sim \HH(u) \text{, } \by \sim \cU_{\bH,\tau}(u)}[\by = v]$ is precisely $\Pr[\cE_v ~|~ \cE_u]$, and an expression for this is obtained by \Eqn{E_v|E_u} and plugging in the expressions obtained in \Eqn{2-i-in-S} and \Eqn{2-i-in-RminusS}. Thus, (1) and (3) are equivalent and this completes the proof. \end{proof}

\subsection{Influence and Persistence Proofs}\label{sec:app:inf-per}

\totaltonegativeinf*


\begin{proof} Theorem 9.1 of \cite{KMS15} asserts that for any $\bH$, if $I_{\bH} > 6\sqrt{d}$, then $I^-_{\bH} > I_{\bH}/3$. (This holds for any Boolean hypercube function.) If $\widetilde{I}_f > 9 \sqrt{d}$, then
	by \Clm{inf-hypercube}, $\EX_{\bH}[I_{\bH}] > 9\sqrt{d}$. Hence,
	\begin{eqnarray*}
		9\sqrt{d} < \EX_{\bH}[I_{\bH}] & = & \Pr[I_{\bH} \leq 6\sqrt{d}] \ \EX_{\bH}[I_{\bH} | I_{\bH} \leq 6\sqrt{d}] \ + \ \Pr[I_{\bH} > 6\sqrt{d}] \ \EX_{\bH}[I_{\bH} | I_{\bH} > 6\sqrt{d}] \nonumber \\
		& < & 6\sqrt{d} + \Pr[I_{\bH} > 6\sqrt{d}] \EX_H[3 I^-_{\bH} | I_{\bH} > 6\sqrt{d}] \leq 6\sqrt{d} + 3\EX_{\bH}[I^-_{\bH}]
	\end{eqnarray*}
	Hence, $\EX_{\bH}[I^-_{\bH}] > \sqrt{d}$. By \Clm{inf-hypercube}, $\widetilde{I}^-_f > \sqrt{d}$. \end{proof}


\persistence*

\begin{proof} We will analyze the random walk using the distributions described in the first and second bullet point of \Fact{dists} and leverage the analysis that \cite{KMS15} use to prove their Lemma 9.3. Let $\alpha_{up}$ denote the fraction of vertices in the fully augmented hypergrid that are not $(\tau,\beta)$-up-persistent. 
	Using the definition of persistence and \Fact{dists}, we have
	\begin{align} \label{eq:persistence1}
		\alpha_{up} \cdot \beta < \Pr_{\bx \in_R [n]^d\text{, } \by \sim \cU_{\tau}(\bx)}\left[f(\bx) \neq f(\bz)\right] = \EX_{\bH \sim \HH}\left[\Pr_{\bx \in_R \bH \text{, } \by \sim \cU_{\bH,\tau}(\bx)}\left[f(\bx) \neq f(\bz)\right]\right] \text{.}
	\end{align}
	Let $\widehat{\cU}_{\bH,\tau}(\bx)$ denote the same distribution as $\cU_{\bH,\tau}(\bx)$ except with the set $R$ being a uar subset of the $0$-coordinates of $\bx$. I.e. $\widehat{\cU}_{\bH,\tau}(\bx)$ is the \emph{non-lazy} walk distribution on $\bH$. Let $\bx = \bx^0, \bx^1, \ldots, \bx^{\tau} = \bz$ be the $\tau$ steps taken on the walk sampled by $\cU_{\bH,\tau}(\bx)$ and let $\bx = \widehat{\bx}^0, \widehat{\bx}^1, \ldots, \widehat{\bx}^{\tau} = \bz$ be the $\tau$ steps taken on the walk sampled by $\widehat{\cU}_{\bH,\tau}(\bx)$. For a fixed $\bH$ we have
	\begin{align} \label{eq:persistence2}
		\Pr_{\bx \in_R \bH \text{, } \by \sim \cU_{\bH,\tau}(\bx)}\left[f(\bx) \neq f(\bz)\right] \leq \sum_{\ell=0}^{\tau - 1} \Pr\left[f(\bx^{\ell}) \neq f(\bx^{\ell+1})\right] \leq \sum_{\ell=0}^{\tau - 1} \Pr\left[f(\widehat{\bx}^{\ell}) \neq f(\widehat{\bx}^{\ell+1})\right] \text{.}
	\end{align}
	The first inequality is by a union bound and the second inequality holds because the first walk is lazy and the second is not. More precisely, we can couple the $\tau' \leq \tau$ steps 
	of the lazy-random walk where the point actually moves to the first $\tau'$ steps of the second non-lazy walk, and the remaining $\tau-\tau'$ terms of the non-lazy walk can only increase the RHS.
	
	By Lemma 9.4 of \cite{KMS15}, the edge $(\widehat{\bx}^{\ell},\widehat{\bx}^{\ell+1})$ is distributed approximately as a uniform random edge in $\bH$. In particular, this implies $\Pr\left[f(\widehat{\bx}^{\ell}) \neq f(\widehat{\bx}^{\ell+1})\right] \leq C \cdot 2I_{\bH}/d$ for an absolute constant $C$. (Note $2I_{\bH}/d$ is the probability of a uniform random edge in $\bH$ being influential.) Putting \Eqn{persistence1} and \Eqn{persistence2} together yields $\alpha_{up} \leq \frac{4 C \tau }{\beta d}\EX_{\bH}[I_{\bH}]$ and an analogous argument gives the same bound for $\alpha_{down}$. Thus, by \Clm{inf-hypercube} we have $\EX_{\bH}[I_{\bH}] \leq 9\sqrt{d}$ and the fraction of $(\tau,\beta)$-non-persistent vertices is at most $\frac{72C\tau}{\beta\sqrt{d}}$. Therefore, setting $\cper := 72C$ completes the proof. \end{proof}


\subsection{Typical Points and Reversibility Proofs} \label{sec:app:reversible}


\hypmiddle*

\begin{proof} Consider a uniform random point $\bx$ in the hypercube. The Hamming weight $\|\bx\|_1$ is $\sum_{i=1}^d \bx_i$, where each $\bx_i$ is an iid unbiased Bernoulli. By Hoeffding's theorem, 
$\Pr[\Big|\|\bx\|_1 - d/2\Big| \geq t] \leq 2\exp(-2t^2/d)$. We set
$t = \sqrt{4cd\log(d/\eps)}$. The probability of not being 
in  the $c$-middle layers 
is at most
$$2\exp(-2t^2/d) = 2\exp(-8c\log(d/\eps)) = 2(\eps/d)^{8c} \leq (\eps/d)^c \text{.}$$
Hence, the probability of being in the $c$-middle layers is at least $(1-(\eps/d)^c)$.
\end{proof}


\typical*

\begin{proof} Given $\bx \in [n]^d$ and a hypercube $\bH \ni \bx$, let $\chi(\bx,\bH) = \mathbf{1}(\bx \in \bH \setminus \bH_c)$. By \Fact{dists} and \Clm{hyp-middle}, we have 
	\[
	\EX_{\bx \in_R [n]^d}\EX_{\bH \sim \HH(\bx)} \left[\chi(\bx,\bH)\right] = \EX_{\bH \sim \HH} \EX_{\bx \in_R \bH} \left[\chi(\bx,\bH)\right] \leq (\eps/d)^c
	\]
	Let us set $q_\bx \eqdef \EX_{\bH \sim \HH(\bx)} [\chi(\bx,\bH)]$, so $\EX_{\bx} [q_\bx] \leq (\eps/d)^{c}$. By Markov's inequality, $\Pr_\bx[q_\bx \geq (\eps/d)^{5}] \leq (\eps/d)^{c-5}$. Note that when $q_{\bx} < (\eps/d)^5$, $\bx$ is $c$-typical. Hence, at least a $(1-(\eps/d)^{c-5})$-fraction of points are $c$-typical. \end{proof}



\translatetypical*

\begin{proof} We prove the claim for $\bx' \in \text{supp}(\cU_{\tau}(\bx))$. The argument for points in $\text{supp}(\cD_{\tau}(\bx))$ is analogous. Let $\bH$ be any hypercube containing $\bx$ and $\bx'$ and let $\norm{\bx}_{\bH}$, $\norm{\bx'}_{\bH}$ denote the Hamming weight of these points in $\bH$. Observe that $\norm{\bx'}_{\bH} \leq \norm{\bx}_{\bH} + \tau$ and so if $\bx \in \bH_c$, then $\norm{\bx'}_{\bH} \leq d/2 + \sqrt{4cd \log (d/\eps)} + \tau$ and since $\tau \leq \sqrt{d}$, we have
\[
\sqrt{4cd \log (d/\eps)} + \tau \leq  \sqrt{4cd\log (d/\eps) + \tau\sqrt{d}\log d} = \sqrt{\left(c+\frac{\tau}{\sqrt{d}}\right)d\log d}\text{.}
\]
To see that the first inequality, observe that
\[
\tau^2 + 2\tau\sqrt{4cd\log (d/\eps)} \leq \tau\sqrt{d}\log d ~\Longleftrightarrow~ \tau \leq \sqrt{d}(\log d - 4\sqrt{c \log (d/\eps)})
\]
which clearly holds by our upper bound on $\tau$. Thus, if $\bx \in \bH_c$, then $\bx' \in \bH_{c+\frac{\tau}{\sqrt{d}}}$. Therefore, the number of hypercubes $\bH$ for which $\bx' \in \bH_{c+\frac{\tau}{\sqrt{d}}}$ is at least the number of hypercubes $\bH$ for which $\bx \in \bH_c$. 
Therefore $\bx'$ is $(c+\frac{\tau}{\sqrt{d}})$-typical. \end{proof}


\probrev*

\begin{proof} If $t \eqdef \|\bx - \bx'\|_0 > \ell$, then $p_{\bx,\ell}(\bx') = p_{\bx',\ell}(\bx) = 0$. So assume $t \leq \ell$. Fix any $\bH$ containing $\bx$ and $\bx'$ such that $\bx,\bx' \in \bH_{100}$ and let $x$ and $x'$ denote the corresponding hypercube (bit) representations of $\bx,\bx'$ in $\bH$. Let $p_{x,\ell}(x') = \Pr_{z \sim \cU_{\bH,\ell}(x)}[z=x']$ and $p_{x',\ell}(x) = \Pr_{z \sim \cD_{\bH,\ell}(x')}[z=x]$. By definition of $p_{\bx,\ell}(\bx')$ (recall \Cref{def:walkpdf}) it suffices to show that $p_{x,\ell}(x') = (1\pm \log^{-3}d)p_{x',\ell}(x)$.
	
	Let $S$ be the set of $t$ coordinates
	where $x$ and $x'$ differ. Let $Z(x)$ be the set of zero coordinates of the point $x$;
	analogously, define $Z(x')$.
	Recall that the directed upward walk making $\ell$ steps might not flip $\ell$ coordinates.
	The process (recall \Def{hypercube-walk}) picks a uar set $R$ of $\ell$ coordinates, and only flips the zero bits in $x$ among $R$.
	Hence, an $\ell$-length walk leads from $x$ to $x'$ iff $R \cap Z(x) = S$.
	
	Let the Hamming weight of $x$ be represented as $d/2 + e_x$, where $e_x$ denotes the ``excess".
	Since $x$ is in the $100$-middle layers, $|e_x| \leq \sqrt{400d\log(d/\eps)}$.
	The sets $R$ that lead from $x$ to $x'$ can be constructed by picking any $\ell-t$ coordinates
	in $\overline{Z(x)}$ and choosing all remaining coordinates to be $S$. Hence,
	$$ p_{x,\ell}(x') = \frac{{d/2 + e_x \choose \ell-t}}{{d \choose \ell}}$$
	
	\noindent
	Analogously, consider the downward $\ell$ step walks from $x'$.
	This walk leads to $x$ iff $R \cap \overline{Z(x')} = S$. The sets $R$ that lead from $x'$ to $x$
	can be constructed by picking any $\ell-t$ coordinates in $Z(x')$ and choosing
	all remaining coordinates to be $S$. The size of $Z(x')$ is precisely $|Z(x)| - t = d/2 - e_x - t$.
	Hence,
	$$ p_{x',\ell}(x) = \frac{{d/2 - e_x - t \choose \ell - t}}{{d \choose \ell}} $$
	
	Taking the ratio,
	\begin{eqnarray*}
		\frac{p_{x,\ell}(x')}{p_{x',\ell}(x)} & = & \frac{{d/2 + e_x \choose \ell-t}}{{d/2 - e_x - t \choose \ell - t}}
		= \frac{\prod_{i=0}^{\ell-t-1} (d/2 + e_x - i)}{\prod_{i=0}^{\ell-t-1}(d/2 - e_x - t-i)}
		= \prod_{i=0}^{\ell-t-1} \frac{d/2 + e_x - i}{d/2-e_x-t-i}  \\
		& = & \prod_{i=0}^{\ell-t-1} \Big( 1 + \frac{2e_x + t}{d/2 - e_x-t-i}\Big) 
	\end{eqnarray*}
	Recall that $|e_x| \leq \sqrt{400d\log(d/\eps)}$, $t \leq \ell < \sqrt{d}/\log^5(d/\eps)$.
	For convenience, let $b \eqdef \sqrt{400d\log(d/\eps)}$. So $2e_x + t \leq 3b$.
	Also, $d/2 - e_x -t-i \geq d/3$ for all $i < \ell$. Applying these bounds,
	\begin{eqnarray*}
		\frac{p_{x,\ell}(x')}{p_{x',\ell}(x)}  
		& \leq & \prod_{i=0}^{\ell-1} \Big( 1 + \frac{3b}{d/3}\Big) \leq \exp\Big(\frac{9\ell b}{d}\Big)
		= \exp\Big(\frac{\sqrt{d} \cdot \sqrt{400d\log(d/\eps)}}{d\log^5(d/\eps)}\Big) \leq 1 + \log^{-3}d
	\end{eqnarray*}
	An analogous calculation proves that $ \frac{p_{x,\ell}(x')}{p_{x',\ell}(x)} \geq 1 - \log^{-3} d$. \end{proof}

\end{document}